\documentclass[a4paper, 11pt]{amsart}
\usepackage{amsmath, amssymb, amsthm}	%ams environment
\usepackage{braket, stmaryrd, MnSymbol,ascmac,mathrsfs}
\usepackage{cite}

\setlength{\textwidth}{\paperwidth}
\addtolength{\textwidth}{-2.5in}
\calclayout

\newcommand{\pr}[1]{#1^{\prime}}
\newcommand{\ppr}[1]{#1^{\prime\prime}}
\newcommand{\pppr}[1]{#1^{\prime\prime\prime}}
\newcommand{\del}{\partial}
\newcommand{\vc}[1]{\boldsymbol #1}

\newcommand{\mfrak}[1]{\mathfrak{#1}}
\newcommand{\mcal}[1]{\mathcal{#1}}
\newcommand{\mbb}[1]{\mathbb{#1}}
\newcommand{\mrm}[1]{\mathrm{#1}}

\newcommand{\scr}[1]{\mathscr{#1}}
\newcommand{\what}[1]{\widehat{#1}}
\newcommand{\no}[1]{:\hspace{-3pt} #1\hspace{-3pt}:\hspace{3pt}}

% Definition of theorem environments
\theoremstyle{plain}
\newtheorem{thm}{Theorem}[section]
\newtheorem{lem}[thm]{Lemma}
\newtheorem{prop}[thm]{Proposition}
\newtheorem{cor}[thm]{Corollary}
\theoremstyle{definition}
\newtheorem{defn}[thm]{Definition}
\theoremstyle{remark}

\makeatletter
    
    \@addtoreset{equation}{section}
\makeatother

\title[Local martingales associated with SLE with internal symmetry]{Local martingales associated with Schramm-Loewner evolutions  with internal symmetry}
\author{Shinji Koshida}
\address{Department of Basic Science, The University of Tokyo}
\email{koshida@vortex.c.u-tokyo.ac.jp}

\begin{document}

\begin{abstract}
We consider Schramm-Loewner evolutions (SLEs) with internal degrees of freedom that are associated with representations of
affine Lie algebras, following group theoretical formulation of SLEs.
We reconstruct the SLEs considered by Bettelheim {\it et al.} [Phys. Rev. Lett. {\bf 95}, 251601 (2005)] and Alekseev {\it et al.} [Lett. Math. Phys. {\bf 97}, 243-261 (2011)]
in correlation function formulation.
We also explicitly formulate stochastic differential equations on internal degrees of freedom for Heisenberg algebras and the affine $\mfrak{sl}_{2}$.
Our formulation enables us to find  several local martingales associated with SLEs with internal degrees of freedom
from computation on a representation of an affine Lie algebra.
Indeed, we formulate local martingales associated with SLEs with internal degrees of freedom described by Heisenberg algebras and the affine $\mfrak{sl}_{2}$.
We also find an affine $\mfrak{sl}_{2}$ symmetry of a space of SLE local martingales for the affine $\mfrak{sl}_{2}$.\end{abstract}

\maketitle

%introduction
\section{Introduction}
Growth processes have been proven to give frameworks that describe various equilibrium and non-equilibrium phenomena exhibited in nature.
Examples of such growth processes we consider in this paper are variants of Schramm-Loewner evolutions (SLEs),
which were introduced by Schramm\cite{Schramm2000} as the subsequent scaling limit of loop erased random walks and uniform spanning trees.
Actually, Schramm defined two types of SLEs, chordal and radial, but in this paper we only treat chordal SLEs  and simply call them SLEs.
SLE is the solution of the following stochastic Loewner equation
\begin{equation}
	\frac{d}{dt}g_{t}(z)=\frac{2}{g_{t}(z)-\sqrt{\kappa}B_{t}}
\end{equation}
on a formal power series $g_{t}(z)\in z+\mbb{C}[[z^{-1}]]$, with the initial condition $g_{0}(z)=z$.
Here $B_{t}$ is the standard Brownian motion with values in $\mbb{R}$ starting from the origin and $\kappa$ is a positive number.
The SLE specified by this number $\kappa$ is denoted as SLE$(\kappa)$.
Though we have regarded $g_{t}(z)$ as just a formal power series, it becomes a uniformization map of a simply connected domain.
Namely, for each realization of $g_{t}(z)$, we can take a subset $K_{t}\subset\mbb{H}$ called a hull such that
$g_{t}$ becomes a biholomorphic map $g_{t}:\mbb{H}\backslash K_{t}\to\mbb{H}$.
Moreover, for each realization, the family $\{K_{t}\}_{t\ge 0}$ of hulls parametrized by time is increasing, {\it i.e.}, if $t<s$, $K_{t}\subset K_{s}$ holds.
When we investigate an evolution of hulls in more detail, we find that it is governed by an evolution of the tip $\gamma_{t}$ in the upper half plane,
which is captured in the following manner.
At the initial time $t=0$, the uniformization map $g_{0}$ is the identity, which means that the hull $K_{0}$ is empty.
At a small time $t=t_{1}$, the corresponding hull $K_{t_{1}}$ is a slit in the upper half plane, one of whose end points is on the origin.
Then we name the other end point $\gamma_{t_{1}}$ and call it the tip at $t=t_{1}$.
For small time, the hull is simply  the trace of the tip,
but when time evolves further, the trace may touch itself or the real axis.
If such an event occurs, the area enclosed by the trace and the real axis is absorbed in the hull.
This is a  way of identifying the evolution of hulls with the evolution of the tip.
In this manner, SLEs give a probability measure on the space of curves in the upper half plane, which is called the SLE$(\kappa)$-measure.
The SLE$(\kappa)$-measure has been shown to describe an interface of clusters in several critical systems in two dimensions 
including the critical percolation \cite{Smirnov2001} and the Ising model at criticality  \cite{ChelkakDuminil-CopinHonglerKemppainenSmirnov2014}.
After their introduction, many aspects of SLEs have been clarified.\cite{Lawler2004,RohdeSchramm2005,LawlerSchrammWerner2001a,LawlerSchrammWerner2001b,LawlerSchrammWerner2002a,LawlerSchrammWerner2002b,Werner2003}.

There is another framework to investigate two-dimensional critical systems.
It is two-dimensional conformal field theory (CFT),\cite{BelavinPolyakovZamolodchikov1984}
which has been one of the most powerful tools in a wide variety of fields
ranging  from condensed matter physics to string theory, and in mathematics.
A milestone of CFT prediction on a critical system is Cardy's formula,\cite{Cardy1992}
which gives the crossing probability for the critical percolation in two dimensions from computation of a correlation function in CFT.
Cardy's formula was proven by Simirnov \cite{Smirnov2001} to be a theorem, while the derivation by Cardy has not been verified.

Since SLE and CFT are different frameworks that describe the same phenomena, they are expected to be connected to each other in some sense.
The connection between SLE and CFT has been studied under the name of SLE/CFT correspondence from various points of view.
Studies by Friedrich, Werner, Kalkkinen, and Kontsevich,\cite{FriedrichWerner2003, FriedrichKalkkinen2004,Friedrich2004,Kontsevich2003} 
proposed that the SLE$(\kappa)$-measure was constructed as a section of the determinant bundle over the moduli space of Riemann surfaces
based on observation of transformation of the correlation function of CFT under conditioning.
In  a more-recent approach by Dub\'{e}dat,\cite{Dubedat2015a,Dubedat2015b} the SLE$(\kappa)$-measure was constructed using a  localization technique,
and its partition function was identified with a highest weight vector of a representation of Virasoro algebra.
A significant development was  the {\it group theoretical} formulation of SLEs  by Bauer and Bernard,\cite{BauerBernard2002, BauerBernard2003a,BauerBernard2003b}
which proposed  an elegant way of constructing local martingales associated with SLEs (SLE local martingales for short)  from a representation of the Virasoro algebra.
We will review this formulation in Sect.\ref{sect:group_theoretical}.

The notion of SLE has been generalized to several directions along the SLE/CFT correspondence.
Examples include the notion of multiple SLEs\cite{BauerBernardKytola2005} and SLEs corresponding to logarithmic CFT,\cite{Rasmussen2004b,Moghimi-AraghiRajabpourRouhani2004} the $\mcal{N}=1$ superconformal algebra.\cite{Rasmussen2004a}

We note  that there are other directions of generalization of SLEs.
An example is the notion of SLE$(\kappa,\rho)$,\cite{LawlerSchrammWerner2003} which is obtained by replacing the Brownian motion in the stochastic Loewner equation
by a Bessel process.
CFT interpretation of SLE$(\kappa,\rho)$ was obtained by Cardy\cite{Cardy2006} and Kyt\"{o}l\"{a}.\cite{Kytola2006}
Several variants of SLEs associated with representation of the Virasoro algebra were unified by Kyt\"{o}l\"{a}.\cite{Kytola2007}

CFTs that are associated with representation theory of affine Lie algebras are known
as Wess-Zumino-Witten (WZW) theories.\cite{WessZumino1971, Witten1984, KnizhnikZamolodchikov1984}
SLEs corresponding to WZW theories have been considered
by Bettelheim {\it et al.}\cite{BettelheimGruzbergLudwigWiegmann2005} and Alekseev {\it et al.}\cite{AlekseevBytskoIzyurov2011} in correlation function formulation
and by Rasmussen\cite{Rasmussen2007} for the $\mfrak{sl}_{2}$ case and the present author\cite{SK2017} for simple Lie algebras in group theoretical formulation.
Note that the group theoretical formulation of SLEs corresponding to WZW theory first given by Rasmussen\cite{Rasmussen2007} did not contain the original SLE as a part,
and the  current author\cite{SK2017} presented an idea for improving it to recover the original SLE as the geometric part and the result given by correlation function formulation.
We will now review the approach in correlation function formulation\cite{BettelheimGruzbergLudwigWiegmann2005,AlekseevBytskoIzyurov2011}
of SLEs corresponding to WZW theory.
Let $\mfrak{g}$ be a finite-dimensional simple Lie algebra and $k\in\mbb{C}$ be a level.
They start from an object
\begin{equation}
	\label{eq:corr_func_martingale}
	\mcal{M}_{t}=\frac{\braket{\phi_{\Lambda}(z_{t})\phi_{\lambda_{1}}(z_{1})\cdots\phi_{\lambda_{N}}(z_{N})\phi_{\lambda_{1}^{\ast}}(\bar{z}_{1})\cdots\phi_{\lambda_{N}^{\ast}}(\bar{z}_{N})\phi_{\Lambda^{\ast}}(\infty)}^{\mfrak{g}}}{\braket{\phi_{\Lambda}(z_{t})\phi_{\Lambda^{\ast}}(\infty)}^{\mfrak{g}}}.
\end{equation}
Here $\phi_{\lambda}$ is the primary field corresponding to a weight $\lambda$,
with  the convention that $\lambda^{\ast}$ denotes the dual representation of $\lambda$.
The points $z_{1},\cdots, z_{N}$ are put on the upper half plane
and $z_{t}$ is the tip of the SLE slit defined by $z_{t}=\rho_{t}^{-1}(0)$, where $\rho_{t}(z)=g_{t}(z)+B_{t}$ satisfies $d\rho_{t}(z)=\frac{2dt}{\rho_{t}(z)}-dB_{t}$
with $B_{t}$ being the Brownian motion of variance $\kappa$.
The numerator of Eq.(\ref{eq:corr_func_martingale}) takes a value in the $\mfrak{g}$-invariant subspace of $L(\Lambda)\otimes L(\lambda_{1})\otimes\cdots\otimes L(\Lambda)^{\ast}$,
where $L(\lambda)$ is the irreducible representation of $\mfrak{g}$ of highest weight $\lambda$.
The denominator of Eq.(\ref{eq:corr_func_martingale}) takes  a value in the $\mfrak{g}$-invariant subspace of $L(\Lambda)\otimes L(\Lambda)^{\ast}$,
which is one dimensional due to Schur's Lemma.

Since a primary field of a WZW theory has internal degrees of freedom,
random evolution of a primary field involves ones along the internal degrees of freedom.
Studies by Ref. \cite{BettelheimGruzbergLudwigWiegmann2005} and Ref. \cite{AlekseevBytskoIzyurov2011}  proposed the following stochastic differential equation (SDE):
\begin{equation}
	\label{eq:sle_wzw_correlation_function}
	d\phi_{\lambda_{i}}(w_{i})=\mcal{G}_{i}\phi_{\lambda_{i}}(w_{i}),
\end{equation}
where $w_{i}=\rho_{t}(z_{i})$ and
\begin{equation}
	\mcal{G}_{i}=dt\left(\frac{2}{w_{i}}\del_{w_{i}}-\frac{\tau C_{i}}{2w_{i}^{2}}\right)-dB_{t}\del_{w_{i}}
	+\left(\frac{1}{w_{i}}\sum_{a}d\theta^{a}t_{i}^{a}+\frac{\tau}{2w_{i}^{2}}\sum_{a}t_{i}^{a}t_{i}^{a}dt\right).
\end{equation}
Here $\{t^{a}\}$ is a basis of $\mfrak{g}$ and $\{t^{a}_{i}\}$ are their representation matrices on $L(\lambda_{i})$.
Random processes $\theta^{a}$ are independent Brownian motions of variance $\tau$.
The number $C_{i}$ is the value of the Casimir on the representation $L(\lambda_{i})$.

The claim by Ref. \cite{BettelheimGruzbergLudwigWiegmann2005} and Ref. \cite{AlekseevBytskoIzyurov2011} is that the random process $\mcal{M}_{t}$
is a local martingale for a certain choice of $\kappa$ and $\tau$,
and Eq. (\ref{eq:sle_wzw_correlation_function}) is a generalization of the stochastic Loewner equation so as to correspond to a WZW theory.
Their formulation has been extended to multiple SLEs\cite{Sakai2013} and to coset WZW theories.\cite{Nazarov2012,Fukusumi2017}

The motivation in the present work is to better understand the previous studies\cite{BettelheimGruzbergLudwigWiegmann2005, AlekseevBytskoIzyurov2011}
on SLEs  corresponding to WZW theory.
In their {{formulations,}} the SDEs along internal degrees of freedom {{appear}} to be {\it ad hoc},
random processes along internal degrees of freedom are not constructed in a concrete way, and thus
local martingales that are associated with SLEs corresponding to WZW theory are hard to formulate.
These issues are addressed in this paper.
In particular, we will see that SDEs on internal degrees of freedom {{arise}} naturally in the group theoretical formulation.
We also construct {{a}} random process along internal degrees of freedom for Heisenberg algebras and the affine $\mfrak{sl}_{2}$,
and {{formulate}} several local martingales associated with them.

This paper is organized as follows.
In Sect. \ref{sect:group_theoretical}, we review the group theoretical formulation of {{SLEs originally proposed}} by Bauer and Bernard.\cite{BauerBernard2002, BauerBernard2003a,BauerBernard2003b}
In Sect. \ref{sect:rep_aff_alg}, we recall the notion of affine Lie algebras associated with {{finite-dimensional}} Lie algebras that are simple or commutative and their representation theory.
In Sect. \ref{sect:internal_symmetry}, we introduce an infinite-dimensional Lie group,
which becomes the target space of random processes generating SLEs corresponding to representations of affine Lie algebras.
In Sect. \ref{sect:random_process}, we construct a random process on the infinite-dimensional Lie group 
assuming existence of an annihilating operator of a highest weight vector.
We also {{formulate}} SDEs on internal degrees of freedom in the case when the underlying Lie algebra is commutative and $\mfrak{sl}_{2}$.
In Sect. \ref{sect:annihilator}, we discuss an annihilating operator of a highest weight vector, {{the existence of which}} is assumed in Sect. \ref{sect:random_process}.
In Sect. \ref{sect:martingale}, as an application of {{the}} construction of SDEs in Sect. \ref{sect:random_process},
we compute several local martingales associated with the solutions.
In Sect. \ref{sect:affine_symmetry}, we clarify {{the}} $\what{\mfrak{sl}}_{2}$-module structure on a space of SLE local martingales for $\what{\mfrak{sl}}_{2}$.
{{Then we present some conclusions.}}
In Appendix \ref{sect:app_voa}, we recall the notion of vertex operator {{algebra (VOA)}}, which is useful in this paper.
In Appendix \ref{sect:app_ito_lie_group}, we {{review}} an Ito process on a Lie group.
Appendix \ref{sect:app_SDE} contains computational details that are referred to in Sect.\ref{sect:random_process}.
In Appendix \ref{sect:app_operator_X}, we show {{a}} detailed derivation of operators that define {{the}} action of $\what{\mfrak{sl}}_{2}$ on a space of local martingales
referred to in Sect.\ref{sect:affine_symmetry}.

%group_theoretical
\section{Group theoretical formulation of SLEs}
\label{sect:group_theoretical}
In this section, we recall the group theoretical formulation of {{SLEs}} corresponding to the Virasoro algebra {{originally proposed}} by Bauer and Bernard.\cite{BauerBernard2002, BauerBernard2003a,BauerBernard2003b}
The main purpose of this section is to introduce the {{infinite-dimensional}} Lie group $\mrm{Aut}_{+}\mcal{O}$ and a random process on it.

\subsection{Virasoro algebra and its representations}
Virasoro algebra is an {{infinite-dimensional}} Lie algebra $\mrm{Vir}=\bigoplus_{n\in\mbb{Z}}\mbb{C}L_{n}\oplus\mbb{C}C$
with Lie brackets defined by
\begin{align}
	[L_{m},L_{n}]&=(m-n)L_{m+n}+\frac{m^{3}-m}{12}\delta_{m+n,0}C, \\
	[C,\mrm{Vir}]&=\{0\}.
\end{align}

We only consider highest weight representations of the Virasoro algebra that are constructed in the following manner.
Let us decompose the Virasoro algebra into subalgebras $\mrm{Vir}=\mrm{Vir}_{>0}\oplus\mrm{Vir}_{0}\oplus \mrm{Vir}_{<0}$,
where $\mrm{Vir}_{0}=\mbb{C}L_{0}\oplus\mbb{C}C$ and $\mrm{Vir}_{\gtrless 0}=\bigoplus_{\pm n>0}\mbb{C}L_{n}$.
We also set $\mrm{Vir}_{\ge 0}=\mrm{Vir}_{0}\oplus\mrm{Vir}_{>0}$.
For a pair $(c,h)\in\mbb{C}^{2}$, let $\mbb{C}_{(c,h)}=\mbb{C}{\bf 1}_{(c,h)}$ be a {{one-dimensional}} representation of $\mrm{Vir}_{\ge 0}$
on which $C$ and $L_{0}$ act as multiplication by $c$ and $h$, respectively.
The highest weight Verma module $M(c,h)$ of highest weight $(c,h)$ is defined by induction $M(c,h)=U(\mrm{Vir})\otimes_{U(\mrm{Vir}_{\ge 0})}\mbb{C}_{(c,h)}$,
which is isomorphic to $U(\mrm{Vir}_{<0})\otimes \mbb{C}_{(c,h)}$ as a vector space or a $\mrm{Vir}_{<0}$-module.
The numbers $c$ and $h$ in the highest weight are called the central charge and the conformal weight of the highest weight Verma module $M(c,h)$, respectively.
Since we will only treat highest weight representations, we call a highest weight Verma module simply a Verma module.
The highest weight vector $1\otimes {\bf 1}_{(c,h)}$ is denoted by $\ket{c,h}$.
It is clear by construction that a Verma module $M(c,h)$ decomposes into {{the}} direct sum of eigenspaces of $L_{0}$ so that $M(c,h)=\bigoplus_{n\in\mbb{Z}_{\ge 0}}M(c,h)_{h+n}$,
where we have defined $M(c,h)_{\lambda}=\{v\in M(c,h)|L_{0}v=\lambda v\}$ for $\lambda\in\mbb{C}$.

For a generic {{highest}} weight $(c,h)$, the corresponding Verma module is irreducible,
but for a specific highest weight, it is not.
Then we denote the irreducible quotient of the Verma module by $L(c,h)$,
and call an element in $J(c,h):=\ker (M(c,h)\twoheadrightarrow L(c,h))$ a null vector.

Among other irreducible modules, that of the highest weight $(c,0)$ denoted by $L(c,0)$ above has {{a special feature, which is}} that it carries {{the}} structure of a {{VOA}}.
We simply denote this VOA by $L_{c}$ and call it the Virasoro VOA of central charge $c$.
An exposition of {{VOA}} structure on $L_{c}$ is presented in Appendix \ref{sect:app_voa},
and we shall sketch the argument here.
The vacuum vector is the highest weight vector $\ket{0}=\ket{c,0}$,
and it is generated by a conformal vector $L_{-2}\ket{0}$ that is transferred to the Virasoro field $L(z)=\sum_{n\in\mbb{Z}}L_{n}z^{-n-2}$
under the state-field correspondence map.
Simple modules over the Virasoro VOA $L_{c}$ are realized as highest weight irreducible representations of the same central charge.
A {{nondegenerate}} bilinear form $\braket{\cdot|\cdot}$ on an $L_{c}$-module $M$ is invariant if it satisfies
\begin{equation}
	\label{eq:invariant_form}
	\braket{Y(a,z)u|v}=\braket{u|Y(e^{zL_{1}}(-z^{-2})^{L_{0}}a,z^{-1})v}
\end{equation}
for $a\in L_{c}$ and $u, v \in M$.
This condition is rephased as $\braket{L_{n}u|v}=\braket{u|L_{-n}v}$ and $\braket{Cu|v}=\braket{u|Cv}$ for $u, v\in M$,
which specify a bilinear form $\braket{\cdot|\cdot}$ on $M$.
It is {{well known}} that such a bilinear form uniquely exists under the normalization $\braket{c,h|c,h}=1$.

\subsection{Conformal transformation}
Here we review how to implement a conformal transformation as an operator on a VOA or its module following {{Frenkel and Ben-Zvi}}.\cite{FrenkelBen-Zvi2004}
Let $\mcal{O}=\mbb{C}[[w]]=\varprojlim \mbb{C}[w]/(w^{n})$ be a complete topological $\mbb{C}$-algebra and $D=\mrm{Spec}\mcal{O}$ be the formal {{disk}}.
A continuous automorphism $\rho$ of $\mcal{O}$ is identified with the image of the topological generator $w$ of $\mcal{O}$
by the same automorphism $\rho$.
Under this identification, the group $\mrm{Aut}\mcal{O}$ of continuous {{automorphisms}} of $\mcal{O}$ is realized as
\begin{equation}
	\label{eq:aut_at_infty}
	\mrm{Aut}\mcal{O}\simeq \{a_{1}w+a_{2}w^{2}+\cdots|a_{1}\in\mbb{C}^{\times},\ a_{i}\in\mbb{C},\ i\ge 2\}.
\end{equation}
Indeed, a nonzero constant term is prohibited to preserve the algebra $\mcal{O}$, and $a_{1}\neq 0$ is required for the existence of the inverse.
The group law is defined by $(\rho\ast\mu)(w)=\mu(\rho(w))$ for $\rho,\ \mu\in\mrm{Aut}\mcal{O}$.
The purpose of this subsection is to define a representation of this group on a {{VOA}} or its modules,
which is significant in application to the theory of {{SLEs}}.

It is shown that the Lie algebra of $\mrm{Aut}\mcal{O}$ is one of vector fields $\mrm{Der}_{0}\mcal{O}=w\mbb{C}[[w]]\del_{w}$.
The same Lie algebra is also constructed as a completion of a Lie subalgebra $\mrm{Vir}_{\ge 0}=\bigoplus_{n=0}^{\infty}\mbb{C}L_{n}$ of the Virasoro algebra.
Since a subalgebra $\mrm{Vir}_{\ge m}=\bigoplus_{n\ge m}\mbb{C}L_{n}$ in $\mrm{Vir}_{\ge 0}$ is an ideal,
the quotient $\mrm{Vir}_{\ge 0}/\mrm{Vir}_{\ge m}$ carries a Lie algebra structure{{;}}
moreover, we have a family of projections $\mrm{Vir}_{\ge 0}/\mrm{Vir}_{\ge m}\to \mrm{Vir}_{\ge 0}/\mrm{Vir}_{\ge n}$ for $m>n$.
The projective limit $\varprojlim \mrm{Vir}_{\ge 0}/\mrm{Vir}_{\ge m}$ of this projective system of Lie algebras 
is the desired Lie algebra $\mrm{Der}_{0}\mcal{O}$.
Since, for an arbitrary vector $v$ in a {{VOA}}, $V$ or its module $M$,
$L_{n}v=0$ for $n\gg0$, {{so}} we have a well-defined action of $\mrm{Der}_{0}\mcal{O}$ on $V$ and $M$.

There is a significant subgroup $\mrm{Aut}_{+}\mcal{O}$ of $\mrm{Aut}\mcal{O}$ that is described as
$\mrm{Aut}_{+}\mcal{O}\simeq \{w+a_{2}w^{2}+\cdots|a_{i}\in\mbb{C},\ i\ge 2\}$.
It is shown that the Lie algebra of this subgroup is $\mrm{Der}_{+}\mcal{O}=w^{2}\mbb{C}[[w]]\del_{w}${{, which}} is a Lie subalgebra of $\mrm{Der}_{0}\mcal{O}$.

We shall exponentiate the action of the Lie algebra $\mrm{Der}_{0}\mcal{O}$ to the action of the Lie group $\mrm{Aut}\mcal{O}$.
This is possible if {{the}} $L_{n}$ for $n>1$ act locally nilpotently and $L_{0}$ is diagonalizable with integer eigenvalues,
{{the}} former of which automatically holds for a highest weight representation,
and {{the}} latter of which is true if the conformal weight of the highest weight is an integer.
On such a highest weight representation of the Virasoro algebra, we construct the linear operator $R(\rho)$ for $\rho\in\mrm{Aut}\mcal{O}$
that defines a representation of $\mrm{Aut}\mcal{O}$.
For an automorphism $\rho\in \mrm{Aut}\mcal{O}$,
we uniquely find $v_{i}$, $i\ge 0$, such that
\begin{equation}
	\rho(w)=\exp\left(\sum_{i>0}v_{i}w^{i+1}\del_{w}\right)v_{0}^{w\del_{w}}\cdot w.
\end{equation}
Here the exponentiation of the Euler vector field is defined by $v_{0}^{w\del_{w}}\cdot w=v_{0}$.
The above expression of $\rho$ is {{a}} specification of its action on $\mcal{K}=\mbb{C}((w))$ defined by $(\rho. F)(w)=f(\rho(w))$ for $F(w)\in \mcal{K}$,
where the group law of invertible operators on $\mcal{K}$ is defined by composition.
The first few {{values}} of $v_{i}$ for a given $\rho$ are computed by comparing coefficients of each powers of $w$ so that
\begin{align*}
	v_{0}&=\pr{\rho}(0), &
	v_{1}&=\frac{1}{2}\frac{\ppr{\rho}(0)}{\pr{\rho}(0)}, &
	v_{2}&=\frac{1}{6}\frac{\pppr{\rho}(0)}{\pr{\rho}(0)}-\frac{1}{4}\left(\frac{\ppr{\rho}(0)}{\pr{\rho}(0)}\right)^{2}, &\cdots.
\end{align*}
Let $V$ be a VOA.
Then for an automorphism $\rho\in\mrm{Aut}\mcal{O}$, the following operator is {{well defined}} in $\mrm{End}(V)${{:}}
\begin{equation}
	R(\rho)=\exp\left(-\sum_{i>0}v_{i}L_{i}\right)v_{0}^{-L_{0}},
\end{equation}
and satisfies $R(\rho)R(\mu)=R(\rho\ast\mu)$.
In {{the case when}} $\rho\in\mrm{Aut}_{+}\mcal{O}$, we have $v_{0}=1$, which means that $R(\rho)$ can also be regarded as an operator on a $V$-module.

We investigate the behavior of a field $Y(A,z)$ on a {{VOA}} $V$ under the adjoint action by $R(\rho)$.
Let $L(z)=\sum_{n\in\mbb{Z}}L_{n}z^{-n-2}$ be the Virasoro field{{. Then,}}
\begin{equation}
	[L(z),Y(A,w)]=\sum_{m\ge -1}Y(L_{m}A,w)\del_{w}^{(m+1)}\delta(z-w),
\end{equation}
which implies
\begin{equation}
	[L_{n},Y(A,w)]=\sum_{m\ge -1}\binom{n+1}{m+1}Y(L_{m}A,w)w^{n-m}.
\end{equation}
For ${\bf v}=-\sum_{n\in\mbb{Z}}v_{n}L_{n}$ such that $v_{n}=0$ for $n\ll 0$, {{then}}
\begin{equation}
	[{\bf v},Y(A,w)]=-\sum_{m\ge -1}\left(\del_{w}^{(m+1)}v(w)\right)Y(L_{m}A,w),
\end{equation}
where $v(w)=\sum_{n\in\mbb{Z}}v_{n}w^{n+1}$.

\begin{prop}
For $A\in V$ and $\rho\in\mrm{Aut}\mcal{O}$,
\begin{equation}
	Y(A,w)=R(\rho)Y(R(\rho_{w})^{-1}A,\rho(w))R(\rho)^{-1}.
\end{equation}
Here $\rho_{w}(t)=\rho(w+t)-\rho(w)$.
\end{prop}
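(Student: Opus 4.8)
This is the transformation law of a vertex operator under a change of formal coordinate near a point, and I would follow the standard infinitesimal argument (cf. \cite{FrenkelBen-Zvi2004}): prove the identity first to first order using the commutation relation $[{\bf v},Y(A,w)]=-\sum_{m\ge-1}(\del_{w}^{(m+1)}v(w))Y(L_{m}A,w)$ recorded just above the statement, and then integrate it to all of $\mrm{Aut}\,\mcal{O}$ by means of the group law $R(\rho)R(\mu)=R(\rho\ast\mu)$. The first point to settle is that the asserted identity is compatible with composition. Writing $u=\rho(w)$, a direct substitution into the definitions gives $(\rho\ast\mu)(w)=\mu(\rho(w))$ and $(\rho\ast\mu)_{w}=\rho_{w}\ast\mu_{u}$ (as automorphisms of $\mcal{O}$ in the variable $t$), whence $R(\rho\ast\mu)=R(\rho)R(\mu)$ and $R((\rho\ast\mu)_{w})^{-1}=R(\mu_{u})^{-1}R(\rho_{w})^{-1}$. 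Inserting these into the right-hand side of the proposition for $\rho\ast\mu$ and applying the claimed identity first for $\mu$ (with $A$ replaced by $R(\rho_{w})^{-1}A$ and $w$ by $u$), and then for $\rho$, returns $Y(A,w)$. Hence the set $S\subseteq\mrm{Aut}\,\mcal{O}$ of automorphisms satisfying the identity for every $A\in V$ and every $w$ is closed under the group operation and contains the identity, for which both sides are visibly $Y(A,w)$ since then $\rho_{w}(t)=t$ and $\rho(w)=w$.

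The heart of the matter is the infinitesimal identity. For $v(w)=\sum_{n}v_{n}w^{n+1}\in\mrm{Der}_{0}\mcal{O}$ I would put $\rho(w)=w+\varepsilon v(w)+O(\varepsilon^{2})$, so that, from the definition of $R$, one has $R(\rho)=1+\varepsilon{\bf v}+O(\varepsilon^{2})$ with ${\bf v}=-\sum_{n}v_{n}L_{n}$. Here $\rho_{w}(t)=t+\varepsilon\,(v(w+t)-v(w))+O(\varepsilon^{2})$, and the Taylor expansion $v(w+t)-v(w)=\sum_{m\ge0}(\del_{w}^{(m+1)}v(w))\,t^{m+1}$ identifies the vector field generating $\rho_{w}$; it follows that $R(\rho_{w})^{-1}A=A+\varepsilon\sum_{m\ge0}(\del_{w}^{(m+1)}v(w))L_{m}A+O(\varepsilon^{2})$. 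Using also $Y(B,\rho(w))=Y(B,w)+\varepsilon\,v(w)\del_{w}Y(B,w)+O(\varepsilon^{2})$ and conjugating by $R(\rho)=1+\varepsilon{\bf v}+\cdots$, the coefficient of $\varepsilon$ in the right-hand side works out to
\[
[{\bf v},Y(A,w)]+\sum_{m\ge0}(\del_{w}^{(m+1)}v(w))Y(L_{m}A,w)+v(w)\del_{w}Y(A,w).
\]
Isolating the $m=-1$ term in the commutation relation and using the translation axiom $Y(L_{-1}A,w)=\del_{w}Y(A,w)$ — so that term is exactly $-v(w)\del_{w}Y(A,w)$ — shows that this coefficient vanishes. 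Thus the right-hand side agrees with the left-hand side $Y(A,w)$ through order $\varepsilon$.

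For the passage from first order to the whole group, fix $n\ge0$ and consider the one-parameter subgroup $\rho^{(s)}_{n}=\exp(s\,w^{n+1}\del_{w})$; for $n=0$ this is the scaling $w\mapsto e^{s}w$, which sweeps out the remaining torus direction of $\mrm{Aut}\,\mcal{O}$. Applying the composition identity of the first paragraph with $\rho=\rho^{(s)}_{n}$ and $\mu=\rho^{(\varepsilon)}_{n}$ (and using $\rho^{(s)}_{n}\ast\rho^{(\varepsilon)}_{n}=\rho^{(s+\varepsilon)}_{n}$) writes the right-hand side at $\rho^{(s+\varepsilon)}_{n}$ as a conjugation by $R(\rho^{(s)}_{n})$ of the right-hand side at $\rho^{(\varepsilon)}_{n}$ with shifted arguments; differentiating at $\varepsilon=0$ and invoking the infinitesimal identity shows that $s\mapsto(\text{right-hand side at }\rho^{(s)}_{n})$ has vanishing derivative, hence equals its value $Y(A,w)$ at $s=0$. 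Therefore every $\rho^{(s)}_{n}$ lies in $S$; since these topologically generate $\mrm{Aut}\,\mcal{O}$ and $R$, together with the maps $\rho\mapsto\rho(w)$ and $\rho\mapsto\rho_{w}$, are continuous for the inverse-limit topology, the sub-semigroup property yields $S=\mrm{Aut}\,\mcal{O}$. I expect the only friction to be the divided-power bookkeeping in the second paragraph — in particular the correct identification of the vector field of $\rho_{w}$ and the isolation of the $m=-1$ contribution through the translation axiom — and, in the last paragraph, making the continuity/density argument precise for the pro-algebraic group $\mrm{Aut}\,\mcal{O}$; the real mechanism is the single line showing that the $\varepsilon$-coefficient above is zero.
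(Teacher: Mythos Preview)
Your proof is correct and follows essentially the same route as the paper's: both verify that the right-hand side is compatible with composition (the paper phrases this as $T_{\rho}T_{\mu}=T_{\rho\ast\mu}$ on $\mrm{Hom}(V,\mrm{Fie}(V))$, using the same identity $(\rho\ast\mu)_{w}=\rho_{w}\ast\mu_{\rho(w)}$ you derive), and both check the infinitesimal identity via the commutator formula $[{\bf v},Y(A,w)]=-\sum_{m\ge-1}(\del_{w}^{(m+1)}v(w))Y(L_{m}A,w)$ together with the first-order expansions of $R(\rho)$, $R(\rho_{w})^{-1}$, and $\rho(w)$. The only difference is cosmetic: the paper concludes in one line by invoking surjectivity of the exponential map $\mrm{Der}_{0}\mcal{O}\to\mrm{Aut}\mcal{O}$, whereas you spell out the integration along one-parameter subgroups $\exp(s\,w^{n+1}\del_{w})$ and then appeal to generation --- this is the same mechanism made explicit.
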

\begin{proof}
We denote by $\mrm{Fie}(V)$ the space of fields on $V$.
The state field correspondence map $Y(-,w)$ is regarded as an element in $\mrm{Hom}(V,\mrm{Fie}(V))$.
For an automorphism $\rho\in\mrm{Aut}\mcal{O}$, we define an endomorphism $T_{\rho}$ on $\mrm{Hom}(V,\mrm{Fie}(V))$ by
\begin{equation}
	(T_{\rho}\cdot X)(A,w):=R(\rho)X(R(\rho_{w})^{-1}A,\rho(w))R(\rho)^{-1}
\end{equation}
for $X\in \mrm{Hom}(V,\mrm{Fie}(V))$ and $A\in V$.
Then this assignment $\rho\mapsto T_{\rho}$ is a group homomorphism.
Indeed,
\begin{align*}
	&(T_{\rho}\cdot(T_{\mu}\cdot X))(A,w) \\
	&=R(\rho)(T_{\mu}\cdot X)(R(\rho_{w})^{-1}A,\rho(w))R(\rho)^{-1} \\
	&=R(\rho)R(\mu)X(R(\mu_{\rho(w)})^{-1}R(\rho_{w})^{-1}A,\mu(\rho(w)))R(\mu)^{-1}R(\rho)^{-1}.
\end{align*}
{{Note}} that
\begin{align*}
	(\rho_{w}\ast \mu_{\rho(w)})(t)
	&=\mu_{\rho(w)}(\rho_{w}(t))=\mu(\rho(w)+\rho_{w}(t))-\mu(\rho(w)) \\
	&=\mu(\rho(w)+\rho(w+t)-\rho(w))-\mu(\rho(w)) \\
	&=(\rho\ast\mu)_{w}(t)
\end{align*}
to obtain
\begin{equation}
	(T_{\rho}\cdot(T_{\mu}\cdot X))(A,w)=(T_{\rho\ast\mu}\cdot X)(A,w).
\end{equation}

Since the exponential map $\mrm{Der}_{0}\mcal{O}\to \mrm{Aut}\mcal{O}$ is surjective,
we can assume $\rho$ to be infinitesimal.
For an infinitesimal transformation $\rho(w)=w+\epsilon v(w)+o(\epsilon)$ with $v(w)=\sum_{n\ge 0}v_{n}w^{n+1}$,
\begin{equation}
	R(\rho)=\mrm{Id}+\epsilon {\bf v}+o(\epsilon),
\end{equation}
where ${\bf v}=-\sum_{n\ge 0}v_{n}L_{n}$.
The associated transformation $\rho_{w}(t)$ is approximated {{up to a}} linear order of $\epsilon$ by
\begin{align*}
	\rho_{w}(t)
		&=\rho(w+t)-\rho(w)=w+t+\epsilon v(w+t)-w-\epsilon v(w)+o(\epsilon) \\
		&=t+\epsilon\sum_{m\ge 0}\del^{(m+1)}v(w)t^{m+1}+o(\epsilon).
\end{align*}
Thus $R(\rho_{w})^{-1}$ becomes
\begin{equation}
	R(\rho_{w})^{-1}=\mrm{Id}+\epsilon \sum_{n\ge 0}\del^{(n+1)}v(w)L_{n}+o(\epsilon).
\end{equation}
We now show that the state-field correspondence map $Y(-,w)$ is fixed under the action of $T_{\rho}$
up to {{a}} linear order of $\epsilon${{:}}
\begin{align*}
	&(T_{\rho}\cdot Y)(A,w) \\
	&=(\mrm{Id}+\epsilon {\bf v})Y\left(\left(\mrm{Id}+\epsilon\sum_{n\ge 0}\del^{(n+1)}v(w)L_{n}\right)A,w+\epsilon+v(w)\right)(\mrm{Id}-\epsilon{\bf v}) \\
	&=Y(A,w)+\epsilon\left([{\bf v},Y(A,w)]+v(w)\del Y(A,w)+\sum_{n\ge 0}\del^{(n+1)}v(w)Y(L_{n}A,w)\right) \\
	&=Y(A,w).
\end{align*}
\end{proof}

\begin{cor}
Let $A\in V$ be a primary vector of conformal weight $h$, i.e., it satisfies $L_{n}A=0$ for $n>0$ and $L_{0}A=hA$.
For an automorphism $\rho\in\mrm{Aut}\mcal{O}$,
\begin{equation}
	Y(A,w)=R(\rho)Y(A,\rho(w))R(\rho)^{-1}(\pr{\rho}(w))^{h}.
\end{equation}
\end{cor}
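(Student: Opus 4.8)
The plan is to obtain the corollary directly from the preceding proposition by evaluating the operator $R(\rho_{w})^{-1}$ on a primary vector. First I would record that $\rho_{w}(t)=\rho(w+t)-\rho(w)$ satisfies $\rho_{w}(0)=0$ and $\pr{\rho_{w}}(0)=\pr{\rho}(w)\in\mbb{C}^{\times}$, so $\rho_{w}$ is a genuine element of $\mrm{Aut}\mcal{O}$ and the operator $R(\rho_{w})$ is well defined. Writing $\rho_{w}$ in the canonical form $\rho_{w}(t)=\exp\left(\sum_{i>0}v_{i}t^{i+1}\del_{t}\right)v_{0}^{t\del_{t}}\cdot t$, one reads off $v_{0}=\pr{\rho_{w}}(0)=\pr{\rho}(w)$, with the higher coefficients $v_{i}$ expressible through derivatives of $\rho$ at $w$ but playing no further role below.

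Next I would use the explicit expression $R(\rho_{w})=\exp\left(-\sum_{i>0}v_{i}L_{i}\right)v_{0}^{-L_{0}}$, hence $R(\rho_{w})^{-1}=v_{0}^{L_{0}}\exp\left(\sum_{i>0}v_{i}L_{i}\right)$. Since $A$ is primary, $L_{i}A=0$ for all $i>0$, so $\exp\left(\sum_{i>0}v_{i}L_{i}\right)A=A$; and $L_{0}A=hA$ gives $v_{0}^{L_{0}}A=v_{0}^{h}A=(\pr{\rho}(w))^{h}A$. Therefore $R(\rho_{w})^{-1}A=(\pr{\rho}(w))^{h}A$.

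Finally, substituting this into the identity of the proposition, $Y(A,w)=R(\rho)Y(R(\rho_{w})^{-1}A,\rho(w))R(\rho)^{-1}$, and using the $\mbb{C}$-linearity of $Y(-,z)$ in its first slot to pull the scalar $(\pr{\rho}(w))^{h}$ outside, yields exactly $Y(A,w)=R(\rho)Y(A,\rho(w))R(\rho)^{-1}(\pr{\rho}(w))^{h}$.

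The only steps needing care—rather than a genuine obstacle—are that the operators $\exp\left(\sum_{i>0}v_{i}L_{i}\right)$ and $v_{0}^{L_{0}}$ act in a well-defined way on $V$; this is precisely the local nilpotency of $L_{i}$ for $i>0$ and the diagonalizability of $L_{0}$ already invoked in this section, so on a highest weight representation nothing further must be verified.
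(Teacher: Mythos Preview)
Your proof is correct and follows essentially the same approach as the paper: both arguments compute $R(\rho_{w})^{-1}A$ by writing $R(\rho_{w})$ in its canonical form, observing that the exponential in the positive Virasoro modes acts trivially on a primary vector, and identifying $v_{0}=\pr{\rho}(w)$ so that only the $L_{0}$-factor contributes the scalar $(\pr{\rho}(w))^{h}$. The paper computes $R(\rho_{w})A$ first and then inverts, while you invert $R(\rho_{w})$ directly before applying it to $A$, but this is only a cosmetic difference.
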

\begin{proof}
For a primary vector $A$ of conformal weight $h$,
the {{one-dimensional}} space $\mbb{C}A$ is preserved by the operator $R(\rho_{w})$,
where $R(\rho_{w})$ is given by
\begin{equation}
	R(\rho_{w})=\exp\left(-\sum_{j>0}v_{j}(w)L_{j}\right)v_{0}(w)^{-L_{0}}
\end{equation}
with $v_{j}(w)$ being chosen so that 
\begin{equation}
	\rho_{w}(t)=\exp\left(\sum_{j>0}v_{j}(w)t^{j+1}\del_{t}\right)v_{0}(w)^{t\del_t}\cdot t.
\end{equation}
Since $A$ is primary, the nontrivial effect comes from the action by $L_{0}$, thus we have $R(\rho_{w})A=v_{0}(w)^{-h}A$,
where $v_{0}(w)$ is computed as $v_{0}(w)=\del_{t}\rho_{w}(t=0)=\pr{\rho}(w)$,
which implies that $R(\rho_{w})^{-1}A=(\pr{\rho}(w))^{h}A$.
\end{proof}

One {{important field that is}} not primary is the Virasoro field $L(w)=Y(L_{-2}\ket{0},w)$,
which transforms as follows.
\begin{prop}
	Let $L(w)$ be the Virasoro field{{:}}
	\begin{equation}
	L(w)=R(\rho)L(\rho(w))R(\rho)^{-1}(\pr{\rho}(w))^{2}+\frac{c}{12}(S\rho)(w).
	\end{equation}
	Here $c\in\mbb{C}$ is the central charge and $(S\rho)(w)$ is the Schwarzian derivative defined by
	\begin{equation}
		(S\rho)(w)=\frac{\pppr{\rho}(w)}{\pr{\rho}(w)}-\frac{3}{2}\left(\frac{\ppr{\rho}(w)}{\pr{\rho}(w)}\right)^{2}.
	\end{equation}
\end{prop}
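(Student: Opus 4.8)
The plan is to apply the general transformation law of the Proposition above to the conformal vector $A=L_{-2}\ket{0}$, for which $Y(A,w)=L(w)$, and then to evaluate explicitly the vector $R(\rho_{w})^{-1}A$ that enters that formula. Unlike the primary case handled in the Corollary, $L_{-2}\ket{0}$ is not an eigenvector of $R(\rho_{w})$; but it generates only a two-dimensional subspace, and the extra term produced beyond $(\pr{\rho}(w))^{2}L_{-2}\ket{0}$ is exactly what becomes the Schwarzian anomaly.

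First I would work out the action of $\mrm{Vir}_{\ge 0}$ on $W:=\mbb{C}L_{-2}\ket{0}\oplus\mbb{C}\ket{0}$. Using the Virasoro relations together with $L_{n}\ket{0}=0$ for $n\ge -1$, one finds $L_{0}(L_{-2}\ket{0})=2L_{-2}\ket{0}$, $L_{1}(L_{-2}\ket{0})=3L_{-1}\ket{0}=0$, $L_{2}(L_{-2}\ket{0})=\tfrac{c}{2}\ket{0}$ (the central term of $[L_{2},L_{-2}]$ supplying the factor $\tfrac{c}{2}$), and $L_{j}(L_{-2}\ket{0})=0$ for $j\ge 3$, while $L_{j}\ket{0}=0$ for all $j\ge 1$ and $L_{0}\ket{0}=0$. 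Hence $W$ is invariant under $R(\rho_{w})$ and its inverse, the operator $\sum_{j>0}v_{j}(w)L_{j}$ restricts on $W$ to $v_{2}(w)L_{2}$, and $(L_{2}|_{W})^{2}=0$. Writing $R(\rho_{w})^{-1}=v_{0}(w)^{L_{0}}\exp\!\big(\sum_{j>0}v_{j}(w)L_{j}\big)$, the exponential therefore truncates on $W$ to $\mrm{Id}+v_{2}(w)L_{2}$, giving
\[
R(\rho_{w})^{-1}L_{-2}\ket{0}=v_{0}(w)^{2}\,L_{-2}\ket{0}+\tfrac{c}{2}\,v_{2}(w)\,\ket{0}.
\]

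Next I would identify the coefficients. Since $\rho_{w}(t)=\rho(w+t)-\rho(w)$, the Taylor coefficients of $\rho_{w}$ at $t=0$ are the derivatives of $\rho$ at $w$, so the expressions for $v_{0},v_{2}$ recalled before the Proposition give $v_{0}(w)=\pr{\rho}(w)$ and $v_{2}(w)=\tfrac{1}{6}\tfrac{\pppr{\rho}(w)}{\pr{\rho}(w)}-\tfrac{1}{4}\big(\tfrac{\ppr{\rho}(w)}{\pr{\rho}(w)}\big)^{2}=\tfrac{1}{6}(S\rho)(w)$, the last equality being precisely the definition of the Schwarzian derivative. Substituting into the display above and then into the general transformation law, and using linearity of $Y$ in its first argument together with $Y(\ket{0},z)=\mrm{Id}$ and $R(\rho)\,\mrm{Id}\,R(\rho)^{-1}=\mrm{Id}$, yields
\[
L(w)=(\pr{\rho}(w))^{2}\,R(\rho)L(\rho(w))R(\rho)^{-1}+\tfrac{c}{12}(S\rho)(w),
\]
which is the assertion.

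The only real subtlety is the finite computation in the first two steps: one must keep track of the central extension in $[L_{2},L_{-2}]$ to get the coefficient $\tfrac{c}{2}$, verify that no $L_{j}$ with $j\neq 0,2$ acts nontrivially on $W$ so that the exponential genuinely truncates at first order, and take the inverse $R(\rho_{w})^{-1}$ in the correct order $v_{0}(w)^{L_{0}}\exp(\sum_{j>0}v_{j}(w)L_{j})$. Matching $\tfrac{c}{2}v_{2}(w)$ with $\tfrac{c}{12}(S\rho)(w)$ is then immediate; everything else is routine and already contained in the proofs of the preceding Proposition and Corollary.
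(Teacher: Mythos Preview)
Your argument is correct and follows essentially the same route as the paper: both proofs observe that $W=\mbb{C}L_{-2}\ket{0}\oplus\mbb{C}\ket{0}$ is a two-dimensional $R(\rho_{w})$-invariant subspace, compute $R(\rho_{w})^{-1}$ there (the paper via the $2\times 2$ matrix and its inverse, you via the truncated exponential), and then identify $v_{0}(w)=\pr{\rho}(w)$ and $\tfrac{c}{2}v_{2}(w)=\tfrac{c}{12}(S\rho)(w)$. The only cosmetic difference is that you spell out the Virasoro commutators $L_{j}L_{-2}\ket{0}$ explicitly to justify the truncation, whereas the paper simply writes down the matrix.
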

\begin{proof}
It is clear that the space $\mbb{C}L_{-2}\ket{0}\oplus\mbb{C}\ket{0}$ is preserved by the operator $R(\rho_{w})$,
thus we first compute the inverse $R(\rho_{w})^{-1}$ on this space.
Let $v_{j}(w)\in \mbb{C}[[w]]$ be chosen so that
\begin{equation}
	\rho_{w}(t)=\exp\left(\sum_{j>0}v_{j}(w)t^{j+1}\del_{t}\right)v_{0}(w)^{t\del_{t}}\cdot t,
\end{equation}
then $R(\rho_{w})$ {{can be}} expressed as
\begin{equation}
	R(\rho_{w})=\exp\left(-\sum_{j>0}v_{j}(w)L_{j}\right)v_{0}(w)^{-L_{0}}.
\end{equation}
The matrix form of this operator on $\mbb{C}L_{-2}\ket{0}\oplus\mbb{C}\ket{0}$ is expressed in this basis
\begin{equation}
	R(\rho_{z})=\left(
	\begin{array}{cc}
		v_{0}(w)^{-2} & 0 \\
		-\frac{c}{2}v_{0}(w)^{-2}v_{2}(w) & 1
	\end{array}
	\right),
\end{equation}
and its inverse is
\begin{equation}
	R(\rho_{w})^{-1}=\left(
	\begin{array}{cc}
		v_{0}(w)^{2} & 0 \\
		\frac{c}{2}v_{2}(w) & 1
	\end{array}
	\right)=\left(
	\begin{array}{cc}
		(\pr{\rho}(w))^{2} & 0 \\
		\frac{c}{12}(S\rho)(w) & 1
	\end{array}
	\right),
\end{equation}
which implies the desired result.
\end{proof}

In application to the theory of SLE, we regard the formal {{disk}} introduced here as the formal neighborhood at infinity,
and have to reformulate {{all the components so that they are}} associated with the coordinate $z=\frac{1}{w}$ at $0$.
While an automorphism $\rho$ sends $w$ to $\rho(w)=a_{1}w+a_{2}w^{2}+\cdots$, the same automorphism sends $z$ to $1/\rho(1/z)$.
If we expand the image in $z\mbb{C}[[z^{-1}]]$, we can also identify the group $\mrm{Aut}\mcal{O}$ with
\begin{equation}
	\label{eq:aut_at_zero}
	\mrm{Aut}\mcal{O}\simeq\{b_{1}z+b_{0}+b_{-1}z^{-1}+\cdots|b_{1}\in\mbb{C}^{\times},\ b_{i}\in \mbb{C},\ i\le 0\}
\end{equation}
The infinite series in $z\mbb{C}[[z^{-1}]]$ that is identified with an automorphism $\rho$ will be denoted by $\rho(z)$.
In the following, we regard formal variables $z$ and $w$ as formal {{coordinates}} at $0$ and infinity, respectively,
and $\rho(z)$ and $\rho(w)$ as infinite series identified with an automorphism $\rho$ via Eq.(\ref{eq:aut_at_infty}) and Eq.(\ref{eq:aut_at_zero}), respectively.

Under realization Eq.(\ref{eq:aut_at_zero}) of the group $\mrm{Aut}\mcal{O}$, its subgroup $\mrm{Aut}_{+}\mcal{O}$ consists of formal series
$z+b_{0}+b_{-1}z^{-1}+\cdots$ with $b_{i}\in \mbb{C}$ for $i\le 0$,
and Lie algebras are realized as $\mrm{Der}_{+}\mcal{O}=\mbb{C}[[z^{-1}]]\del_{z}$ and $\mrm{Der}_{0}\mcal{O}=z\mbb{C}[[z^{-1}]]\del_{z}$.

Since the Lie algebra $\mrm{Der}_{0}\mcal{O}=z\mbb{C}[[z^{-1}]]\del_{z}$ consists of vector fields{{, the coefficients of which}} are Laurent series in $z^{-1}$,
it cannot act on a VOA $V$ or its module $M$ by assignment $-z^{n+1}\del_{z}\to L_{n}$ for $n\le 0$.
Nevertheless, we can define {{well-defined}} operators that represent the Lie algebra $\mrm{Der}_{0}\mcal{O}$ on the completion of the vector space.
Let $M=\bigoplus_{n\in\mbb{Z}}M_{n}$ be the $\mbb{Z}$-gradation of a $V$-module $M$.
Then we define its formal completion by $\overline{M}=\prod_{n\in\mbb{Z}}M_{n}$.
Recall that $M_{n}=0$ for sufficiently small $n$.
Moreover this action of $\mrm{Der}_{0}\mcal{O}$ is exponentiated as a representation of $\mrm{Aut}\mcal{O}$ on $\overline{V}$,
and a representation of its subgroup $\mrm{Aut}_{+}\mcal{O}$ on $\overline{M}$.

For a given $\rho\in\mrm{Aut}\mcal{O}$, we can uniquely find numbers $v_{i}$ ($i\le 0$) that satisfy
\begin{equation}
	\exp\left(\sum_{j<0}v_{j}z^{j+1}\del_{z}\right)v_{0}^{z\del_{z}}\cdot z=\rho(z).
\end{equation}
Then the operator $Q(\rho)$ defined by
\begin{equation}
	Q(\rho)=\exp\left(-\sum_{j<0}v_{j}L_{j}\right)v_{0}^{-L_{0}}
\end{equation}
is a well-defined one on $\overline{V}$ and a representation of $\mrm{Aut}\mcal{O}$ {{can be defined}}.
Indeed, the part $v_{0}^{-L_{0}}$ behaves as multiplication by $v_{0}^{-n}$ when restricted on $V_{n}$ ,
and $L_{j}$ with $j<0$ strictly raises the degree, while the $\mbb{Z}$-gradation on $V$ is bounded from below.

We investigate the covariance property of a field $Y(A,z)$ under the adjoint action by $Q(\rho)$.
For $v(z)=\sum_{n\in\mbb{Z}}v_{n}z^{n+1}\in\mbb{C}((z^{-1}))$,
\begin{equation}
	[{\bf v},Y(A,z)]=\sum_{m\ge -1}\del^{(m+1)}v(z)Y(L_{m}A,z),
\end{equation}
with ${\bf v}=-\sum_{n\in\mbb{Z}}v_{n}L_{n}$, but here the both sides belong to $\mrm{End}(\overline{V})[z,z^{-1}]$.

\begin{prop}
\label{prop:transformation_zero}
For $A\in V$ and $\rho\in\mrm{Aut}\mcal{O}$,
\begin{equation}
	Y(A,z)=Q(\rho)Y(R(\rho_{z})^{-1}A,\rho(z))Q(\rho)^{-1}.
\end{equation}
\end{prop}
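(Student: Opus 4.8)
The plan is to reduce Proposition~\ref{prop:transformation_zero} to the already-established Proposition on the covariance of $Y(A,w)$ under $R(\rho)$ by exploiting the fact that both statements are ``the same statement'' written in the two coordinate systems $w$ (at infinity) and $z=1/w$ (at $0$). More precisely, I would proceed exactly as in the proof of the first Proposition: define an endomorphism $S_{\rho}$ on $\mrm{Hom}(V,\mrm{Fie}(\overline{V}))$ by
\begin{equation*}
	(S_{\rho}\cdot X)(A,z):=Q(\rho)X(R(\rho_{z})^{-1}A,\rho(z))Q(\rho)^{-1},
\end{equation*}
check that $\rho\mapsto S_{\rho}$ is a group homomorphism using $Q(\rho)Q(\mu)=Q(\rho\ast\mu)$ (which follows from the definition of $Q$ just as $R(\rho)R(\mu)=R(\rho\ast\mu)$ does) together with the cocycle identity $(\rho_{z}\ast\mu_{\rho(z)})(t)=(\rho\ast\mu)_{z}(t)$, whose verification is word-for-word the computation already done for $\rho_{w}$. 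Then, since the exponential map $\mrm{Der}_{0}\mcal{O}\to\mrm{Aut}\mcal{O}$ is surjective in this realization too, it suffices to show $Y(-,z)$ is a fixed point of $S_{\rho}$ for infinitesimal $\rho$.

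For the infinitesimal computation, write $\rho(z)=z+\epsilon v(z)+o(\epsilon)$ with $v(z)=\sum_{n\le 0}v_{n}z^{n+1}\in\mbb{C}[[z^{-1}]]\del_{z}$-type coefficients, so that $Q(\rho)=\mrm{Id}+\epsilon\vc{v}+o(\epsilon)$ with $\vc{v}=-\sum_{n\le 0}v_{n}L_{n}$; note that only the $L_{n}$ with $n\le 0$ enter, which is why this lives on $\overline{V}$ and not $V$. As in the proof of the earlier Proposition, $\rho_{z}(t)=t+\epsilon\sum_{m\ge 0}\del^{(m+1)}v(z)\,t^{m+1}+o(\epsilon)$, so $R(\rho_{z})^{-1}=\mrm{Id}+\epsilon\sum_{n\ge 0}\del^{(n+1)}v(z)L_{n}+o(\epsilon)$. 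Expanding $(S_{\rho}\cdot Y)(A,z)$ to first order in $\epsilon$ gives
\begin{equation*}
	Y(A,z)+\epsilon\left([\vc{v},Y(A,z)]+v(z)\del Y(A,z)+\sum_{n\ge 0}\del^{(n+1)}v(z)Y(L_{n}A,z)\right),
\end{equation*}
and the bracket in the parentheses vanishes by the commutation relation $[\vc{v},Y(A,z)]=\sum_{m\ge -1}\del^{(m+1)}v(z)Y(L_{m}A,z)$ recorded just before the statement (the $m=-1$ term is $\del v(z)\cdot$? no: $\del^{(0)}v(z)Y(L_{-1}A,z)=v(z)\del Y(A,z)$), precisely cancelling the remaining two terms.

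The one point that requires genuine care, and which I expect to be the main obstacle, is the bookkeeping of \emph{where} the relevant identities hold: the commutator formula for $[\vc{v},Y(A,z)]$ is asserted to hold in $\mrm{End}(\overline{V})[z,z^{-1}]$ rather than in $\mrm{End}(V)[[z,z^{-1}]]$, and one must make sure that all the manipulations above — in particular the expansion of $Q(\rho)Y(\cdots)Q(\rho)^{-1}$ and the identification of $S_{\rho}$ as a genuine group action — are legitimate operations on the formal completion $\overline{V}$, i.e. that the sums over $n\le 0$ appearing in $\vc{v}$ and $Q(\rho)$ act in a well-defined way on each homogeneous component because $L_{j}$ with $j<0$ strictly raises degree and the grading is bounded below. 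Once this convergence/well-definedness is pinned down, the surjectivity of the exponential map and the infinitesimal cancellation finish the proof exactly as before; no new idea beyond the coordinate change $z=1/w$ is needed.
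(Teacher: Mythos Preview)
Your approach is correct and is precisely what the paper intends: Proposition~\ref{prop:transformation_zero} is stated in the paper \emph{without proof}, with the understanding that it is verbatim the argument of the preceding proposition on $R(\rho)$, re-run in the coordinate $z=1/w$ on the formal completion $\overline{V}$. Your write-up of the group-homomorphism step, the cocycle identity for $\rho_{z}$, the reduction to infinitesimal $\rho$, and the well-definedness caveat on $\overline{V}$ are all on target.

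One caution on the final cancellation: you quote the commutator formula as $[\vc{v},Y(A,z)]=\sum_{m\ge -1}\del^{(m+1)}v(z)Y(L_{m}A,z)$, which is how it appears in the paper just before the statement, but note that this differs in sign from the parallel formula the paper records earlier in the $w$-coordinate, $[\vc{v},Y(A,w)]=-\sum_{m\ge -1}\del^{(m+1)}v(w)Y(L_{m}A,w)$. Since the computation $[L_{n},Y(A,\cdot)]=\sum_{m\ge -1}\binom{n+1}{m+1}Y(L_{m}A,\cdot)(\cdot)^{n-m}$ is purely algebraic and identical in either variable, the minus sign is the correct one; with it, your first-order term $[\vc{v},Y(A,z)]+v(z)\del Y(A,z)+\sum_{n\ge 0}\del^{(n+1)}v(z)Y(L_{n}A,z)$ indeed vanishes, whereas with the sign as printed it would double rather than cancel. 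This is a typographical slip in the paper, not a flaw in your strategy.
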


On a $V$-module on which eigenvalues of $L_{0}$ are not integers, the whole group $\mrm{Aut}\mcal{O}$ cannot act,
while its subgroup $\mrm{Aut}_{+}\mcal{O}$ can act.
In application to {{SLEs}}, this subgroup is sufficient since each realization of the SLE is always normalized
so that its expansion around infinity begins from $z$ with the coefficient unity.

For an operator $T$ on a VOA $V$, we define its adjoint operator $T^{\ast}$ by the property that
$\braket{Tu|v}=\braket{u|T^{\ast}v}$ for $u, v\in V$.
In this terminology, the operator $Q(\rho)$ defined above is the inverse of the adjoint operator of $R(\rho)$,
while $Q(\rho)$ is not an operator on a VOA but on its formal completion.

\subsection{Appearance of SDEs}
A fundamental object in the group theoretical formulation of {{SLEs}} is a random process $\rho_{t}$ on the {{infinite-dimensional}} Lie group $\mrm{Aut}_{+}\mcal{O}$.
A random process on a Lie group induces one on the space of operators on a representation space.
Let us take $(\gamma, \mcal{K}=\mbb{C}((z^{-1})))$ as a representation of $\mrm{Aut}_{+}(\mcal{O})$ defined by
$(\gamma(\rho)F)(z)=F(\rho(z))$.
Following {{the}} description of a random process on a Lie group presented in Appendix \ref{sect:app_ito_lie_group},
we assume that the induced random process on $\mrm{Aut}\mcal{K}$ satisfies the SDE
\begin{equation}
	\gamma(\rho_{t})^{-1} d\gamma(\rho_{t})=\left(2z^{-1}\del_{z}+\frac{\kappa}{2}\del_{z}^{2}\right)dt-\del_{z}dB_{t}
\end{equation}
under the initial condition $\gamma(\rho_{0})=\mrm{Id}$.
Here $B_{t}$ is the $\mbb{R}$-valued Brownian motion of variance $\kappa$ that starts from the origin.
Then we observe that $\gamma(\rho_{t})z=\rho_{t}(z)$ satisfies the SDE
\begin{equation}
	d\rho_{t}(z)=\frac{2}{\rho_{t}(z)}dt-dB_{t}
\end{equation}
under the initial condition $\rho_{0}(z)=z$.
If we introduce $g_{t}(z)=\rho_{t}(z)+B_{t}$, we find that $g_{t}(z)$ satisfies the stochastic Loewner equation
\begin{equation}
	\label{eq:sle_equation}
	\frac{d}{dt}g_{t}(z)=\frac{2}{g_{t}(z)-B_{t}}.
\end{equation}
Moreover, since $B_{0}=0$, we have $g_{0}(z)=z$.
Thus $g_{t}(z)$ is identified with the SLE$(\kappa)$.

We have just derived the stochastic Loewner equation from a random process on the Lie group $\mrm{Aut}_{+}\mcal{O}$.
This manner of formulation enables us to obtain several local martingales associated with SLE.
Let us consider the object $Q(\rho_{t})\ket{c,h}$, which is regarded as a random process on $\overline{L(c,h)}$,
of which the increment is
\begin{equation}
	d(Q(\rho_{t})\ket{c,h})=Q(\rho_{t})\left(\left(-2L_{-2}+\frac{\kappa}{2}L_{-1}^{2}\right)\ket{c,h}dt+L_{-1}\ket{c,h}dB_{t}\right).
\end{equation}
Thus if the vector $\chi=\left(-2L_{-2}+\frac{\kappa}{2}L_{-1}^{2}\right)\ket{c,h}$ is a null vector in the Verma module $M(c,h)$,
the random process $Q(\rho_{t})\ket{c,h}$ is a local martingale.
Notice that $\chi$ is a null vector if and only if it is a singular vector, conditions for which are that we have
$c=1-\frac{3(\kappa-4)^{2}}{2\kappa}$ and $h=\frac{6-\kappa}{2\kappa}$.
Thus for such a choice of $(c,h)$, the random process $Q(\rho_{t})\ket{c,h}$ in $\overline{L(c,h)}$ is a local martingale,
and produces several local martingales associated with SLE.
An example is given by $\braket{c,h|L(z)Q(\rho_{t})|c,h}$, where $L(z)$ is the Virasoro field on $L(c,h)$.
From Prop. \ref{prop:transformation_zero} and the fact that the dual of the highest weight vector $\bra{c,h}$ is invariant under the right action by $Q(\rho)$, we find that
\begin{equation}
	\braket{c,h|L(z)Q(\rho_{t})|c,h}=h\left(\frac{\pr{\rho}_{t}(z)}{\rho_{t}(z)}\right)^{2}+\frac{c}{12}(S\rho_{t})(z)
\end{equation}
is a local martingale.
We can show that such a quantity is indeed a local martingale by a standard Ito calculus,
but the group theoretical formulation of SLE in the sense of Bauer and Bernard\cite{BauerBernard2002, BauerBernard2003a,BauerBernard2003b} further clarifies its theoretical origin.

Since the solution $g_{t}$ of the stochastic Loewner equation is also described as $g_{t}(z)=(\rho_{t}\ast (z+ B_{t}))(z)$,
the operator $Q(g_{t})$ corresponding to $g_{t}$ is written as $Q(g_{t})=Q(\rho_{t})e^{-B_{t}L_{-1}}$.
Let $\mcal{Y}(-,z)$ be an intertwining operator of type $\binom{L(c,h)}{L(c,h)\ \ L_{c}}$,
then $\mcal{Y}(\ket{c,h},z)$ is a primary field, which is applied to the vacuum vector $\ket{0}$ to {{yield}}
$\mcal{Y}(\ket{c,h},z)\ket{0}=e^{zL_{-1}}\ket{c,h}$.
If we are allowed to substitute the Brownian motion $B_{t}$ in the formal variable $z$, {{then}}
\begin{equation}
	Q(g_{t})\mcal{Y}(\ket{c,h},B_{t})\ket{0}=Q(\rho_{t})\ket{c,h},
\end{equation}
which is a local martingale for a certain choice of $(c,h)$ depending on $\kappa$.
The {{left-hand side is}} a convenient form of the same local martingale
in revealing a Virasoro module structure on a space of SLE local martingales.\cite{Kytola2007}

The origin of the {{infinite-dimensional}} Lie group $\mrm{Aut}\mcal{O}$ in CFT was a seminal work,\cite{KawamotoNamikawaTsuchiyaYamada1988}
in which the group $\mrm{Aut}\mcal{O}$ appeared as a part of the fiber of the fiber bundle $\what{\mfrak{M}}_{g,n}\to\mfrak{M}_{g,n}$,
where $\mfrak{M}_{g,n}$ is the moduli space of Riemann surfaces of genus $g$ and with $n$ punctures,
and $\what{\mfrak{M}}_{g,n}$ is the moduli space decorated by local coordinates at punctures.
One can put on this {{infinite-dimensional}} Lie group a line bundle, the sheaf of sections of which admits an action of the Virasoro algebra.\cite{KirillovYuriev1988}
{{This}} action essentially gives rise to the Virasoro action on the space of SLE local martingales.\cite{BauerBernard2003b}
These subjects on SLE {{were}} developed and unified {{by Friedrich}}.\cite{Friedrich2009}

%affine_lie_algebra
\section{Affine Lie algebras and their representations}
\label{sect:rep_aff_alg}
In this section, we recall the notion of affine Lie algebras and their representation theory.
Let $\mfrak{g}$ be a {{finite-dimensional}} Lie algebra that is simple or commutative
and $(\cdot|\cdot):\mfrak{g}\times\mfrak{g}\to\mbb{C}$ be a nondegenerate symmetric invariant bilinear form on $\mfrak{g}$.
The affinization $\what{\mfrak{g}}$ of $\mfrak{g}$ is defined by $\what{\mfrak{g}}=\mfrak{g}\otimes \mbb{C}[\zeta,\zeta^{-1}]\oplus\mbb{C}K$
with Lie brackets being defined by
\begin{equation}
	[X(m),Y(n)]=[X,Y](m+n)+m(X|Y)\delta_{m+n,0}K,\ \ [K,\what{\mfrak{g}}]=\{0\},
\end{equation}
where we denote $X\otimes \zeta^{n}$ by $X(n)$ for $X\in\mfrak{g}$ and $n\in\mbb{Z}$.
Let $M$ be a {{finite-dimensional}} representation of the {{finite-dimensional}} Lie algebra $\mfrak{g}$.
Then we lift the action of $\mfrak{g}$ to an action of a Lie subalgebra $\mfrak{g}\otimes \mbb{C}[\zeta]\oplus\mbb{C}K$ of the affine Lie algebra
so that $\mfrak{g}\otimes \zeta^{0}$ acts naturally, $\mfrak{g}\otimes \zeta\mbb{C}[\zeta]$ acts trivially, and $K$ acts as multiplication by a complex number $k$.
Then we obtain a representation $\what{M}_{k}$ of the affine Lie algebra $\what{\mfrak{g}}$ by
\begin{equation}
	\what{M}_{k}=\mrm{Ind}_{\mfrak{g}\otimes\mbb{C}[\zeta]\oplus\mbb{C}K}^{\what{\mfrak{g}}}M=U(\what{\mfrak{g}})\otimes_{U(\mfrak{g}\otimes\mbb{C}[\zeta]\oplus\mbb{C}K)}M.
\end{equation}
Here {{the}} introduced complex number $k$ is called the level of the representation.
By the Poincar\'{e}-Birkhoff-Witt theorem, $\what{M}_{k}$ is isomorphic to $U(\mfrak{g}\otimes \zeta^{-1}\mbb{C}[\zeta^{-1}])\otimes_{\mbb{C}}M$ as a vector space
or a $U(\mfrak{g}\otimes \zeta^{-1}\mbb{C}[\zeta^{-1}])$-module.

To classify {{finite-dimensional}} irreducible representations of $\mfrak{g}$, we assume that $\mfrak{g}$ is simple in this paragraph.
We fix a Cartan subalgebra $\mfrak{h}$ of $\mfrak{g}$,
and let $\Pi^{\vee}=\{\alpha_{i}^{\vee},\cdots,\alpha_{\ell}^{\vee}\}\subset\mfrak{h}$ be the set of simple coroots of $\mfrak{g}$.
Then the fundamental weights $\Lambda_{i}\in\mfrak{h}^{\ast}$ for $i=1,\cdots,\ell$ are defined by $\braket{\Lambda_{i},\alpha_{j}^{\vee}}=\delta_{ij}$,
and span the weight lattice $P=\bigoplus_{i=1}^{\ell}\mbb{Z}\Lambda_{i}$.
A weight $\Lambda\in P$ is called dominant if $\braket{\Lambda,\alpha_{i}^{\vee}}\ge \mbb{Z}_{\ge 0}$ for all $i=1,\cdots,\ell$.
We denote the set of dominant weights by $P_{+}$.
Finite-dimensional irreducible representations of $\mfrak{g}$ are labeled by $P_{+}${{. Namely}},
for a dominant weight $\Lambda\in P_{+}$, there is a {{finite-dimensional}} irreducible representation $L(\Lambda)$ of $\mfrak{g}$
with highest weight $\Lambda$, and conversely, the highest weight of a {{finite-dimensional}} irreducible representation of $\mfrak{g}$ is dominant.
For an irreducible representation $L(\Lambda)$ of $\mfrak{g}$, we can construct a representation $\what{L(\Lambda)}_{k}$ of $\what{\mfrak{g}}$
in the manner described in the previous paragraph.
Note that although $L(\Lambda)$ is irreducible as a representation of $\mfrak{g}$,
$\what{L(\Lambda)}_{k}$ is not necessarily an irreducible representation of $\what{\mfrak{g}}$,
then we denote by $L_{\mfrak{g}}(\Lambda,k)$ the irreducible quotient of $\what{L(\Lambda)}_{k}$ as a representation of $\what{\mfrak{g}}$.

In {{the case when}} $\mfrak{g}$ is commutative, the representation theory is {{simpler}}:
an irreducible representation $L(\Lambda)$ of $\mfrak{g}$ is one dimensional and labeled by an element $\Lambda\in \mfrak{g}^{\ast}$ so that
an element $X\in\mfrak{g}$ acts as $\braket{\Lambda,X}$ times the identity operator.
The corresponding representation $\what{L(\Lambda)}_{k}$ of $\what{\mfrak{g}}$, which we denote by $L_{\mfrak{g}}(\Lambda,k)$ is a Fock representation and irreducible.
Notice that Fock representations $L_{\mfrak{g}}(\Lambda,k)$ are all isomorphic if $k\neq 0$, thus we think that $k=1$ in $L_{\mfrak{g}}(\Lambda,k)$
if the finite-dimensional Lie algebra $\mfrak{g}$ is commutative.

On a representation space $L_{\mfrak{g}}(\Lambda,k)$ of an affine Lie algebra $\what{\mfrak{g}}$ constructed above,
we can define an action of the Virasoro algebra through the Segal-Sugawara construction.
We normalize the bilinear form so that $(\theta|\theta)=2$ if $\mfrak{g}$ is simple, where $\theta$ is the highest root of $\mfrak{g}$.
We define a number $h^{\vee}_{\mfrak{g}}$ by the dual Coxter number $h^{\vee}$ of $\mfrak{g}$ if $\mfrak{g}$ is simple,
and by $0$ if $\mfrak{g}$ is commutative, 
and assume that $k\neq -h^{\vee}_{\mfrak{g}}$.
Let $\{X_{a}\}_{a=1}^{\dim\mfrak{g}}$ be an orthonormal basis of $\mfrak{g}$ with respect to $(\cdot|\cdot)$.
Then the operators $L_{n}$ for $n\in\mbb{Z}$ acting on $L_{\mfrak{g}}(\Lambda,k)$ that are defined by
\begin{equation}
	\label{eq:segal_sugawara}
	L_{n}=\frac{1}{2(k+h^{\vee}_{\mfrak{g}})}\sum_{a=1}^{\dim\mfrak{g}}\sum_{k\in\mbb{Z}}\no{X_{a}(n-k)X_{a}(k)}
\end{equation}
give an action of the Virasoro algebra of central charge $c_{\mfrak{g},k}=\frac{k\dim\mfrak{g}}{k+h^{\vee}_{\mfrak{g}}}$.
Here the normal ordered product $\no{A(p)B(q)}$ is defined by $A(p)B(q)$ for $p<q$ and $B(q)A(p)$ for $p\ge q$.
Moreover a vector $v_{\Lambda}\in L(\Lambda)\hookrightarrow L_{\mfrak{g},k}(\Lambda)$ is an eigenvector of $L_{0}$ corresponding to 
an eigenvalue $h_{\Lambda}=\frac{(\Lambda|\Lambda+2\rho_{\mfrak{g}})}{2(k+h_{\mfrak{g}}^{\vee})}$, with $\rho_{\mfrak{g}}=\sum_{i=1}^{\ell}\Lambda_{i}$ if $\mfrak{g}$ is simple and
$\rho_{\mfrak{g}}=0$ if $\mfrak{g}$ is commutative.
{{The operator}} $L_{0}$ is diagonalizable on $L_{\mfrak{g}}(\Lambda,k)$ so that $L_{\mfrak{g}}(\Lambda,k)=\bigoplus_{n\in\mbb{Z}_{\ge 0}}L_{\mfrak{g}}(\Lambda,k)_{h_{\Lambda}+n}$
with each $L_{\mfrak{g}}(\Lambda,k)_{h}$ being the eigenspace of $L_{0}$ corresponding to an eigenvalue $h$.
{{This}} action of the Virasoro algebra is compatible with the action of $\what{\mfrak{g}}$
in the sense that $[L_{n},A\otimes f(\zeta)]=-A\otimes \zeta^{n+1}\frac{df(\zeta)}{d\zeta}$.

Among representations $L_{\mfrak{g}}(\Lambda,k)$, we can equip $L_{\mfrak{g}}(0,k)$ with a VOA structure.
The vacuum vector is $\ket{0}=1\otimes \vc{1}$, where $\vc{1}$ spans a one-dimensional representation $L(0)$ of $\mfrak{g}$.
Let $\{X_{a}\}_{a=1}^{\dim\mfrak{g}}$ be a basis of $\mfrak{g}$, then this VOA is strongly generated by vectors $X_{a}(-1)\ket{0}$.
In the following, we call this VOA the affine VOA of $\mfrak{g}$ with level $k$ and denote it by $L_{\mfrak{g},k}$.
Simple modules over $L_{\mfrak{g},k}$ are realized as highest weight representations $L_{\mfrak{g}}(\Lambda,k)$ of the same level.
For an $L_{\mfrak{g},k}$-module $M$, the invariance in Eq.(\ref{eq:invariant_form}) of a nondegenerate bilinear form $\braket{\cdot|\cdot}:M\times M\to\mbb{C}$ is rephrased as
$\braket{X(n)u|v}=-\braket{u|X(-n)v}$ for $u,v \in M$, $X\in\mfrak{g}$ and $n\in\mbb{Z}$.
Such an invariant bilinear form is specified on an irreducible representation $L_{\mfrak{g}}(\Lambda,k)$ by the normalization $\braket{v_{\Lambda}|v_{\Lambda}}=1$
with $v_{\Lambda}$ being the highest weight vector.

%internal_symmetry
\section{Internal symmetry}
\label{sect:internal_symmetry}
We again assume that $\mfrak{g}$ is a {{finite-dimensional}} complex Lie algebra that is simple or commutative.
Let $G$ be a {{finite-dimensional}} complex Lie group of which {{the}} Lie algebra is $\mfrak{g}$,
{\it i.e.}, it is a simple Lie group if $\mfrak{g}$ is simple and a torus if $\mfrak{g}$ is commutative.
To construct a generalization of SLE associated with a representation of an affine Lie algebra $\what{\mfrak{g}}$,
we consider the positive loop group $G(\mcal{O})=G[[\zeta^{-1}]]$ of $G$ as a group of internal symmetry.
A significant subgroup $G_{+}(\mcal{O})$ consists of elements that are the unit element modulo $G[[\zeta^{-1}]]\zeta^{-1}$.
The Lie algebras of $G(\mcal{O})$ and $G_{+}(\mcal{O})$ are $\mfrak{g}[[\zeta^{-1}]]$ and $\mfrak{g}[[\zeta^{-1}]]\zeta^{-1}$, respectively.
The group of automorphisms $\mrm{Aut}\mcal{O}$ acts on $G(\mcal{O})$ to define a semi-direct product $\mrm{Aut}\mcal{O}\ltimes G(\mcal{O})$.
Moreover, the subgroup $\mrm{Aut}_{+}\mcal{O}$ normalizes $G_{+}(\mcal{O})$, thus their semi-direct product $\mrm{Aut}_{+}\mcal{O}\ltimes G_{+}(\mcal{O})$ is also defined.

On a representation $L_{\frak{g}}(\Lambda,k)$ of the affine Lie algebra $\what{\mfrak{g}}$,
the Lie algebra $\mfrak{g}\otimes \mbb{C}[[\zeta^{-1}]]$ cannot act,
but its formal completion $\overline{L_{\mfrak{g}}(\Lambda,k)}=\prod_{n\in\mbb{Z}_{\ge 0}}L_{\mfrak{g}}(\Lambda,k)_{h_{\Lambda}+n}$
admits an action of $\mfrak{g}\otimes \mbb{C}[[\zeta^{-1}]]$.
It is also obvious that the action of $\mfrak{g}\otimes \mbb{C}[[\zeta^{-1}]]$ is exponentiated to define an action of $G(\mcal{O})$.
Indeed, an element in $\mfrak{g}\otimes \zeta^{-1}\mbb{C}[[\zeta^{-1}]]$ strictly raises {{the}} degree,
and a zero-mode element $X\otimes \zeta^{0}$ is exponentiated to be an action of $e^{X}\in G$
while each homogeneous space is a representation of the {{finite-dimensional}} Lie group $G$.
Moreover, this action of $G(\mcal{O})$ is compatible with the action of $\mrm{Aut}\mcal{O}$ due to the Segal-Sugawara construction.
Thus $\mrm{Aut}_{+}\mcal{O}\ltimes G_{+}(\mcal{O})$ acts on $\overline{L_{\mfrak{g}}(\Lambda,k)}$.

We investigate how each field is transformed under the adjoint action of $e^{\bf a}$ where ${\bf a}=A\otimes a(\zeta)\in\mfrak{g}\otimes \mbb{C}[[\zeta^{-1}]]$.
We compute the commutator $[{\bf a},Y(B,w)]$ for $B\in L_{\mfrak{g},k}$.
From the {{operator product expansion (OPE)}} formula
\begin{equation}
	[Y(A(-1)\ket{0},z),Y(B,w)]=\sum_{k\ge 0}Y(A(k)B,w)\del_{w}^{(k)}\delta(z-w),
\end{equation}
we obtain
\begin{equation}
	[A(n),Y(B,w)]=\sum_{k\ge 0}\binom{n}{k}w^{n-k}Y(A(k)B,w).
\end{equation}
Thus the desired commutator is computed as
\begin{equation}
	[{\bf a},Y(B,w)]=Y({\bf a}_{w}B,w),
\end{equation}
where ${\bf a}_{w}=\sum_{k\ge 0}\del^{(k)}a(w)A(k)$.
This enables us to obtain the following transformation formula{{:}}
\begin{equation}
	Y(B,w)=e^{\bf a}Y(e^{-{\bf a}_{w}}B,w)e^{-{\bf a}}
\end{equation}

Now we compute $e^{-{\bf a}_{w}}X(-1)\ket{0}$ for some $X\in \mfrak{g}$
to investigate the transformation rule of $Y(X(-1)\ket{0},z)$ under the adjoint action by $e^{{\bf a}}$.
The action of ${\bf a}_{w}$ on $X(-1)\ket{0}$ gives
\begin{align}
	{\bf a}_{w}X(-1)\ket{0}
		&=a(w)(\mrm{ad}A)(X)(-1)\ket{0}+k(A|X)\del a(w)\ket{0}.
\end{align}
Applying ${\bf a}_{w}$ once more{{:}}
\begin{align}
	{\bf a}_{w}^{2}X(-1)\ket{0}
		&=a(w)^{2}(\mrm{ad}A)^{2}(X)(-1)\ket{0},
\end{align}
where we have used the invariance of the bilinear form $(A|[A,X])=([A,A]|X)=0$,
and inductively,
\begin{equation}
	{\bf a}_{w}^{n}X(-1)\ket{0}=a(w)^{n}(\mrm{ad}A)^{n}(X)(-1)\ket{0}
\end{equation}
for $n\ge 2$.
Thus we can see that
\begin{equation}
	e^{-{\bf a}_{w}}X(-1)\ket{0}
	=(e^{-a(w)\mrm{ad}A}X)(-1)\ket{0}-k(A|X)\del a(w)\ket{0},
\end{equation}
which implies that
\begin{equation}
	\label{eq:internal_twist_field}
	Y(X(-1)\ket{0},w)=e^{\bf a}Y\left((e^{-a(w)\mrm{ad}A}X)(-1)\ket{0},w\right)e^{-{\bf a}}-k(A|X)\del a(w).
\end{equation}
It is also convenient to {{note}} the formula for the object {{in the form}} $e^{-{\bf a}}X\otimes x(\zeta)e^{{\bf a}}$,
where ${\bf a}=A\otimes a(\zeta)\in\mfrak{g}\otimes\mbb{C}[[\zeta^{-1}]]$, $x(\zeta)\in\mbb{C}((\zeta^{-1}))$ and $X\in\mfrak{g}$ are taken as above.
{{This}} becomes
\begin{align}
	\label{eq:internal_twist_algebra}
	e^{-{\bf a}}X\otimes x(\zeta)e^{\bf a}
		&=\mrm{Res}_{w}\sum_{n\in\mbb{Z}}(e^{-a(w)\mrm{ad}A}X)\otimes \zeta^{n}w^{-n-1}x(w) 
		-k(A|X)\mrm{Res}_{w}\del a(w)x(w) \notag \\
		&=\sum_{m=0}^{\infty}\frac{(-1)^{m}}{m!}(\mrm{ad}A)^{m}(X)\otimes a(\zeta)^{m}x(\zeta)
		-k(A|X)\mrm{Res}_{w}\del a(w)x(w).
\end{align}

We next investigate the transformation rule of the Virasoro field $L(z)$ under the action of $G(\mcal{O})$.
{{We}} compute $e^{-{\bf a}_{z}}L_{-2}\ket{0}$ where ${\bf a}=A\otimes a(\zeta)\in \mfrak{g}\otimes\mbb{C}[[\zeta^{-1}]]$
and correspondingly ${\bf a}_{z}=\sum_{k\ge 0}\del^{(k)}a(z)A(k)$.
{{Note}} that the OPE
\begin{equation}
	[L(z),Y(A(-1)\ket{0},w)]=Y(A(-1)\ket{0},w)\del_{w}\delta(z-w)+\del Y(A(-1)\ket{0},w)\delta(z-w)
\end{equation}
is equivalent to
\begin{equation}
	[Y(A(-1)\ket{0},z),L(w)]=Y(A(-1)\ket{0},w)\del_{w}\delta(z-w),
\end{equation}
which implies
\begin{equation}
	A(n)L_{-2}\ket{0}=
	\begin{cases}
		A(-1)\ket{0}, 	& n=1,\\
		0, & n\in\mbb{Z}_{\ge 0}\backslash \{1\}.
	\end{cases}
\end{equation}
Thus
\begin{equation}
	-{\bf a}_{z}L_{-2}\ket{0}=-\del a(z) A(-1)\ket{0}.
\end{equation}
If we apply $-{\bf a}_{z}$ {{one}} more time,
\begin{equation}
	(-{\bf a}_{z})^{2}L_{-2}\ket{0}=k(\del a(z))^{2}(A|A)\ket{0}.
\end{equation}
Then we obtain the following transformation formula:
\begin{equation}
	\label{eq:virasoro_internal}
	L(z)=e^{\bf a}L(z)e^{-{\bf a}}-\del a(z)e^{\bf a}Y(A(-1)\ket{0},z)e^{-{\bf a}}+\frac{k(A|A)(\del a(z))^{2}}{2}.
\end{equation}

\subsection{{{Formulas}} in {{the}} case of commutative $\mfrak{g}$}

Let us {{note the formilas}} in Eq.(\ref{eq:internal_twist_field}) and Eq.(\ref{eq:internal_twist_algebra}) in a more explicit way in {{the case when}} $\mfrak{g}$ is commutative.
{{Now,}} we have $[A,X]=0$ for any $A, X\in \mfrak{g}$, which implies that
\begin{align}
	\label{eq:current_internal_commutative}
	X(z)&=e^{\bf a}X(z)e^{-{\bf a}}-k(A|X)\del a(z),\\
	e^{-{\bf a}}X\otimes x(\zeta)e^{\bf a}&=X\otimes x(\zeta)-k(A|X)\mrm{Res}_{w}\del a(w)x(w).
\end{align}

\subsection{{{Formulas}} in $\mfrak{g}=\mfrak{sl}_{2}$ case}
\label{subsect:formulae_sl2}
We now {{focus}} our attention on the case of $\mfrak{g}=\mfrak{sl}_{2}$ and explicitly {{note the formulas in}}
Eq.(\ref{eq:internal_twist_field}) and Eq.(\ref{eq:internal_twist_algebra}).
We take as a standard basis of $\mfrak{sl}_{2}$
\begin{align}
	E&=\left(\begin{array}{cc}
			0 & 1\\
			0 & 0
		\end{array}\right),&
	H&=\left(\begin{array}{cc}
			1 & 0\\
			0 & -1
		\end{array}\right), &
	F&=\left(\begin{array}{cc}
			0 & 0\\
			1 & 0
		\end{array}\right),
\end{align}
and denote $E\otimes e(\zeta)$, $H\otimes h(\zeta)$ and $F\otimes f(\zeta)$ for $e(\zeta), h(\zeta), f(\zeta)\in \mbb{C}[[\zeta^{-1}]]$ simply by
${\bf e}$, ${\bf h}$ and ${\bf f}$, respectively.
We also write a current field $Y(X(-1)\ket{0},z)$ by $X(z)$ for $X\in\mfrak{g}$.

\begin{enumerate}
\item 	$X=A=H$.
		\begin{align*}
			H(z)&=e^{\bf h}H(z)e^{-{\bf h}}-2k\del h(z), \\
			e^{-{\bf h}}H\otimes x(\zeta)e^{\bf h}&=H\otimes x(\zeta)-2k\mrm{Res}_{w}\del h(w)x(w).
		\end{align*}
\item 	$X=H$, $A=E$.
		\begin{align*}
			H(z)&=e^{\bf e}H(z)e^{-{\bf e}}+2e(z)E(z),\\
			e^{-{\bf e}}H\otimes x(\zeta)e^{\bf e}&=H\otimes x(\zeta)+2E\otimes e(\zeta)x(\zeta).
		\end{align*}
\item 	$X=H$, $A=F$.
		\begin{align*}
			H(z)&=e^{\bf f}H(z)e^{-{\bf f}}-2f(z)F(z),\\
			e^{-{\bf f}}H\otimes x(\zeta)e^{\bf f}&=H\otimes x(\zeta)-2F\otimes f(\zeta)x(\zeta).
		\end{align*}
\item 	$X=E$, $A=H$.
		\begin{align*}
			E(z)&=e^{-2h(z)}e^{\bf h}E(z)e^{-{\bf h}}, \\
			e^{-{\bf h}}E\otimes x(\zeta)e^{\bf h}&=E\otimes e^{-2h(\zeta)}x(\zeta).
		\end{align*}
\item 	$X=A=E$.
		\begin{align*}
			E(z)&=e^{\bf e}E(z)e^{-{\bf e}}, \\
			e^{-{\bf e}}E\otimes x(\zeta)e^{\bf e}&=E\otimes x(\zeta).
		\end{align*}
\item 	$X=E$, $A=F$.
		\begin{align*}
			E(z)&=e^{\bf f}E(z)e^{-{\bf f}}+f(z)e^{\bf f}H(z)e^{-{\bf f}}-f(z)^{2}e^{\bf f}F(z)e^{-{\bf f}}-k\del f(z),\\
			e^{-{\bf f}}E\otimes x(\zeta)e^{\bf f}&=E\otimes x(\zeta)+H\otimes f(\zeta)x(\zeta)-F\otimes f(\zeta)^{2}x(\zeta)-k\mrm{Res}_{w}\del f(w)x(w).
		\end{align*}
\item 	$X=F$, $A=H$.
		\begin{align*}
			F(z)&=e^{2h(z)}e^{\bf h}F(z)e^{-{\bf h}}, \\
			e^{-{\bf h}}F\otimes x(\zeta)e^{\bf h}&=F\otimes e^{2h(\zeta)}x(\zeta).
		\end{align*}
\item 	$X=F$, $A=E$.
		\begin{align*}
			F(z)&=e^{\bf e}F(z)e^{-{\bf e}}-e(z)e^{\bf e}H(z)e^{-{\bf e}}-e(z)^{2}E(z)-k\del e(z),\\
			e^{-{\bf e}}F\otimes x(\zeta)e^{\bf e}&=F\otimes x(\zeta) -H \otimes e(\zeta)x(\zeta)-E\otimes e(\zeta)^{2}x(\zeta)-k\mrm{Res}_{w}\del e(w)x(w).
		\end{align*}
\item 	$X=A=F$.
		\begin{align*}
			F(z)&=e^{\bf f}F(z)e^{-{\bf f}}, \\
			e^{-{\bf f}}F\otimes x(\zeta)e^{\bf f}&=F\otimes x(\zeta).
		\end{align*}
\end{enumerate}

%random_process
\section{Construction of a random process}
\label{sect:random_process}
In this section, we construct a random process on the {{infinite-dimensional}} Lie group $\mrm{Aut}_{+}\mcal{O}\ltimes G_{+}(\mcal{O})$,
which {{was}} introduced in the previous Sect. \ref{sect:internal_symmetry}.
It is a natural generalization of the random process on $\mrm{Aut}_{+}\mcal{O}$,
which was the fundamental object in the group theoretical formulation of {{SLEs}} in Sect. \ref{sect:group_theoretical},
to a case with internal symmetry.

\subsection{General Lie algebras $\mfrak{g}$}
We shall construct a random process that is a generalization of {{SLEs}} with internal symmetry described by $G_{+}(\mcal{O})$.
Such a random process is expected to be induced from a random process on an {{infinite-dimensional}} Lie group $\mrm{Aut}_{+}\mcal{O}\ltimes G_{+}(\mcal{O})$.
To decide a direction {{for}} designing a random process on this group, we first make an observation on an annihilator of the vacuum vector
in the vacuum representation $L_{\mfrak{g},k}$.
Since we have defined a representation of the Virasoro algebra by the Segal-Sugawara construction,
$L_{-2}\ket{0}=\frac{1}{2(k+h_{\mfrak{g}}^{\vee})}\sum_{r=1}^{\dim\mfrak{g}}X_{r}(-1)^{2}\ket{0}$.
Combining the fact that the vacuum vector is translation invariant, we see that the operator
\begin{equation}
	-2L_{-2}+\frac{\kappa}{2}L_{-1}^{2}+\frac{1}{k+h_{\mfrak{g}}^{\vee}}\sum_{r=1}^{\dim\mfrak{g}}X_{r}(-1)^{2}
\end{equation}
annihilates the vacuum vector for arbitrary $\kappa$.
We now assume that the highest weight vector $v_{\Lambda}$ of a representation $L_{\mfrak{g}}(\Lambda,k)$ is annihilated by an operator of the form
\begin{equation}
	\label{eq:annihilator_affine}
	-2L_{-2}+\frac{\kappa}{2}L_{-1}^{2}+\frac{\tau}{2}\sum_{r=1}^{\dim\mfrak{g}}X_{r}(-1)^{2}
\end{equation}
with parameters $\kappa$ and $\tau$ being finely tuned positive numbers.
The existence of such an annihilator of the above form will be discussed later in Sect. \ref{sect:annihilator}.

We consider a random process $\scr{G}_{t}$ on $\mrm{Aut}_{+}\mcal{O}\ltimes G_{+}(\mcal{O})$ that satisfies the following SDE
\begin{equation}
	\label{eq:random_process_affine}
	\scr{G}_{t}^{-1}d\scr{G}_{t}=\left(-2L_{-2}+\frac{\kappa}{2}L_{-1}^{2}+\frac{\tau}{2}\sum_{r=1}^{\dim\mfrak{g}}X_{r}(-1)^{2}\right)dt
		+L_{-1}dB_{t}^{(0)}+\sum_{r=1}^{\dim\mfrak{g}}X_{r}(-1)dB_{t}^{(r)},
\end{equation}
where $B_{t}^{(i)}$ for $i=0,1,\cdots,\dim\mfrak{g}$ are mutually independent Brownian motions with variance $\kappa$ for $B_{t}^{(0)}$
and $\tau$ for $B_{t}^{(r)}$ with $r=1,\cdots,\dim\mfrak{g}$.
{{An idea for}} considering a random process on such an {{infinite-dimensional}} Lie group as $\mrm{Aut}_{+}\mcal{O}\ltimes G_{+}(\mcal{O})$
has already appeared in the work by {{Rasmussen}}.\cite{Rasmussen2007}
but it lacks an SDE based on an annihilating operator,
and it does not include the classical SLE in the coordinate transformation part.

\begin{prop}
\label{prop:martingale_affine}
Assume that the highest weight vector $v_{\Lambda}$ of $L_{\mfrak{g}}(\Lambda,k)$ is annihilated by the operator in Eq.(\ref{eq:annihilator_affine}).
Then for a random process $\scr{G}_{t}$ on $\mrm{Aut}_{+}\mcal{O}\ltimes G_{+}(\mcal{O})$ satisfying Eq.(\ref{eq:random_process_affine}),
the random process $\scr{G}_{t}v_{\Lambda}$ in $\overline{L_{\mfrak{g}}(\Lambda,k)}$ is a local martingale.
\end{prop}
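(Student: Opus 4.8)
The strategy mirrors the Virasoro computation from Section~\ref{sect:group_theoretical}: the point is that applying $\scr{G}_t$ to $v_\Lambda$ produces a random process in $\overline{L_{\mfrak{g}}(\Lambda,k)}$ whose Ito differential is $\scr{G}_t$ applied to a vector built out of the coefficients of the SDE~\eqref{eq:random_process_affine} acting on $v_\Lambda$, and those coefficients are designed precisely so that the drift term vanishes. First I would invoke the general formalism for an Ito process on a Lie group (Appendix~\ref{sect:app_ito_lie_group}): since $\scr{G}_t$ satisfies $\scr{G}_t^{-1}d\scr{G}_t = \mcal{D}\,dt + L_{-1}\,dB_t^{(0)} + \sum_r X_r(-1)\,dB_t^{(r)}$ with $\mcal{D} = -2L_{-2} + \tfrac{\kappa}{2}L_{-1}^2 + \tfrac{\tau}{2}\sum_r X_r(-1)^2$, the induced process $\scr{G}_t v_\Lambda$ on the representation space has differential
\begin{equation}
	d(\scr{G}_t v_\Lambda) = \scr{G}_t\Big(\mcal{D}\,v_\Lambda\,dt + L_{-1}v_\Lambda\,dB_t^{(0)} + \sum_{r=1}^{\dim\mfrak{g}} X_r(-1)v_\Lambda\,dB_t^{(r)}\Big).
\end{equation}
Here one must be careful that this is the correct Ito (not Stratonovich) form: the second-order Ito correction coming from the quadratic variation of the Brownian motions is exactly what converts the naive generator into the operator $\mcal{D}$ appearing on the right-hand side of~\eqref{eq:random_process_affine}. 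I would either cite Appendix~\ref{sect:app_ito_lie_group} for this or spell out the one-line Ito expansion $\scr{G}_{t+dt} = \scr{G}_t(1 + \scr{G}_t^{-1}d\scr{G}_t + \tfrac12(\scr{G}_t^{-1}d\scr{G}_t)^2 + \cdots)$ and collect the $dt$ terms using $dB_t^{(0)}\,dB_t^{(0)} = \kappa\,dt$, $dB_t^{(r)}\,dB_t^{(r)} = \tau\,dt$, and independence.

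The key algebraic input is the hypothesis: $v_\Lambda$ is annihilated by the operator in~\eqref{eq:annihilator_affine}, which is exactly $\mcal{D}$. Hence $\mcal{D}\,v_\Lambda = 0$, and the $dt$-term in the differential above vanishes identically, leaving
\begin{equation}
	d(\scr{G}_t v_\Lambda) = \scr{G}_t L_{-1}v_\Lambda\,dB_t^{(0)} + \sum_{r=1}^{\dim\mfrak{g}} \scr{G}_t X_r(-1)v_\Lambda\,dB_t^{(r)},
\end{equation}
i.e.\ the process has no drift. A driftless continuous semimartingale is a local martingale, so $\scr{G}_t v_\Lambda$ is a local martingale, understood componentwise: for each homogeneous component (equivalently, after pairing with any vector in the restricted dual) the resulting scalar process is a genuine local martingale in the usual sense, and this is the meaning of "local martingale in $\overline{L_{\mfrak{g}}(\Lambda,k)}$".

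The main obstacle is not the algebra but making sense of "$\scr{G}_t v_\Lambda$ is a local martingale in the infinite-dimensional completion $\overline{L_{\mfrak{g}}(\Lambda,k)} = \prod_n L_{\mfrak{g}}(\Lambda,k)_{h_\Lambda+n}$": one needs to check that $\scr{G}_t$ genuinely acts on the completion (which follows, as noted in Section~\ref{sect:internal_symmetry}, because the degree-raising generators $L_{-j}$, $X_r(-j)$ with $j\ge 1$ act locally nilpotently on each homogeneous piece while the $\mbb{Z}_{\ge 0}$-grading is bounded below), that the stochastic differential makes sense degree by degree, and that the local martingale property is to be interpreted on each graded component or equivalently against the algebraic dual. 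The cleanest route is to fix a homogeneous degree $N$, observe that the component of $\scr{G}_t v_\Lambda$ in degree $h_\Lambda + N$ depends only on finitely many of the Brownian motions' iterated integrals and on a finite truncation of $\scr{G}_t$, apply the finite-dimensional Ito calculus there, and conclude that each such component is a local martingale; I would remark that the same reasoning gives local martingales for any matrix element $\braket{w|\scr{G}_t v_\Lambda}$ with $w$ in the restricted dual, which is the form actually used in Section~\ref{sect:martingale}.
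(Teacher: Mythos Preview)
Your argument is correct and is exactly the intended one: the paper does not supply a separate proof of this proposition, treating it as immediate from the form of the SDE~\eqref{eq:random_process_affine} together with the annihilation hypothesis, in direct analogy with the Virasoro computation $d(Q(\rho_{t})\ket{c,h})=Q(\rho_{t})\bigl((-2L_{-2}+\tfrac{\kappa}{2}L_{-1}^{2})\ket{c,h}\,dt+L_{-1}\ket{c,h}\,dB_{t}\bigr)$ in Section~\ref{sect:group_theoretical}. Your additional care about the Ito correction, the componentwise interpretation in the completion, and the pairing with the restricted dual is more detail than the paper offers but entirely in line with its framework.
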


We can write the random process $\scr{G}_{t}$ as $\scr{G}_{t}=\Theta_{t}Q(\rho_{t})$
where the random process $\rho_{t}$ on $\mrm{Aut}_{+}\mcal{O}$ induces the SLE$(\kappa)$
and $\Theta_{t}$ is a random process on $G_{+}(\mcal{O})$.

\begin{prop}
Under the ansatz $\scr{G}_{t}=\Theta_{t}Q(\rho_{t})$ described above,
the random process $\Theta_{t}$ on $G_{+}(\mcal{O})$ satisfies the SDE
\begin{equation}
	\label{eq:sde_internal_sym}
	\Theta_{t}^{-1}d\Theta_{t}=\frac{\tau}{2}\sum_{r=1}^{\dim\mfrak{g}}(X_{r}\otimes \rho_{t}(\zeta)^{-1})^{2}dt +\sum_{r=1}^{\dim\mfrak{g}}X_{r}\otimes \rho_{t}(\zeta)^{-1} dB_{t}^{(r)}.
\end{equation}
\end{prop}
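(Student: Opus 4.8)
The plan is to substitute the ansatz $\scr{G}_{t}=\Theta_{t}Q(\rho_{t})$ into the defining SDE~(\ref{eq:random_process_affine}) and read off the equation for $\Theta_{t}$. First I would compute $\scr{G}_{t}^{-1}d\scr{G}_{t}$ using the Ito product rule on the Lie group (Appendix~\ref{sect:app_ito_lie_group}): writing $d\scr{G}_{t}=(d\Theta_{t})Q(\rho_{t})+\Theta_{t}\,dQ(\rho_{t})+d\Theta_{t}\,dQ(\rho_{t})$, one gets
\begin{equation}
	\scr{G}_{t}^{-1}d\scr{G}_{t}=Q(\rho_{t})^{-1}\left(\Theta_{t}^{-1}d\Theta_{t}+\Theta_{t}^{-1}(d\Theta_{t})\Theta_{t}^{-1}\cdot(\text{cross term})\right)Q(\rho_{t})+Q(\rho_{t})^{-1}dQ(\rho_{t}).
\end{equation}
Since $\rho_{t}$ generates the ordinary SLE$(\kappa)$, we already know from Sect.~\ref{sect:group_theoretical} that $Q(\rho_{t})^{-1}dQ(\rho_{t})=\left(-2L_{-2}+\frac{\kappa}{2}L_{-1}^{2}\right)dt+L_{-1}dB_{t}^{(0)}$. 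Subtracting this from~(\ref{eq:random_process_affine}) isolates the contribution of $\Theta_{t}$, namely
\begin{equation}
	Q(\rho_{t})^{-1}\left(\Theta_{t}^{-1}d\Theta_{t}\right)Q(\rho_{t})+(\text{Ito correction})=\frac{\tau}{2}\sum_{r}X_{r}(-1)^{2}dt+\sum_{r}X_{r}(-1)dB_{t}^{(r)}.
\end{equation}

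The heart of the computation is then to conjugate the right-hand side generators back through $Q(\rho_{t})$, i.e. to evaluate $Q(\rho_{t})X_{r}(-1)Q(\rho_{t})^{-1}$. This is exactly the adjoint-action formula developed in Sect.~\ref{sect:internal_symmetry}: $Q(\rho)$ acts on the affine modes just as $R(\rho)^{-1\,\ast}$, and the field-transformation identity~(\ref{eq:internal_twist_field}) together with the change of coordinates $z=1/w$ tells us that the mode $X_{r}(-1)=X_{r}\otimes\zeta^{-1}$ (a primary-type current mode, since $X_{r}(-1)\ket{0}$ has conformal weight $1$) transforms by $X_{r}\otimes\zeta^{-1}\mapsto X_{r}\otimes\rho_{t}(\zeta)^{-1}$ — the weight-$1$ Jacobian factor producing precisely $\rho_t'(\zeta)\,d(\rho_t(\zeta)^{-1})$-type reparametrizations, which assemble into $X_r\otimes\rho_t(\zeta)^{-1}$ as a single element of $\mfrak{g}\otimes\mbb{C}((\zeta^{-1}))$. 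Conjugating the quadratic term $X_r(-1)^2$ gives $(X_r\otimes\rho_t(\zeta)^{-1})^2$. Hence both the drift and the diffusion coefficients acquire the substitution $X_r(-1)\rightsquigarrow X_r\otimes\rho_t(\zeta)^{-1}$, yielding~(\ref{eq:sde_internal_sym}).

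The step I expect to be the main obstacle is handling the Ito cross-terms carefully. Because $\scr{G}_{t}=\Theta_{t}Q(\rho_{t})$ is a product of two non-commuting random group elements, the quadratic covariation $d\Theta_t\,dQ(\rho_t)$ does not vanish a priori; one must check that the $B_t^{(r)}$ (driving $\Theta_t$) are independent of $B_t^{(0)}$ (driving $\rho_t$), so that only the self-covariations of $\Theta_t$ survive, and these are already accounted for in the $\frac{\tau}{2}\sum_r(\cdot)^2$ drift. A secondary subtlety is verifying that the operator-valued conjugation $Q(\rho_t)(\,\cdot\,)Q(\rho_t)^{-1}$ commutes with taking the Ito differential — that is, that $d\bigl(Q(\rho_t)X_r(-1)Q(\rho_t)^{-1}\bigr)$ has no extra drift beyond differentiating the $c$-number coefficients in $\rho_t(\zeta)^{-1}$ — which follows because $X_r(-1)$ is a fixed operator and $Q$ depends on $t$ only through the (finitely many, in each graded piece) coefficients of $\rho_t$. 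Once these bookkeeping points are settled, matching coefficients of $dt$ and of each $dB_t^{(r)}$ gives the claimed SDE for $\Theta_t$.
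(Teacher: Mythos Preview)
Your approach is correct and coincides with the paper's: factor $\scr{G}_{t}=\Theta_{t}Q(\rho_{t})$, peel off the known Virasoro piece $Q(\rho_{t})^{-1}dQ(\rho_{t})$, and conjugate the remaining affine generators through $Q(\rho_{t})$. The paper's proof is much terser than yours---it simply states the conjugation identity $Q(\rho)\,X\otimes f(\zeta)\,Q(\rho)^{-1}=X\otimes f(\rho(\zeta))$ as a direct consequence of $[L_{n},X(m)]=-mX(n+m)$ (the Segal--Sugawara compatibility relation at the end of Sect.~\ref{sect:rep_aff_alg}) and then specializes to $f(\zeta)=\zeta^{-1}$.

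One small correction: you cite Eq.~(\ref{eq:internal_twist_field}) from Sect.~\ref{sect:internal_symmetry}, but that formula concerns the adjoint action of $e^{\bf a}$ with ${\bf a}\in\mfrak{g}\otimes\mbb{C}[[\zeta^{-1}]]$, i.e.\ the $G_{+}(\mcal{O})$ part, not the $\mrm{Aut}_{+}\mcal{O}$ part. The conjugation by $Q(\rho)$ you need comes instead from the Virasoro commutation with affine modes. Your observation that $X_{r}(-1)\ket{0}$ is primary of weight~$1$ is the right intuition and would also yield the formula via Prop.~\ref{prop:transformation_zero}, but the one-line derivation from $[L_{n},X(m)]=-mX(n+m)$ is more direct. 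Your discussion of the Ito cross-terms (vanishing by independence of $B_{t}^{(0)}$ from the $B_{t}^{(r)}$) is a useful detail that the paper leaves implicit.
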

\begin{proof}
The action of the Virasoro algebra on an affine Lie algebra, which is described by the relation $[L_{n},X(m)]=-mX(n+m)$,
implies the transformation formula
\begin{equation}
	G(\rho)X\otimes f(\zeta)G(\rho)^{-1}=X\otimes f(\rho(\zeta))
\end{equation}
for $f(\zeta)\in\mbb{C}((\zeta^{-1}))$ and $\rho\in\mrm{Aut}_{+}\mcal{O}$.
If we apply this formula in the case {{when}} $f(\zeta)=\zeta^{-1}$, we obtain the desired result.
\end{proof}

{{Equation}} (\ref{eq:sde_internal_sym}) has already appeared in an equivalent form in the correlation function formulation of SLEs corresponding to WZW models. \cite{BettelheimGruzbergLudwigWiegmann2005, AlekseevBytskoIzyurov2011}
Let $\mcal{Y}(-,z)$ be an intertwining operator of type $\binom{L_{\mfrak{g}}(\Lambda_{3},k)}{L_{\mfrak{g}}(\Lambda_{1},k)\ \ L_{\mfrak{g}}(\Lambda_{2},k)}$,
and $v\in L(\Lambda_{1})$ be a primary vector in the top space of $L_{\mfrak{g}}(\Lambda_{1},k)$.
If we take {{the}} adjoint of the primary field $\mcal{Y}(v,z)$ by $\scr{G}_{t}^{-1}$, we obtain
\begin{equation}
	\label{eq:random_transformation_affine}
	\scr{G}_{t}^{-1}\mcal{Y}(v,z)\scr{G}_{t}=\mcal{Y}(\Theta_{t}^{-1}(z)v,\rho_{t}(z))(\del \rho_{t}(z))^{h_{\Lambda_{1}}}.
\end{equation}
Here the object $\Theta_{t}^{-1}(z)$ is a random process on the group of $z^{-1}\mbb{C}[[z^{-1}]]$-points in $G$
obtained by substituting $\zeta=z$ in $\Theta_{t}^{-1}$.
From the identity $\Theta_{t}^{-1}\Theta_{t}=\mrm{Id}$, the SDE on $\Theta_{t}^{-1}(z)$ becomes
\begin{equation}
	d\Theta_{t}^{-1}(z)\Theta_{t}(z)=\frac{\tau}{2}\sum_{r=1}^{\dim\mfrak{g}}(\rho_{t}(z)^{-1}X_{r})^{2}-\sum_{r=1}^{\dim\mfrak{g}}\rho_{t}(z)^{-1}X_{r}dB_{t}^{(r)}.
\end{equation}
Apart from the Jacobian part, the {{right-hand}} side of Eq.(\ref{eq:random_transformation_affine}) is
the random transformation of a primary field in Eq.(\ref{eq:sle_wzw_correlation_function}) considered in the correlation function formulation of SLEs,
\cite{BettelheimGruzbergLudwigWiegmann2005, AlekseevBytskoIzyurov2011} which seemed to be {\it ad hoc},
while it naturally appears in the group theoretical formulation presented here.

The SDE in Eq.(\ref{eq:sde_internal_sym}) {{for}} the random process along {{the}} internal symmetry
is still not {{sufficient}} to compute matrix elements like $\braket{u|\scr{G}_{t}|v_{\Lambda}}$.
In the following two subsections, we construct the random process $\Theta_{t}$ in the most explicit way
in {{the cases when}} $\mfrak{g}$ is commutative and $\mfrak{g}=\mfrak{sl}_{2}$.

\subsection{{{Case when}} $\mfrak{g}$ is commutative}
We temporary denote the dimension of $\mfrak{g}$ by $\ell$.
Let $H_{1},\cdots,H_{\ell}$ be an orthonormal basis of $\mfrak{g}$ with respect to the bilinear form $(\cdot|\cdot)$.
We put an ansatz on $\Theta_{t}$ as
\begin{equation}
	\Theta_{t}=e^{H_{1}\otimes h^{1}_{t}(\zeta)}\cdots e^{H_{\ell}\otimes h^{\ell}_{t}(\zeta)},
\end{equation}
where $h^{i}_{t}(\zeta)$ are $\mbb{C}[[\zeta^{-1}]]\zeta^{-1}$-valued random {{processes}}.
\begin{prop}
Under the above ansatz on $\Theta_{t}$, the random processes $h^{i}_{t}(\zeta)$ satisfy
\begin{equation}
	\label{eq:sde_internal_Heisenbeg}
	dh^{i}_{t}(\zeta)=\frac{1}{\rho_{t}(\zeta)}dB_{t}^{(i)}
\end{equation}
for $i=1,\cdots,\ell$.
\end{prop}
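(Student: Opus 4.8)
The plan is to reduce the computation to an abelian one and then to read off the SDE by matching against Eq.(\ref{eq:sde_internal_sym}). Since $\mfrak{g}$ is commutative, its affinization $\what{\mfrak{g}}$ is a Heisenberg algebra, with $[H_i\otimes\zeta^{m},H_j\otimes\zeta^{n}]=m(H_i|H_j)\delta_{m+n,0}K=m\delta_{ij}\delta_{m+n,0}K$ for the orthonormal basis $H_1,\dots,H_\ell$. Each random process $h^i_t(\zeta)$ takes values in $\zeta^{-1}\mbb{C}[[\zeta^{-1}]]$, so $H_i\otimes h^i_t(\zeta)$ involves only strictly negative modes; for two such operators the condition $m+n=0$ is never met, so they commute, as do their values at different times. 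Equivalently, these operators raise the $L_0$-grading, act locally nilpotently on $\overline{L_{\mfrak{g}}(\Lambda,k)}$, and commute pairwise. Hence the Baker--Campbell--Hausdorff series terminates at first order, the ordered product collapses, and
\begin{equation}
  \Theta_t=\exp\!\Bigl(\,\textstyle\sum_{i=1}^{\ell}H_i\otimes h^i_t(\zeta)\Bigr)=:\exp(A_t),
\end{equation}
where $A_t$ is a semimartingale all of whose values commute with one another.

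Next I would apply the Itô formula for the exponential of such a commuting semimartingale: because $[A_s,A_t]=0$, one has $d\Theta_t=\Theta_t\bigl(dA_t+\tfrac12\,d[A,A]_t\bigr)$, that is,
\begin{equation}
  \Theta_t^{-1}d\Theta_t=\sum_{i=1}^{\ell}H_i\otimes dh^i_t(\zeta)+\tfrac12\,d[A,A]_t .
\end{equation}
Writing the unknown decomposition $dh^i_t(\zeta)=\mu^i_t(\zeta)\,dt+\sum_{j=0}^{\ell}\sigma^{ij}_t(\zeta)\,dB^{(j)}_t$ and using independence of the $B^{(j)}$ together with their variances ($\kappa$ for $j=0$, $\tau$ for $j\ge1$), the covariation term becomes $\tfrac12\,d[A,A]_t=\tfrac12\sum_{j}\bigl(\sum_i H_i\otimes\sigma^{ij}_t(\zeta)\bigr)^2\tau_j\,dt$ with $\tau_0=\kappa$, $\tau_j=\tau$ for $j\ge1$.

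Finally I would compare with Eq.(\ref{eq:sde_internal_sym}) specialized to $X_r=H_r$. Equating the martingale parts, and using that $\{H_i\}$ is a basis and the $B^{(j)}$ are independent, forces $\sigma^{ij}_t(\zeta)=\delta_{ij}\,\rho_t(\zeta)^{-1}$; in particular there is no $dB^{(0)}$ contribution. Substituting this back, $\tfrac12\,d[A,A]_t=\tfrac{\tau}{2}\sum_{i}(H_i\otimes\rho_t(\zeta)^{-1})^2\,dt$, which is precisely the drift on the right of Eq.(\ref{eq:sde_internal_sym}); hence the remaining drift $\sum_i H_i\otimes\mu^i_t(\zeta)$ must vanish, so each $\mu^i_t(\zeta)=0$. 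This yields $dh^i_t(\zeta)=\rho_t(\zeta)^{-1}dB^{(i)}_t=\frac{1}{\rho_t(\zeta)}dB^{(i)}_t$, as claimed. The step that requires genuine care — and that I expect to be the main obstacle — is justifying the exponential Itô formula in the infinite-dimensional group $G_+(\mcal{O})$ acting on the completion $\overline{L_{\mfrak{g}}(\Lambda,k)}$, i.e.\ checking that $d[A,A]_t$ is well defined and that $d\Theta_t=\Theta_t(dA_t+\tfrac12\,d[A,A]_t)$ holds. This reduces to the classical finite-dimensional statement in each $L_0$-graded component, on which all the operators involved act nilpotently, so it is bookkeeping rather than a deep difficulty, but it is where the argument must be made rigorous.
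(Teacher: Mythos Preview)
Your argument is correct and follows the same template the paper uses (implicitly here, and explicitly in Appendix~\ref{sect:app_SDE} for the $\mfrak{sl}_{2}$ case): posit an It\^o decomposition for the unknown coefficient processes, compute $\Theta_{t}^{-1}d\Theta_{t}$ via the exponential It\^o formula, and identify drift and diffusion by comparison with Eq.~(\ref{eq:sde_internal_sym}). The only cosmetic difference is that you collapse the ordered product into a single exponential $\exp(A_{t})$ before differentiating, whereas the paper's scheme (as in Appendix~\ref{sect:app_SDE}) would differentiate each factor $e^{H_{i}\otimes h^{i}_{t}}$ separately and multiply; in the commutative case these are the same computation.
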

Thus the random processes $h^{i}_{t}(\zeta)$ are completely determined by the solution of {{SLEs}} so that $h^{i}_{t}(\zeta)=\int_{0}^{t}\frac{dB^{(i)}_{s}}{\rho_{s}(\zeta)}$.

\subsection{Specialization to $\mfrak{sl}_{2}$}
To construct the random process $\Theta_{t}$ in a sufficiently explicit way,
we make an ansatz that it is written as $\Theta_{t}=e^{{\bf e}_{t}}e^{{\bf h}_{t}}e^{{\bf f}_{t}}$,
where ${\bf e}_{t}=E\otimes e_{t}(\zeta)$, ${\bf h}_{t}=H\otimes h_{t}(\zeta)$, {{and}} ${\bf f}_{t}=F\otimes f_{t}(\zeta)$ are random processes on $\mfrak{g}\otimes\mbb{C}[[\zeta^{-1}]]\zeta^{-1}$
associated with $\mbb{C}[[\zeta^{-1}]]\zeta^{-1}$-valued random processes $e_{t}(\zeta)$, $h_{t}(\zeta)$, and $f_{t}(\zeta)$.
Then we shall derive SDEs on $e_{t}(\zeta)$, $h_{t}(\zeta)$ and $f_{t}(\zeta)$.
{{We}} assume the SDEs as
\begin{equation}
	dx_{t}(\zeta)=\overline{x}_{t}(\zeta)dt+\sum_{r=1}^{3}x_{t}^{r}(\zeta)dB_{t}^{(r)},\ \ x=e,h,f.
\end{equation}
Since $X(n)$ with $n<0$ are mutually commutative for a fixed $X\in\mfrak{sl}_{2}$,
the increment of the random process $\Theta_{t}$ is computed by the standard Ito calculus,
and we can determine data $\overline{x}_{t}(\zeta)$ and $x_{t}^{(r)}(\zeta)$ so the increment of $\Theta_{t}$ {{is in}} the desired form in Eq.(\ref{eq:sde_internal_sym}).
After {{the}} computation that is presented in Appendix \ref{sect:app_SDE}, we obtain the following {{proposition}}
\begin{prop}
\label{prop:sde_internal_sl2}
Under the ansatz  $\Theta_{t}=e^{{\bf e}_{t}}e^{{\bf h}_{t}}e^{{\bf f}_{t}}$ described above,
the SDE in Eq.(\ref{eq:sde_internal_sym}) implies the following set of SDEs:
\begin{align}
	de_{t}(\zeta)=&-\frac{e^{2h_{t}(\zeta)}}{\sqrt{2}\rho_{t}(\zeta)}dB_{t}^{(2)}-\frac{ie^{2h_{t}(\zeta)}}{\sqrt{2}\rho_{t}(\zeta)}dB_{t}^{(3)}, \\
	dh_{t}(\zeta)=&-\frac{\tau}{2\rho_{t}(\zeta)^{2}}dt
				-\frac{1}{\sqrt{2}\rho_{t}(\zeta)}dB_{t}^{(1)}+\frac{f_{t}(\zeta)}{\sqrt{2}\rho_{t}(\zeta)}dB_{t}^{(2)}+\frac{if_{t}(\zeta)}{\sqrt{2}\rho_{t}(\zeta)}dB_{t}^{(3)}, \\
	df_{t}(\zeta)=&-\frac{\sqrt{2}f_{t}(\zeta)}{\rho_{t}(\zeta)}dB_{t}^{(1)}-\frac{1-f_{t}(\zeta)^{2}}{\sqrt{2}\rho_{t}(\zeta)}dB_{t}^{(2)}+\frac{i(1+f_{t}(\zeta)^{2})}{\sqrt{2}\rho_{t}(\zeta)}dB_{t}^{(3)}.		
\end{align}
\end{prop}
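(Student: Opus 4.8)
The plan is to carry out the Ito calculus for the product $\Theta_{t}=e^{{\bf e}_{t}}e^{{\bf h}_{t}}e^{{\bf f}_{t}}$ explicitly and match the resulting logarithmic derivative $\Theta_{t}^{-1}d\Theta_{t}$ against the target expression $\frac{\tau}{2}\sum_{r}(X_{r}\otimes\rho_{t}(\zeta)^{-1})^{2}dt+\sum_{r}X_{r}\otimes\rho_{t}(\zeta)^{-1}dB_{t}^{(r)}$ from Eq.(\ref{eq:sde_internal_sym}). The crucial simplification, already noted in the statement, is that for a fixed $X\in\mfrak{sl}_{2}$ the modes $X(n)$, $n<0$, commute; hence each factor $e^{{\bf x}_{t}}$ (with ${\bf x}_{t}\in\mfrak{g}\otimes\zeta^{-1}\mbb{C}[[\zeta^{-1}]]$) behaves like an exponential of a commuting family, so its Ito differential is $e^{{\bf x}_{t}}(d{\bf x}_{t}+\tfrac12 d{\bf x}_{t}\,d{\bf x}_{t})$ with no Baker--Campbell--Hausdorff corrections \emph{within} a single factor. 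The corrections between the three factors $e^{{\bf e}_{t}}$, $e^{{\bf h}_{t}}$, $e^{{\bf f}_{t}}$ are governed by the adjoint action of $\mfrak{sl}_{2}$ on itself, which is handled by the conjugation formulas collected in Section~\ref{subsect:formulae_sl2} (items (1)--(9)).

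Concretely, I would first write
\[
\Theta_{t}^{-1}d\Theta_{t}
= e^{-{\bf f}_{t}}e^{-{\bf h}_{t}}\bigl(e^{-{\bf e}_{t}}de^{{\bf e}_{t}}\bigr)e^{{\bf h}_{t}}e^{{\bf f}_{t}}
+ e^{-{\bf f}_{t}}\bigl(e^{-{\bf h}_{t}}de^{{\bf h}_{t}}\bigr)e^{{\bf f}_{t}}
+ \bigl(e^{-{\bf f}_{t}}de^{{\bf f}_{t}}\bigr)
\]
plus the Ito second-order terms coming from cross-variations of the three factors. For each factor I substitute the assumed SDE $dx_{t}(\zeta)=\overline{x}_{t}(\zeta)dt+\sum_{r}x_{t}^{r}(\zeta)dB_{t}^{(r)}$, compute $e^{-{\bf x}_{t}}de^{{\bf x}_{t}}={\bf X}\otimes\bigl(dx_{t}(\zeta)+\tfrac12(dx_{t}(\zeta))^{2}\bigr)$, and then push everything through $e^{-{\bf f}_{t}}e^{-{\bf h}_{t}}(\,\cdot\,)e^{{\bf h}_{t}}e^{{\bf f}_{t}}$ using the Section~\ref{subsect:formulae_sl2} identities: conjugating $E\otimes(\cdot)$ by $e^{{\bf h}}$ produces the factor $e^{-2h_{t}(\zeta)}$ (item (4)), further conjugation by $e^{{\bf f}}$ mixes in $H$- and $F$-components with coefficients $f_{t}(\zeta)$ and $f_{t}(\zeta)^{2}$ (item (6)), and similarly for the $H$-factor (items (2),(3)). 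Collecting all contributions, one obtains $\Theta_{t}^{-1}d\Theta_{t}$ as an explicit element of $\mfrak{sl}_{2}\otimes\mbb{C}[[\zeta^{-1}]]\zeta^{-1}$ whose $E$-, $H$- and $F$-components are quadratic polynomials in $e_{t},h_{t},f_{t}$ with coefficients built from $\rho_{t}(\zeta)^{-1}$, $\overline{x}_{t}$, $x_{t}^{r}$.

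The last step is to expand the right-hand side of Eq.(\ref{eq:sde_internal_sym}): writing $\sum_{r}X_{r}\otimes X_{r}$ in the $\{E,H,F\}$ basis via the normalization $(\theta|\theta)=2$ (so the orthonormal basis is $\tfrac{1}{\sqrt2}H$ and $\tfrac{1}{\sqrt2}(E\pm F)$, up to the conventional choices that produce the $\sqrt2$'s and $i$'s in the stated formulas), the quadratic Casimir term $\tfrac{\tau}{2}\sum_{r}(X_{r}\otimes\rho_{t}(\zeta)^{-1})^{2}$ is just the image of the Casimir element under $X\otimes Y\mapsto XY\otimes\rho_{t}(\zeta)^{-2}$, which lands in the $H$-component only (since $EF+FE+\tfrac12 H^2$ acts as a scalar, but here we keep it in the universal enveloping algebra and only the $H(-1)^2$-type diagonal piece survives after reorganizing into the $e^{\bf e}e^{\bf h}e^{\bf f}$ coordinates). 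Matching the $dB_{t}^{(r)}$-coefficients component-by-component yields a triangular linear system for $e_{t}^{r},h_{t}^{r},f_{t}^{r}$ that is solved directly, and matching the $dt$-coefficients then determines $\overline{e}_{t},\overline{h}_{t},\overline{f}_{t}$ (only $\overline h_t=-\tfrac{\tau}{2}\rho_{t}(\zeta)^{-2}$ survives, the others vanishing). The main obstacle is purely organizational rather than conceptual: one must track the nested conjugations and the Ito cross-variation terms between the three exponential factors without error, and make a consistent choice of the orthonormal basis so that the drift contributions from $\tfrac12(dx_{t})^2$ and from the Casimir term combine correctly — this bookkeeping is exactly what is deferred to Appendix~\ref{sect:app_SDE}.
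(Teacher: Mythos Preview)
Your approach is essentially identical to the paper's proof in Appendix~\ref{sect:app_SDE}: compute the Ito differential of each exponential factor using the commutativity of $X(n)$ for fixed $X$, conjugate through the remaining factors via the formulas of Section~\ref{subsect:formulae_sl2}, then match first the Brownian coefficients (a triangular linear system in $e_{t}^{r},h_{t}^{r},f_{t}^{r}$) and afterwards the $dt$ coefficients. One clarification: the quadratic term $\tfrac{\tau}{2}\sum_{r}(X_{r}\otimes\rho_{t}^{-1})^{2}$ does not ``land in the $H$-component only''---it is a genuinely second-order element of $U(\what{\mfrak{g}})$ that, once the solved Brownian coefficients are fed back into the Ito second-order contributions of $\Theta_{t}^{-1}d\Theta_{t}$, appears intact on both sides and cancels, leaving only linear $dt$ terms whose vanishing forces $\overline{e}_{t}=\overline{f}_{t}=0$ and $\overline{h}_{t}=-\tfrac{\tau}{2}\rho_{t}(\zeta)^{-2}$.
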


%annihilating_operator
\section{Annihilating operator of a highest weight vector}
\label{sect:annihilator}
We have assumed in Sect.\ref{sect:random_process} that the highest weight vector $v_{\Lambda}$ of $L_{\mfrak{g}}(\Lambda,k)$
is annihilated by an operator of the form in Eq.(\ref{eq:annihilator_affine}) with finely tuned parameters $\kappa$ and $\tau$.
In this section we see examples of such annihilating operators.
As we have already seen, the vacuum vector $\ket{0}$ is annihilated by the operator in Eq.(\ref{eq:annihilator_affine})
for $\tau=\frac{2}{k+h_{\mfrak{g}}^{\vee}}$ and arbitrary $\kappa$.
Thus we shall search for an example acting on a ``charged" representation.

\subsection{{{Case when}} $\mfrak{g}$ is commutative}
We first compute vectors $L_{-2}v_{\Lambda}$ and $L_{-1}^{2}v_{\Lambda}$.
By the expression of $L_{n}$ via the Segal-Sugawara construction in Eq.(\ref{eq:segal_sugawara}), they {{can be}} computed as
\begin{align}
	L_{-2}v_{\Lambda}&=\left(\frac{1}{2}\sum_{i=1}^{\ell}H_{i}(-1)^{2}+\Lambda(-2)\right)v_{\Lambda}, \\
	L_{-1}^{2}v_{\Lambda}&=\left(\Lambda(-1)^{2}+\Lambda(-2)\right)v_{\Lambda}.
\end{align}
Here we have identified $\mfrak{g}^{\ast}$ with $\mfrak{g}$ via the nondegenerate bilinear form $(\cdot|\cdot)$.
We assume that $\Lambda$ is proportional {{to}} $H_{1}$ with {{the}} coefficient being written as $\lambda$: $\Lambda=\lambda H_{1}$.
Under this assumption,
\begin{equation}
	\left(-2L_{-2}+\frac{\kappa}{2}L_{-1}^{2}\right)v_{\lambda}=\left(-(1-2\lambda^{2})H_{1}(-1)^{2}-\sum_{i=2}^{\ell}H_{i}(-1)^{2}\right)v_{\Lambda},
\end{equation}
for $\kappa=4$.
Thus we have found an operator that annihilates $v_{\Lambda}$ of a suitable form.
\begin{prop}
\label{prop:annihilator_Heisenberg}
The following operator annihilates the highest weight vector $v_{\Lambda}$ for $\Lambda=\lambda H_{1}$:
\begin{equation}
	\label{eq:annihilator_Heisenberg}
	-2L_{-2}+\frac{\kappa}{2}L_{-1}^{2}+\frac{1}{2}\sum_{i=1}^{\ell}\tau_{i}H_{i}(-1)^{2},
\end{equation}
where $\kappa=4$, $\tau_{1}=2-4\lambda^{2}$ and $\tau_{i}=2$ for $i\ge 2$.
\end{prop}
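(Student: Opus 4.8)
The plan is to obtain the claim as an immediate consequence of the two computations recorded just before the statement. First I would recall that, writing $\Lambda=\lambda H_{1}$ and using the Segal--Sugawara expressions for $L_{-1},L_{-2}$ acting on $v_{\Lambda}$, one has
$$L_{-2}v_{\Lambda}=\left(\tfrac{1}{2}\sum_{i=1}^{\ell}H_{i}(-1)^{2}+\Lambda(-2)\right)v_{\Lambda},\qquad L_{-1}^{2}v_{\Lambda}=\left(\Lambda(-1)^{2}+\Lambda(-2)\right)v_{\Lambda},$$
so that for $\kappa=4$ the two $\Lambda(-2)$ contributions cancel and, using $\Lambda(-1)^{2}=\lambda^{2}H_{1}(-1)^{2}$,
$$\left(-2L_{-2}+\tfrac{\kappa}{2}L_{-1}^{2}\right)v_{\Lambda}=\left(-(1-2\lambda^{2})H_{1}(-1)^{2}-\sum_{i=2}^{\ell}H_{i}(-1)^{2}\right)v_{\Lambda}.$$
Hence the operator in the statement sends $v_{\Lambda}$ to
$$\left(\left(\tfrac{\tau_{1}}{2}-(1-2\lambda^{2})\right)H_{1}(-1)^{2}+\sum_{i=2}^{\ell}\left(\tfrac{\tau_{i}}{2}-1\right)H_{i}(-1)^{2}\right)v_{\Lambda}.$$

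The second step is to observe that, since $\mfrak{g}$ is commutative, by the Poincar\'{e}--Birkhoff--Witt theorem the Fock module $L_{\mfrak{g}}(\Lambda,k)$ is freely generated over $v_{\Lambda}$ by the commuting creation operators $H_{i}(-n)$, $n>0$; in particular the $\ell$ vectors $H_{1}(-1)^{2}v_{\Lambda},\dots,H_{\ell}(-1)^{2}v_{\Lambda}$ are linearly independent (they form part of a basis of the weight-$2$ subspace, together with the $H_{i}(-2)v_{\Lambda}$ and the off-diagonal $H_{i}(-1)H_{j}(-1)v_{\Lambda}$). Therefore the displayed vector vanishes if and only if every coefficient vanishes, which forces $\tau_{i}=2$ for $i\ge 2$ and $\tau_{1}=2(1-2\lambda^{2})=2-4\lambda^{2}$, exactly the values asserted.

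Since everything reduces to the preceding computation, there is no real obstacle here; the only points worth spelling out are the linear independence of the $H_{i}(-1)^{2}v_{\Lambda}$, which I would justify via the free-field realization of the Fock space, and the cancellation of the $\Lambda(-2)$ terms, which is precisely what pins down $\kappa=4$: for $\lambda\neq 0$ a different value of $\kappa$ leaves an uncancelled multiple of $\Lambda(-2)v_{\Lambda}$, which cannot be killed by the diagonal term $\tfrac{1}{2}\sum_{i}\tau_{i}H_{i}(-1)^{2}$, so no choice of the $\tau_{i}$ would work.
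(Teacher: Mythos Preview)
Your argument is correct and follows essentially the same route as the paper: the proof there consists precisely of the two Segal--Sugawara computations you quote, the cancellation of the $\Lambda(-2)$ terms at $\kappa=4$, and reading off the $\tau_{i}$. Your additional remarks on the linear independence of the $H_{i}(-1)^{2}v_{\Lambda}$ and the necessity of $\kappa=4$ go slightly beyond what the proposition asserts (it only claims annihilation for the stated parameters, not uniqueness), but they are correct and harmless.
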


\subsection{{{Case when}} $\mfrak{g}=\mfrak{sl}_{2}$}
Here we assume that the level is $k=1$.
In this case, the vacuum representation $L_{\mfrak{sl}_{2},1}$ is isomorphic as a VOA to the lattice {{VOA}} $V_{Q}$
associated with the root lattice $Q=\mbb{Z}\alpha$, $(\alpha|\alpha)=2$ of $\mfrak{sl}_{2}$.
The isomorphism is described by
\begin{align}
	E(z)&\mapsto \Gamma_{\alpha}(z),&
	H(z)&\mapsto \alpha(z),&
	F(z)&\mapsto \Gamma_{-\alpha}(z).
\end{align}
Here $\alpha(z)$ is the free Bose field and $\Gamma_{\pm\alpha}(z)$ are the vertex operators associated with $\pm\alpha\in Q$.
This isomorphism of VOAs is called the Frenkel-Kac construction of an affine VOA,\cite{FrenkelKac1980}
of which an exposition is also contained in Appendix \ref{subsect:app_frenkel_kac}.
The dominant weights of level $k=1$ are exhausted by $0$ and the fundamental weight $\Lambda$ such that $(\Lambda|\alpha)=1$.
The spin-$\frac{1}{2}$ representation $L_{\mfrak{g}}(\Lambda,1)$ corresponding to $\Lambda$ is also realized as a module of the lattice VOA $V_{Q}$ by $V_{Q+\Lambda}${{, which}} is defined by
\begin{equation}
	V_{Q+\Lambda}= \bigoplus_{\beta\in Q}L_{\mbb{C}\otimes_{\mbb{Z}}Q}(0,1)\otimes e^{\beta+\Lambda}.
\end{equation}
Here $L_{\mbb{C}\otimes_{\mbb{Z}}Q}(0,1)$ is the vacuum Fock space introduced in Sect.\ref{sect:rep_aff_alg}.
Let the top space of $L_{\mfrak{sl}_{2}}(\Lambda,1)$ be realized as  $L(\Lambda)=\mbb{C}v_{0}\oplus\mbb{C}v_{1}$ so that $Ev_{0}=0$.
Then the isomorphism $L_{\mfrak{sl}_{2}}(\Lambda,1)\simeq V_{Q+\Lambda}$ is determined by
\begin{equation}
	v_{0}\mapsto e^{\Lambda},\ \ v_{1}\mapsto e^{-\Lambda}.
\end{equation}

We show that both $v_{0}$ and $v_{1}$ {{are}} annihilated by an operator of the form in Eq.(\ref{eq:annihilator_affine}).
Let $\mcal{Y}(-,z)$ be the intertwining operator of type $\binom{L_{\mfrak{sl}_{2}}(\Lambda,1)}{L_{\mfrak{sl}_{2}}(\Lambda,1)\ \ L_{\mfrak{sl}_{2},1}}$.
Then we have $\mcal{Y}(e^{\pm\Lambda},z)=\Gamma_{\pm \Lambda}(z)$, where $\Gamma_{\pm \Lambda}(z)$ are generalized vertex operators associated with $\pm\Lambda$.
Such a realization of an intertwining operator allows us to obtain
\begin{align}
	L_{-2}e^{\pm\Lambda}=L_{-1}^{2}e^{\pm\Lambda}=\left(\frac{1}{4}\alpha(-1)^{2}\pm \frac{1}{2}\alpha(-2)\right)e^{\pm\Lambda}
\end{align}
by computation of operator product expansions.
In the case of $\mfrak{g}=\mfrak{sl}_{2}$, then
\begin{equation}
	\sum_{r=1}^{3}X_{r}(-1)^{2}=\frac{1}{2}H(-1)^{2}+E(-1)F(-1)+F(-1)E(-1).
\end{equation}
It is obvious that $E(-1)e^{\Lambda}=0$ from $\Gamma_{\alpha}(z)\Gamma_{\Lambda}(w)=(z-w)\Gamma_{\alpha,\Lambda}(z,w)$.
On the other hand, $F(-1)$ nontrivially acts on $e^{\Lambda}$ and further applying $E(-1)$, we have $E(-1)F(-1)e^{\Lambda}=\alpha(-2)e^{\Lambda}$.
Combining them we can see that
\begin{equation}
	\left(-2L_{-2}+\frac{\kappa}{2}L_{-1}^{2}+\frac{\tau}{2}\sum_{r=1}^{3}X_{r}(-1)^{2}\right)e^{\Lambda}=0
\end{equation}
if the relation $\kappa+2\tau -4=0$ holds.
Computation for $e^{-\Lambda}$ is carried {{out}} in an analogous way.
We have $F(-1)e^{-\Lambda}=0$, while $F(-1)E(-1)e^{-\Lambda}=-\alpha(-2)e^{-\Lambda}$, which leads us to
\begin{equation}
	\left(-2L_{-2}+\frac{\kappa}{2}L_{-1}^{2}+\frac{\tau}{2}\sum_{r=1}^{3}X_{r}(-1)^{2}\right)e^{-\Lambda}=0
\end{equation}
if the parameters $\kappa$ and $\tau$ {{satisfy}} the same relation $\kappa+2\tau-4=0$ as in the case of $e^{\Lambda}$.

We summarize the above computation as follows.
\begin{prop}
\label{prop:annihilator_sl2}
Let $\Lambda$ be the fundamental weight of $\mfrak{sl}_{2}$,
and {{let}} the fundamental representation of $\mfrak{sl}_{2}$ be described by $L(\Lambda)=\mbb{C}v_{\Lambda}\oplus \mbb{C}Fv_{\Lambda}$.
Here $v_{\Lambda}$ is the highest weight vector of highest weight $\Lambda$.
We also denote the vector $Fv_{\Lambda}$ by $v_{-\Lambda}$.
Then we have in $L_{\mfrak{sl}_{2}}(\Lambda,1)$
\begin{equation}
	\left(-2L_{-2}+\frac{\kappa}{2}L_{-1}^{2}+\frac{\tau}{2}\sum_{r=1}^{3}X_{r}(-1)^{2}\right)v_{\pm\Lambda}=0
\end{equation}
if the relation $\kappa+2\tau-4=0$ holds.
\end{prop}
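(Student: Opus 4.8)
The plan is to verify the identity entirely inside the Frenkel--Kac realization recalled above, in which $L_{\mfrak{sl}_{2},1}$ is identified with the lattice VOA $V_{Q}$, the spin-$\tfrac12$ module $L_{\mfrak{sl}_{2}}(\Lambda,1)$ with $V_{Q+\Lambda}$, and the two highest weight vectors with the lattice vectors $v_{\Lambda}\leftrightarrow e^{\Lambda}$ and $v_{-\Lambda}\leftrightarrow e^{-\Lambda}$. Since the operator in Eq.(\ref{eq:annihilator_affine}) is a fixed linear combination of $L_{-2}$, $L_{-1}^{2}$ and the components of $\sum_{r=1}^{3}X_{r}(-1)^{2}$, it suffices to evaluate each of these three operators on $e^{\pm\Lambda}$ separately; all three outputs will be seen to lie in the two-dimensional space spanned by $\alpha(-1)^{2}e^{\pm\Lambda}$ and $\alpha(-2)e^{\pm\Lambda}$, so the vanishing condition reduces to two scalar equations, which I then solve for $(\kappa,\tau)$.

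First I would handle the Virasoro generators. Under Frenkel--Kac the Sugawara field on $L_{\mfrak{sl}_{2},1}$ is the free-boson stress tensor $\tfrac14\no{\alpha(z)\alpha(z)}$ built from $\alpha(z)=H(z)$, normalized by $(\alpha|\alpha)=2$. Because $L_{-1}$ is the translation operator, expanding $e^{zL_{-1}}e^{\pm\Lambda}=\mcal{Y}(e^{\pm\Lambda},z)\ket{0}=\Gamma_{\pm\Lambda}(z)\ket{0}$ in powers of $z$ gives $L_{-1}e^{\pm\Lambda}$ and $L_{-1}^{2}e^{\pm\Lambda}$ directly, and evaluating the Sugawara sum (equivalently, reading off the two leading terms of the operator product $L(z)\Gamma_{\pm\Lambda}(w)$) yields $L_{-2}e^{\pm\Lambda}=L_{-1}^{2}e^{\pm\Lambda}=\bigl(\tfrac14\alpha(-1)^{2}\pm\tfrac12\alpha(-2)\bigr)e^{\pm\Lambda}$, exactly as recorded just above the statement (the coincidence of $L_{-2}$ and $L_{-1}^{2}$ on $e^{\pm\Lambda}$ being the level-two Virasoro null-vector relation at $c=1$, $h=\tfrac14$). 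Next I would expand $\sum_{r=1}^{3}X_{r}(-1)^{2}=\tfrac12H(-1)^{2}+E(-1)F(-1)+F(-1)E(-1)$ (the quadratic Casimir of $\mfrak{sl}_{2}$ dual to $(\cdot|\cdot)$) and act on $e^{\Lambda}$: one has $H(-1)^{2}e^{\Lambda}=\alpha(-1)^{2}e^{\Lambda}$; $E(-1)e^{\Lambda}=0$ because the factor $(z-w)^{(\alpha|\Lambda)}=(z-w)$ in $\Gamma_{\alpha}(z)\Gamma_{\Lambda}(w)$ forces $\Gamma_{\alpha}(z)e^{\Lambda}$ to begin at order $z$; and $E(-1)F(-1)e^{\Lambda}=\alpha(-2)e^{\Lambda}$ from the operator product $\Gamma_{\alpha}(z)\Gamma_{-\alpha}(w)$ applied to $F(-1)e^{\Lambda}$. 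Hence $\sum_{r}X_{r}(-1)^{2}e^{\Lambda}=\bigl(\tfrac12\alpha(-1)^{2}+\alpha(-2)\bigr)e^{\Lambda}$.

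Assembling these, $\bigl(-2L_{-2}+\tfrac\kappa2L_{-1}^{2}+\tfrac\tau2\sum_{r}X_{r}(-1)^{2}\bigr)e^{\Lambda}$ has coefficient $-\tfrac12+\tfrac\kappa8+\tfrac\tau4$ on $\alpha(-1)^{2}e^{\Lambda}$ and coefficient $-1+\tfrac\kappa4+\tfrac\tau2$ on $\alpha(-2)e^{\Lambda}$, both of which are proportional to $\kappa+2\tau-4$ and hence vanish simultaneously precisely when $\kappa+2\tau-4=0$. The case of $e^{-\Lambda}$ is the mirror image: now $F(-1)e^{-\Lambda}=0$ because $(-\alpha|-\Lambda)=1>0$, while $F(-1)E(-1)e^{-\Lambda}=-\alpha(-2)e^{-\Lambda}$, so every $\alpha(-2)$-term flips sign and the two coefficient equations again collapse to $\kappa+2\tau-4=0$. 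The only genuinely delicate step is the pair of vertex-operator operator-product computations, where one must track the normal-ordering constants and the cocycle signs in $V_{Q}$ and $V_{Q+\Lambda}$ carefully enough to fix the coefficients of $\alpha(-1)^{2}$ and $\alpha(-2)$; the internal identity $L_{-1}e^{\Lambda}=\tfrac12\alpha(-1)e^{\Lambda}$, forced by $L_{-1}$ being translation, pins down the otherwise ambiguous cocycle product. Granting those computations, the proposition is immediate linear algebra in a two-dimensional space.
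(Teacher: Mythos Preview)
Your proof is correct and follows essentially the same route as the paper: work in the Frenkel--Kac realization, evaluate $L_{-2}$, $L_{-1}^{2}$, and $\sum_{r}X_{r}(-1)^{2}$ on $e^{\pm\Lambda}$ via vertex-operator OPEs, and reduce to the single linear relation $\kappa+2\tau-4=0$. Your write-up is slightly more explicit than the paper's in tracking the two coefficients on $\alpha(-1)^{2}e^{\pm\Lambda}$ and $\alpha(-2)e^{\pm\Lambda}$ separately, and the aside identifying $L_{-2}e^{\pm\Lambda}=L_{-1}^{2}e^{\pm\Lambda}$ with the level-two Virasoro null vector at $(c,h)=(1,\tfrac14)$ is a nice observation, but the substance is identical.
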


%local_martingale
\section{Local martingales}
As an application of construction of a random process $\scr{G}_{t}$ on an {{infinite-dimensional}} Lie group presented in Sect. \ref{sect:random_process},
we compute several local martingales associated with the solution of {{SLEs}} with internal degrees of freedom
by taking the inner product $\braket{u|\scr{G}_{t}|v_{\Lambda}}$.

\label{sect:martingale}
\subsection{{{Case when}} $\mfrak{g}$ is commutative}
The local martingale $\scr{G}_{t}v_{\Lambda}$ on $\overline{L_{\mfrak{g}}(\Lambda,1)}$ generates local martingales
when we take the inner product of it with any vectors in $L_{\mfrak{g}}(\Lambda,1)$.
To describe them explicitly, we first investigate how a current field $H(z)$ and the Virasoro field $L(z)$ are transformed under the adjoint action by $\scr{G}_{t}$.
First a current field $H(z)$ transforms under the adjoint action by $e^{-{\bf h}^{1}_{t}}$ as in Eq.(\ref{eq:current_internal_commutative}),
which implies
\begin{equation}
	\Theta_{t}^{-1}H(z)\Theta_{t}=H(z)-\sum_{i=1}^{\ell}(H_{i}|H)\del h^{i}_{t}(z).
\end{equation}
Since the transformation rule of $H(z)$ under the adjoint action by $Q(\rho_{t})^{-1}$ has been already obtained,
\begin{equation}
	\scr{G}_{t}^{-1}H(z)\scr{G}_{t}=H(\rho_{t}(z))\pr{\rho}_{t}(z)-\sum_{i=1}^{\ell}(H_{i}|H)\del h^{i}_{t}(z).
\end{equation}
This can be used to {{formulate}} a local martingale $\braket{v_{\Lambda}|H(z)\scr{G}_{t}|v_{\Lambda}}$.

\begin{thm}
Let $g_{t}(z)=\rho_{t}(z)+B_{t}$ be the SLE$(\kappa)$ and $h^{i}_{t}(\zeta)$ be the solutions of Eq.(\ref{eq:sde_internal_Heisenbeg}).
Then the following quantity is a local martingale:
\begin{equation}
	\braket{v_{\Lambda}|H(z)\scr{G}_{t}|v_{\Lambda}}
	=\lambda(H_{1}|H)\frac{\pr{\rho}_{t}(z)}{\rho_{t}(z)}-\sum_{i=1}^{\ell}(H_{i}|H)\del h^{i}_{t}(z).
\end{equation}
\end{thm}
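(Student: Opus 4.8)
The plan is to obtain the statement as a consequence of the fact that $\scr{G}_{t}v_{\Lambda}$ is a local martingale, rather than by a direct Ito computation. By Proposition~\ref{prop:annihilator_Heisenberg} the highest weight vector $v_{\Lambda}$ with $\Lambda=\lambda H_{1}$ is annihilated by an operator of the form in Eq.~(\ref{eq:annihilator_Heisenberg}) (with $\kappa=4$), so by the argument proving Proposition~\ref{prop:martingale_affine} the corresponding process $\scr{G}_{t}v_{\Lambda}$ is a local martingale in $\overline{L_{\mfrak{g}}(\Lambda,1)}$; equivalently, each graded component $(\scr{G}_{t}v_{\Lambda})_{h_{\Lambda}+n}\in L_{\mfrak{g}}(\Lambda,1)_{h_{\Lambda}+n}$, $n\ge 0$, is a local martingale in a finite-dimensional space. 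Regarding $z$ as a formal coordinate at infinity, the coefficient of $z^{-n-1}$ in $\braket{v_{\Lambda}|H(z)\scr{G}_{t}|v_{\Lambda}}$ equals $\braket{v_{\Lambda}|H(n)(\scr{G}_{t}v_{\Lambda})_{h_{\Lambda}+n}}$ for $n\ge 0$ and vanishes for $n<0$; this is a fixed linear functional applied to that graded component, and since a fixed linear image of a local martingale is a local martingale, each coefficient — hence the whole series in $z^{-1}$ — is a local martingale. It therefore suffices to identify $\braket{v_{\Lambda}|H(z)\scr{G}_{t}|v_{\Lambda}}$ with the asserted closed form.

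For this I would insert $\scr{G}_{t}\scr{G}_{t}^{-1}=\mrm{Id}$ and invoke the transformation rule of the current field recorded just above the statement,
\begin{equation*}
\scr{G}_{t}^{-1}H(z)\scr{G}_{t}=H(\rho_{t}(z))\,\pr{\rho}_{t}(z)-\sum_{i=1}^{\ell}(H_{i}|H)\del h^{i}_{t}(z),
\end{equation*}
which gives
\begin{equation*}
\braket{v_{\Lambda}|H(z)\scr{G}_{t}|v_{\Lambda}}=\pr{\rho}_{t}(z)\braket{v_{\Lambda}|\scr{G}_{t}H(\rho_{t}(z))|v_{\Lambda}}-\sum_{i=1}^{\ell}(H_{i}|H)\del h^{i}_{t}(z)\braket{v_{\Lambda}|\scr{G}_{t}|v_{\Lambda}}.
\end{equation*}
Three elementary observations then finish the identification. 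First, since $\mfrak{g}$ is commutative and $\Lambda=\lambda H_{1}$, one has $H(n)v_{\Lambda}=0$ for $n>0$ and $H(0)v_{\Lambda}=(\Lambda|H)v_{\Lambda}=\lambda(H_{1}|H)v_{\Lambda}$, while for $n<0$ the vector $H(n)v_{\Lambda}$ lies in the graded piece $L_{\mfrak{g}}(\Lambda,1)_{h_{\Lambda}-n}$ strictly above the top; hence $H(\rho_{t}(z))v_{\Lambda}=\lambda(H_{1}|H)\rho_{t}(z)^{-1}v_{\Lambda}$ modulo vectors of degree $>h_{\Lambda}$. Second, writing $\scr{G}_{t}=\Theta_{t}Q(\rho_{t})$ with $\Theta_{t}\in G_{+}(\mcal{O})$ and $\rho_{t}\in\mrm{Aut}_{+}\mcal{O}$, the operator $\scr{G}_{t}$ never lowers the $L_{0}$-degree, and distinct graded pieces are orthogonal under the invariant form (because $\braket{L_{0}u|v}=\braket{u|L_{0}v}$ for the Segal-Sugawara $L_{0}$), so $\braket{v_{\Lambda}|\scr{G}_{t}w}=0$ whenever $\deg w>h_{\Lambda}$. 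Third, $\scr{G}_{t}v_{\Lambda}=v_{\Lambda}$ modulo higher-degree vectors, so together with the normalization $\braket{v_{\Lambda}|v_{\Lambda}}=1$ and the second observation, $\braket{v_{\Lambda}|\scr{G}_{t}|v_{\Lambda}}=1$. Combining the three yields $\braket{v_{\Lambda}|\scr{G}_{t}H(\rho_{t}(z))|v_{\Lambda}}=\lambda(H_{1}|H)\rho_{t}(z)^{-1}$, and therefore $\braket{v_{\Lambda}|H(z)\scr{G}_{t}|v_{\Lambda}}=\lambda(H_{1}|H)\pr{\rho}_{t}(z)/\rho_{t}(z)-\sum_{i=1}^{\ell}(H_{i}|H)\del h^{i}_{t}(z)$, as claimed; in particular the coefficient of $z^{-1}$ is the constant $\lambda(H_{1}|H)$, consistently with $\pr{\rho}_{t}(z)/\rho_{t}(z)=z^{-1}+O(z^{-2})$ and $\del h^{i}_{t}(z)=O(z^{-2})$.

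The algebra above is short; the only point demanding some care is the first paragraph, namely making precise that the $\overline{L_{\mfrak{g}}(\Lambda,1)}$-valued local martingale $\scr{G}_{t}v_{\Lambda}$ produces a scalar local martingale after pairing with the current functional $\braket{v_{\Lambda}|H(z)\cdot}$. This reduces to the observation that, coefficient by coefficient in $z^{-1}$, this functional is a fixed finite-rank linear map — the coefficient of $z^{-n-1}$ sees only the single graded component $(\scr{G}_{t}v_{\Lambda})_{h_{\Lambda}+n}$ — so no interchange of an infinite sum with the stochastic structure is in fact needed, and stability of the local martingale property under a fixed linear map suffices.
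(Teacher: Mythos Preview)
Your proof is correct and follows essentially the same route as the paper: deduce the local martingale property from Proposition~\ref{prop:martingale_affine} (together with Proposition~\ref{prop:annihilator_Heisenberg}), insert $\scr{G}_{t}\scr{G}_{t}^{-1}$, apply the transformation rule $\scr{G}_{t}^{-1}H(z)\scr{G}_{t}=H(\rho_{t}(z))\pr{\rho}_{t}(z)-\sum_{i}(H_{i}|H)\del h^{i}_{t}(z)$, and then evaluate the resulting matrix element. The only cosmetic difference is that where the paper (cf.\ the proof of Theorem~\ref{thm:local_martingales_affine_current}) simply invokes the invariance $\bra{v_{\Lambda}}\scr{G}_{t}=\bra{v_{\Lambda}}$ of the dual highest weight vector under the right action of $\mrm{Aut}_{+}\mcal{O}\ltimes G_{+}(\mcal{O})$, you instead rederive this fact via your degree/orthogonality observations (2) and (3); both are equivalent and your version is a slightly more explicit unpacking of the same statement.
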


We move on to derive the transformation rule for the Virasoro field $L(z)$.
The formula in Eq.(\ref{eq:virasoro_internal}) implies
\begin{equation}
	e^{-H\otimes h(\zeta)}L(z)e^{H\otimes h(\zeta)}=L(z)-\del h(z) H(z)+\frac{1}{2}(H|H)\del h(z)^{2}.
\end{equation}
Note that $\{H_{i}\}_{i=1}^{\ell}$ is an orthonormal basis, thus the corresponding currents $H_{i}(z)$ are mutually commutative.
This enables us to compute the quantity $\Theta_{t}^{-1}L(z)\Theta_{t}$ so that
\begin{equation}
	\Theta_{t}^{-1}L(z)\Theta_{t}=L(z)-\sum_{i=1}^{\ell}\del h^{i}_{t}(z)H_{i}(z)+\frac{1}{2}\sum_{i=1}^{\ell}\del h^{i}_{t}(z)^{2}.
\end{equation}
When we further take {{the}} adjoint by $Q(\rho_{t})^{-1}$ on it, we obtain
\begin{align}
	\scr{G}_{t}^{-1}L(z)\scr{G}_{t}=
	&L(\rho_{t}(z))\del\rho_{t}(z)^{2}-\sum_{i=1}^{\ell} \del h^{i}_{t}(z)\del \rho_{t}(z)H_{i}(\rho_{t}(z)) \notag \\
	&+\frac{c}{12}(S\rho_{t})(z)+\frac{1}{2}\sum_{i=1}^{\ell}\del h^{i}_{t}(z)^{2}. 
\end{align}
This relation again helps us {{to formulate}} a local martingale $\braket{v_{\Lambda}|L(z)\scr{G}_{t}|v_{\Lambda}}$ associated with the random processes $\rho_{t}(z)$ and $h^{i}_{t}(z)$.
\begin{thm}
Let $g_{t}(z)=\rho_{t}(z)+B_{t}$ be the SLE$(\kappa)$ and $h^{i}_{t}(\zeta)$ be the solutions of Eq.(\ref{eq:sde_internal_Heisenbeg}).
Then the following quantity is a local martingale:
\begin{align}
	\braket{v_{\Lambda}|L(z)\scr{G}_{t}|v_{\Lambda}}=
	h_{\Lambda}\left(\frac{\del \rho_{t}(z)}{\rho_{t}(z)}\right)^{2}-\lambda\del h^{1}_{t}(z) \frac{\del \rho_{t}(z)}{\rho_{t}(z)}
	+\frac{c}{12}(S\rho_{t})(z)+\frac{1}{2}\sum_{i=1}^{\ell}\del h^{i}_{t}(z)^{2}.
\end{align}
\end{thm}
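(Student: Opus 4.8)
The plan is to read off the statement as the pairing of the $\overline{L_{\mfrak{g}}(\Lambda,1)}$-valued local martingale $\scr{G}_{t}v_{\Lambda}$ (Prop.~\ref{prop:martingale_affine}, with the annihilating operator of Prop.~\ref{prop:annihilator_Heisenberg}) against $v_{\Lambda}$, with the Virasoro field $L(z)$ inserted, and then to split the work into two parts: first, that such a pairing is a local martingale; second, that it has the asserted closed form. The second part is where the earlier transformation computations do all the labour.

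For the martingale property I would argue coefficientwise in $z$. Writing $L(z)=\sum_{n\in\mbb{Z}}L_{n}z^{-n-2}$, each $L_{n}$ is a well-defined operator on $\overline{L_{\mfrak{g}}(\Lambda,1)}$, and $w\mapsto\braket{v_{\Lambda}|L_{n}w}$ is a continuous linear functional: it only sees the single homogeneous component of $w$ of $L_{0}$-degree $h_{\Lambda}+n$ (equivalently, one may invoke $\braket{L_{n}u|v}=\braket{u|L_{-n}v}$ for the invariant form). Composing with the local martingale $\scr{G}_{t}v_{\Lambda}$, each $\braket{v_{\Lambda}|L_{n}\scr{G}_{t}|v_{\Lambda}}$ is a local martingale, hence so is $\braket{v_{\Lambda}|L(z)\scr{G}_{t}|v_{\Lambda}}$, understood coefficientwise in the $z^{-1}$-expansion (equivalently, for each $z$ in the domain where $\rho_{t}$ is defined), exactly as in the preceding theorem about $H(z)$.

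For the closed form, the key observation is $\bra{v_{\Lambda}}\scr{G}_{t}=\bra{v_{\Lambda}}$. Indeed $\scr{G}_{t}=\Theta_{t}Q(\rho_{t})$; the factor $Q(\rho_{t})$ fixes $\bra{v_{\Lambda}}$ as recalled in Sect.~\ref{sect:group_theoretical}, since $\rho_{t}\in\mrm{Aut}_{+}\mcal{O}$ and $v_{\Lambda}$ is primary, while $\Theta_{t}=e^{H_{1}\otimes h^{1}_{t}(\zeta)}\cdots e^{H_{\ell}\otimes h^{\ell}_{t}(\zeta)}$ fixes $\bra{v_{\Lambda}}$ because each $H_{i}\otimes h^{i}_{t}(\zeta)$ is a sum of $H_{i}(-m)$ with $m\ge 1$, whose adjoints $-H_{i}(m)$ annihilate $v_{\Lambda}$. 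Inserting $\mrm{Id}=\scr{G}_{t}\scr{G}_{t}^{-1}$ therefore gives
\begin{equation*}
\braket{v_{\Lambda}|L(z)\scr{G}_{t}|v_{\Lambda}}=\braket{v_{\Lambda}|\scr{G}_{t}\bigl(\scr{G}_{t}^{-1}L(z)\scr{G}_{t}\bigr)|v_{\Lambda}}=\braket{v_{\Lambda}|\scr{G}_{t}^{-1}L(z)\scr{G}_{t}|v_{\Lambda}}.
\end{equation*}
Into this I would substitute the transformation rule for $\scr{G}_{t}^{-1}L(z)\scr{G}_{t}$ derived just before the theorem (itself a consequence of Eq.~(\ref{eq:virasoro_internal}), the mutual commutativity of the $H_{i}(z)$, and the $Q(\rho_{t})$-covariance of $L(z)$), and evaluate the matrix element term by term using $\braket{v_{\Lambda}|L(w)|v_{\Lambda}}=h_{\Lambda}w^{-2}$ (from $L_{0}v_{\Lambda}=h_{\Lambda}v_{\Lambda}$, $L_{n}v_{\Lambda}=0$ for $n>0$, and $\braket{v_{\Lambda}|L_{-n}v_{\Lambda}}=0$ for $n>0$ by the grading) and $\braket{v_{\Lambda}|H_{i}(w)|v_{\Lambda}}=\lambda\delta_{1i}w^{-1}$ (from $H_{i}(0)v_{\Lambda}=(\Lambda|H_{i})v_{\Lambda}=\lambda\delta_{1i}v_{\Lambda}$, since $\Lambda=\lambda H_{1}$ in the orthonormal basis, the other modes contributing $0$ by annihilation or grading). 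Putting $w=\rho_{t}(z)$ turns the first term into $h_{\Lambda}(\del\rho_{t}(z)/\rho_{t}(z))^{2}$ and the current sum into $-\lambda\,\del h^{1}_{t}(z)\,\del\rho_{t}(z)/\rho_{t}(z)$, while the scalar Schwarzian term $\frac{c}{12}(S\rho_{t})(z)$ and $\frac12\sum_{i}\del h^{i}_{t}(z)^{2}$ pass through untouched since $\braket{v_{\Lambda}|v_{\Lambda}}=1$; this reproduces the claimed formula.

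I do not expect a real obstacle: the genuinely computational step, the transformation of $L(z)$ under $\scr{G}_{t}$, is already in hand, and everything else is bookkeeping parallel to the $H(z)$ theorem. The only point requiring a little care is the precise meaning of ``local martingale'' for the $z$-dependent object, which I would handle exactly as above — coefficientwise in the $z^{-1}$-expansion, or pointwise on the domain of $\rho_{t}$ — remarking that the two readings coincide wherever the series converges.
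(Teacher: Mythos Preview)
Your proposal is correct and follows essentially the same approach as the paper: the paper derives the transformation rule for $\scr{G}_{t}^{-1}L(z)\scr{G}_{t}$ immediately before the theorem, and the intended proof (parallel to the explicit proof of Theorem~\ref{thm:local_martingales_affine_current}) consists of combining Prop.~\ref{prop:martingale_affine} with Prop.~\ref{prop:annihilator_Heisenberg}, the invariance $\bra{v_{\Lambda}}\scr{G}_{t}=\bra{v_{\Lambda}}$, and the elementary matrix-element evaluations you wrote out.
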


Since on our representation space $L_{\mfrak{g}}(\Lambda,1)$ the Virasoro field is realized by using current fields,
the local martingale $\braket{v_{\Lambda}|L(z)\scr{G}_{t}|v_{\Lambda}}$ has another description.
From the transformation rule of a current field $H(z)$, its positive and negative power parts are transformed as
\begin{align}
	\scr{G}_{t}^{-1}H(z)_{+}\scr{G}_{t}=
	&\sum_{m\in\mbb{Z}}\mrm{Res}_{w}\frac{\del \rho_{t}(w)\rho_{t}(w)^{-m-1}}{w-z}H(m)
	-\mrm{Res}_{w}\frac{1}{w-z}\sum_{i=1}^{\ell}(H_{i}|H)\del h^{i}_{t}(w), \\
	\scr{G}_{t}^{-1}H(z)_{-}\scr{G}_{t}=
	&\sum_{m\in\mbb{Z}}\mrm{Res}_{w}\frac{\del \rho_{t}(w) \rho_{t}(w)^{-m-1}}{z-w}H(m)
	-\mrm{Res}_{w}\frac{1}{z-w}\sum_{i=1}^{\ell}(H_{i}|H)\del h^{i}_{t}(w).
\end{align}
Here rational functions $\frac{1}{z-w}$ and $\frac{1}{w-z}$ are expanded in regions $|z|>|w|$ and $|w|>|z|$, respectively.
We will use a similar convention in the following.
Thus the local martingale associated with the normal ordered product $\no{H(z)^{2}}$ is computed as
\begin{align}
	\braket{v_{\Lambda}|\no{H(z)^{2}}\scr{G}_{t}|v_{\Lambda}}=
	&(H|H)\mrm{Res}_{w}\left[\frac{\del\rho_{t}(w)}{w-z}\del_{z}\left(\frac{1}{\rho_{t}(w)-\rho_{t}(z)}\right)-\frac{\del\rho_{t}(w)}{z-w}\del_{z}\left(\frac{\rho_{t}(z)\rho_{t}(w)^{-1}}{\rho_{t}(z)-\rho_{t}(w)}\right)\right]  \notag \\
	&+(\lambda (H_{1}|H))^{2}\left(\frac{\del \rho_{t}(z)}{\rho_{t}(z)}\right)^{2}
	-2\lambda (H_{1}|H)\sum_{i=1}^{\ell}(H_{i}|H)\del h^{i}_{t}(z)\frac{\del \rho_{t}(z)}{\rho_{t}(z)}
\end{align}
This enables us to derive another form of the local martingale $\braket{v_{\Lambda}|L(z)\scr{G}|v_{\Lambda}}$ so that
\begin{align}
	\braket{v_{\Lambda}|L(z)\scr{G}_{t}|v_{\Lambda}}=
	&\frac{\ell}{2}\mrm{Res}_{w}\left[\frac{\del\rho_{t}(w)}{w-z}\del_{z}\left(\frac{1}{\rho_{t}(w)-\rho_{t}(z)}\right)-\frac{\del\rho_{t}(w)}{z-w}\del_{z}\left(\frac{\rho_{t}(z)\rho_{t}(w)^{-1}}{\rho_{t}(z)-\rho_{t}(w)}\right)\right]  \notag \\
	&+h_{\Lambda}\left(\frac{\del \rho_{t}(z)}{\rho_{t}(z)}\right)^{2}-\lambda \frac{\del\rho_{t}(z)}{\rho_{t}(z)}\del h^{1}_{t}(z).
\end{align}
Comparing this with the same quantity, which is seemingly different, derived previously,
we obtain an equality among random processes
\begin{align}
	&\frac{\ell}{2}\mrm{Res}_{w}\left[\frac{\del\rho_{t}(w)}{w-z}\del_{z}\left(\frac{1}{\rho_{t}(w)-\rho_{t}(z)}\right)-\frac{\del\rho_{t}(w)}{z-w}\del_{z}\left(\frac{\rho_{t}(z)\rho_{t}(w)^{-1}}{\rho_{t}(z)-\rho_{t}(w)}\right)\right] \notag \\
	&=\frac{c}{12}(S\rho_{t})(z)+\frac{1}{2}\sum_{i=1}^{\ell}(\del h^{i}_{t}(z))^{2}.
\end{align}

\subsection{{{Case when}} $\mfrak{g}=\mfrak{sl}_{2}$}
We {{formulate}} in this subsection several local martingales associated with {{SLEs}} with affine symmetry
that are generated by a local martingale $\scr{G}_{t}v_{\Lambda}$ in $\overline{L_{\mfrak{sl}_{2}}(\Lambda,k)}$.
We treat the case when $\Lambda$ is the fundamental weight of $\mfrak{sl}_{2}$ and $k=1$
and use the description $L(\Lambda)=\mbb{C}v_{\Lambda}\oplus \mbb{C}v_{-\Lambda}$ of the fundamental weight as in Prop. \ref{prop:annihilator_sl2}

Firstly we {{note transformation formulas}} for current fields $X(z)$ for $X=E,H,F$ under the adjoint action by $\scr{G}_{t}^{-1}$.
\begin{lem}
\label{lem:transform_current}
\begin{align}
	\scr{G}_{t}^{-1}E(z)\scr{G}_{t}=
	&e^{-2h_{t}(z)}\del \rho_{t}(z) E(\rho_{t}(z)) +e^{-2h_{t}(z)}f_{t}(z)\del \rho_{t}(z)H(\rho_{t}(z)) \notag \\
	&-e^{-2h_{t}(z)}f_{t}(z)^{2}\del\rho_{t}(z)F(\rho_{t}(z)) -k\del f_{t}(z),  \\
	\scr{G}_{t}^{-1}H(z)\scr{G}_{t}=
	&2e^{-2h_{t}(z)}e_{t}(z)\del\rho_{t}(z)E(\rho_{t}(z))+(1+2e^{-2h_{t}(z)}e_{t}(z)f_{t}(z))\del\rho_{t}(z)H(\rho_{t}(z)) \notag \\
	&-(2f_{t}(z)+2e^{-2h_{t}(z)}e_{t}(z)f_{t}(z)^{2})\del\rho_{t}(z) F(\rho_{t}(z)) \notag \\
	&-k(2\del h_{t}(z)+2e^{-2h_{t}(z)}e_{t}(z)\del f_{t}(z)),  \\
	\scr{G}_{t}^{-1}F(z)\scr{G}_{t}=
	&-e^{-2h_{t}(z)}e_{t}(z)^{2}\del \rho_{t}(z) E(\rho_{t}(z)) \notag \\
	&-(e_{t}(z)+e^{-2h_{t}(z)}e_{t}(z)^{2}f_{t}(z))\del \rho_{t}(z) H(\rho_{t}(z)) \notag \\
	&+(2e_{t}(z)f_{t}(z)+e^{-2h_{t}(z)}e_{t}(z)^{2}f_{t}(z)^{2})\del\rho_{t}(z) F(\rho_{t}(z)) \notag \\
	&+k(2e_{t}(z)\del f_{t}(z)+e^{-2h_{t}(z)}e_{t}(z)^{2}\del f_{t}(z)-\del e_{t}(z)).
\end{align}
\end{lem}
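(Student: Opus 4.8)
The plan is to use the factorisation $\scr{G}_t=\Theta_t Q(\rho_t)$ with $\Theta_t=e^{{\bf e}_t}e^{{\bf h}_t}e^{{\bf f}_t}$, so that for each $X\in\{E,H,F\}$
\begin{equation*}
	\scr{G}_t^{-1}X(z)\scr{G}_t=Q(\rho_t)^{-1}(\Theta_t^{-1}X(z)\Theta_t)Q(\rho_t),
\end{equation*}
and to evaluate the right-hand side in two stages: first the inner conjugation by the internal-symmetry factor $\Theta_t$, then the outer conformal conjugation by $Q(\rho_t)$.

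For the outer stage I would use that each vector $X(-1)\ket{0}$ is a primary vector of conformal weight $1$ (indeed $L_nX(-1)\ket{0}=X(n-1)\ket{0}=0$ for $n\ge1$ and $L_0X(-1)\ket{0}=X(-1)\ket{0}$), so that Prop.\ \ref{prop:transformation_zero} specialises to $Q(\rho_t)^{-1}X(z)Q(\rho_t)=\del\rho_t(z)\,X(\rho_t(z))$ for every $X\in\mfrak{sl}_2$. The coefficients that will have been produced in $\Theta_t^{-1}X(z)\Theta_t$ are $\mbb{C}$-valued random series in $z^{-1}$ assembled from $e_t(z),h_t(z),f_t(z)$ and their $z$-derivatives; these are central in $\mrm{End}(\overline{L_{\mfrak{sl}_2}(\Lambda,1)})$, so $Q(\rho_t)$ commutes through them. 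Hence the outer stage amounts simply to replacing each bare current $X(z)$ by $\del\rho_t(z)\,X(\rho_t(z))$ and leaving all scalar factors unchanged.

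For the inner stage I would compute $\Theta_t^{-1}X(z)\Theta_t=e^{-{\bf f}_t}e^{-{\bf h}_t}e^{-{\bf e}_t}\,X(z)\,e^{{\bf e}_t}e^{{\bf h}_t}e^{{\bf f}_t}$ by peeling off the three exponentials from the inside out, each adjoint action $e^{-{\bf a}_t}(\,\cdot\,)e^{{\bf a}_t}$ being read off from the list in Sect.\ \ref{subsect:formulae_sl2} after rearranging those identities (which are stated in the form $X(z)=e^{{\bf a}}(\cdots)e^{-{\bf a}}+(\text{anomaly})$) and flipping the sign $a\mapsto -a$. For instance, when $X=E$, formula (5) gives $e^{-{\bf e}_t}E(z)e^{{\bf e}_t}=E(z)$, formula (4) gives $e^{-{\bf h}_t}E(z)e^{{\bf h}_t}=e^{-2h_t(z)}E(z)$, and formula (6) gives $e^{-{\bf f}_t}E(z)e^{{\bf f}_t}=E(z)+f_t(z)H(z)-f_t(z)^2F(z)-k\del f_t(z)$; composing these three and then applying the substitution $X(z)\mapsto\del\rho_t(z)\,X(\rho_t(z))$ from the outer stage yields the asserted formula for $\scr{G}_t^{-1}E(z)\scr{G}_t$, and the cases $X=H,F$ are handled by the same procedure using the relevant subsets of the nine formulas. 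The one genuinely delicate point is the bookkeeping in this inner stage: each of the three successive conjugations acts also on the currents that the previous one has produced, so the coefficients accumulate products such as $e^{-2h_t(z)}e_t(z)f_t(z)$, and the central term of $\what{\mfrak{sl}}_2$ contributes scalar anomaly terms proportional to $k$ — involving $\del e_t(z)$, $\del f_t(z)$, $e^{-2h_t(z)}e_t(z)\del f_t(z)$ and the like — that must be carried correctly through all remaining conjugations. Beyond this bookkeeping, and keeping the order of the three exponentials straight, no input is needed other than the transformation formulas of Sect.\ \ref{sect:internal_symmetry} and the weight-one covariance just recalled.
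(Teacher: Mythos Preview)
Your proposal is correct and follows exactly the route the paper has set up: the lemma is stated without proof, but the intended derivation is precisely the two-stage computation you describe—first conjugating by $\Theta_t^{-1}=e^{-{\bf f}_t}e^{-{\bf h}_t}e^{-{\bf e}_t}$ using the nine adjoint formulas of Sect.~\ref{subsect:formulae_sl2}, then applying the weight-one primary transformation $Q(\rho_t)^{-1}X(z)Q(\rho_t)=\del\rho_t(z)\,X(\rho_t(z))$. The analogous mode-level computation carried out in Lemma~\ref{lem:adjoint_GinvXG} of Appendix~\ref{sect:app_operator_X} confirms that this is the paper's approach as well.

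One small remark on wording: to extract $e^{-{\bf a}}X(z)e^{{\bf a}}$ from an identity of the form $X(z)=e^{{\bf a}}(\cdots)e^{-{\bf a}}+(\text{anomaly})$, it suffices to conjugate both sides by $e^{-{\bf a}}(\cdot)e^{{\bf a}}$; the anomaly is scalar and passes through unchanged. No additional sign flip $a\mapsto -a$ is needed on top of this rearrangement. Your explicit worked examples for $X=E$ are all correct, so you are clearly executing the right procedure regardless.
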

This will allow us to compute local martingales of the form $\braket{v_{\pm\Lambda}|X(z)\scr{G}_{t}|v_{\pm\Lambda}}$ for $X=E,H,F$.

\begin{thm}
\label{thm:local_martingales_affine_current}
Let $\Lambda$ be the fundamental weight of $\mfrak{sl}_{2}$,
and the fundamental representation of $\mfrak{sl}_{2}$ be described by $L(\Lambda)=\mbb{C}v_{\Lambda}\oplus \mbb{C}v_{-\Lambda}$ as in Prop. \ref{prop:annihilator_sl2}.
We assume that $\kappa$ and $\tau$ be positive real numbers satisfying the relation $\kappa+2\tau-4=0$.
For the SLE$(\kappa)$ $g_{t}(z)=\rho_{t}(z)+B_{t}$ and random processes $e_{t}(z)$, $h_{t}(z)$ and $f_{t}(z)$ satisfying
the SDEs in Prop. \ref{prop:sde_internal_sl2},
the following quantities are local martingales.

\begin{enumerate}
\item 	$X=E$.
		\begin{align}
			\braket{v_{\Lambda}|E(z)\scr{G}_{t}|v_{\Lambda}}&=e^{-2h_{t}(z)}f_{t}(z)\frac{\del \rho_{t}(z)}{\rho_{t}(z)}-\del f_{t}(z), \\
			\braket{v_{-\Lambda}|E(z)\scr{G}_{t}|v_{\Lambda}}&=-e^{-2h_{t}(z)}f_{t}(z)^{2}\frac{\del \rho_{t}(z)}{\rho_{t}(z)}, \\
			\braket{v_{\Lambda}|E(z)\scr{G}_{t}|v_{-\Lambda}}&=e^{-2h_{t}(z)}\frac{\del \rho_{t}(z)}{\rho_{t}(z)}, \\
			\braket{v_{-\Lambda}|E(z)\scr{G}_{t}|v_{-\Lambda}}&=-e^{-2h_{t}(z)}f_{t}(z)\frac{\del \rho_{t}(z)}{\rho_{t}(z)}-\del f_{t}(z).
		\end{align}

\item 	$X=H$.
		\begin{align}
			\braket{v_{\Lambda}|H(z)\scr{G}_{t}|v_{\Lambda}}=&(1+2e^{-2h_{t}(z)}e_{t}(z)f_{t}(z))\frac{\del \rho_{t}(z)}{\rho_{t}(z)}  \notag \\
										&-(2\del h_{t}(z)+2e^{-2h_{t}(z)}e_{t}(z)\del f_{t}(z)),  \\
			\braket{v_{-\Lambda}|H(z)\scr{G}_{t}|v_{\Lambda}}=&-(2f_{t}(z)+2e^{-2h_{t}(z)}e_{t}(z)f_{t}(z)^{2})\frac{\del \rho_{t}(z)}{\rho_{t}(z)}, \\
			\braket{v_{\Lambda}|H(z)\scr{G}_{t}|v_{-\Lambda}}=&2e^{-2h_{t}(z)}e_{t}(z)\frac{\del \rho_{t}(z)}{\rho_{t}(z)}, \\
			\braket{v_{-\Lambda}|H(z)\scr{G}_{t}|v_{-\Lambda}}=&-(1+2e^{-2h_{t}(z)}e_{t}(z)f_{t}(z))\frac{\del \rho_{t}(z)}{\rho_{t}(z)} \notag \\
										&-(2\del h_{t}(z)+2e^{-2h_{t}(z)}e_{t}(z)\del f_{t}(z)).
		\end{align}

\item 	$X=F$.
		\begin{align}
			\braket{v_{\Lambda}|F(z)\scr{G}_{t}|v_{\Lambda}}=&-(e_{t}(z)+e^{-2h_{t}(z)}e_{t}(z)^{2}f_{t}(z))\frac{\del \rho_{t}(z)}{\rho_{t}(z)} \notag \\
										&+(2e_{t}(z)\del f_{t}(z)+e^{-2h_{t}(z)}e_{t}(z)^{2}\del f_{t}(z)-\del e_{t}(z)),  \\
			\braket{v_{-\Lambda}|F(z)\scr{G}_{t}|v_{\Lambda}}=&(2e_{t}(z)f_{t}(z)+e^{-2h_{t}(z)}e_{t}(z)^{2}f_{t}(z)^{2})\frac{\del \rho_{t}(z)}{\rho_{t}(z)}, \\
			\braket{v_{\Lambda}|F(z)\scr{G}_{t}|v_{-\Lambda}}=&-e^{-2h_{t}(z)}e_{t}(z)^{2}\frac{\del \rho_{t}(z)}{\rho_{t}(z)}, \\
			\braket{v_{-\Lambda}|F(z)\scr{G}_{t}|v_{-\Lambda}}=&(e_{t}(z)+e^{-2h_{t}(z)}e_{t}(z)^{2}f_{t}(z))\frac{\del \rho_{t}(z)}{\rho_{t}(z)} \notag \\
										&+(2e_{t}(z)\del f_{t}(z)+e^{-2h_{t}(z)}e_{t}(z)^{2}\del f_{t}(z)-\del e_{t}(z)).
		\end{align}
\end{enumerate}
\end{thm}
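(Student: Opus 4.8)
The plan is to reduce everything to two ingredients: that $\scr{G}_{t}v_{\pm\Lambda}$ is a local martingale valued in the formal completion $\overline{L_{\mfrak{sl}_{2}}(\Lambda,1)}$, and that the dual vectors $\bra{v_{\pm\Lambda}}$ are invariant under the right action of $\scr{G}_{t}$. Granting these, the closed-form expressions will come from Lemma~\ref{lem:transform_current} together with a two-by-two matrix computation in the fundamental representation, and the local martingale property will come from expanding $X(z)$ into modes.

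First I would check that $\scr{G}_{t}v_{\pm\Lambda}$ is a local martingale. By Prop.~\ref{prop:annihilator_sl2}, the hypothesis $\kappa+2\tau-4=0$ guarantees that both $v_{\Lambda}$ and $v_{-\Lambda}=Fv_{\Lambda}$ are annihilated by an operator of the form Eq.~(\ref{eq:annihilator_affine}). The proof of Prop.~\ref{prop:martingale_affine} computes the increment of $\scr{G}_{t}v$ from the SDE Eq.~(\ref{eq:random_process_affine}) using only that $v$ is deterministic, so that the drift of $\scr{G}_{t}v$ is $\scr{G}_{t}$ applied to the annihilating operator acting on $v$; it uses nothing else about $v$, hence it applies verbatim with $v=v_{\pm\Lambda}$ even though $v_{-\Lambda}$ is not a highest weight vector but only a vector in the top space $L(\Lambda)$. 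Thus $\scr{G}_{t}v_{\pm\Lambda}$ is a local martingale, and $\braket{u|\scr{G}_{t}v_{\pm\Lambda}}$ is a scalar local martingale for every fixed $u\in L_{\mfrak{sl}_{2}}(\Lambda,1)$.

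Next, writing $\scr{G}_{t}=\Theta_{t}Q(\rho_{t})$ as in Sect.~\ref{sect:random_process}, I would observe that $\bra{v_{\pm\Lambda}}\scr{G}_{t}=\bra{v_{\pm\Lambda}}$: the operator $Q(\rho_{t})$ is built from the $L_{j}$ with $j<0$ (and $v_{0}^{-L_{0}}$ with $v_{0}=1$ since $\rho_{t}\in\mrm{Aut}_{+}\mcal{O}$), while $\Theta_{t}\in G_{+}(\mcal{O})$ is built from modes $X(m)$ with $m\le -1$; the adjoints of all these generators with respect to the invariant form are operators of strictly positive mode, which annihilate the whole top space $L(\Lambda)$, so $\bra{v_{\pm\Lambda}}$ is killed by each such generator acting from the right. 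Consequently, for $a,b\in\{+\Lambda,-\Lambda\}$ we get $\braket{v_{a}|X(z)\scr{G}_{t}|v_{b}}=\braket{v_{a}|\scr{G}_{t}^{-1}X(z)\scr{G}_{t}|v_{b}}$. Substituting the transformation formulas of Lemma~\ref{lem:transform_current} turns this into a $\mbb{C}[[z^{-1}]]$-linear combination of matrix elements $\braket{v_{a}|X'(\rho_{t}(z))|v_{b}}$ with $X'\in\{E,H,F\}$, plus scalar terms multiplied by $\braket{v_{a}|v_{b}}$. Since the bilinear form respects the $L_{0}$-grading and $v_{a},v_{b}$ lie in the top space, only the zero mode of $X'(\rho_{t}(z))$ contributes, so $\braket{v_{a}|X'(\rho_{t}(z))|v_{b}}=\rho_{t}(z)^{-1}\braket{v_{a}|X'v_{b}}$ with $E,H,F$ acting in $\mbb{C}v_{\Lambda}\oplus\mbb{C}v_{-\Lambda}$ and $\braket{v_{a}|v_{b}}=\delta_{a,b}$ under the normalization of Sect.~\ref{sect:rep_aff_alg}. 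Inserting $Ev_{\Lambda}=0$, $Ev_{-\Lambda}=v_{\Lambda}$, $Hv_{\pm\Lambda}=\pm v_{\pm\Lambda}$, $Fv_{\Lambda}=v_{-\Lambda}$, $Fv_{-\Lambda}=0$, setting $k=1$, and collecting terms reproduces the twelve identities in the statement.

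Finally, for the local martingale property, expanding $X(z)=\sum_{n}X(n)z^{-n-1}$, the coefficient of each power of $z$ in $\braket{v_{a}|X(z)\scr{G}_{t}|v_{b}}$ equals $\braket{X(n)^{\ast}v_{a}|\scr{G}_{t}v_{b}}$, the pairing of the fixed vector $X(n)^{\ast}v_{a}$ with the local martingale $\scr{G}_{t}v_{b}$, hence is itself a local martingale; therefore so is $\braket{v_{a}|X(z)\scr{G}_{t}|v_{b}}$, and by the previous paragraph it equals the stated closed form. The point needing the most care is precisely this passage from the abstract local martingale $\scr{G}_{t}v_{\pm\Lambda}$ to the field-valued quantities: one must verify that identifying $\braket{v_{a}|X(z)\scr{G}_{t}|v_{b}}$ with the closed form of the third paragraph is compatible with extracting Fourier coefficients in $z$ and with the convergence of the resulting series. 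This is legitimate because Lemma~\ref{lem:transform_current} is an identity of genuine fields and, for $v_{a}$ in the top space, only finitely many modes of $X(z)$ produce a nonzero degree-$h_{\Lambda}$ component. Everything else is the routine bookkeeping of which matrix elements of $E$, $H$, $F$ are nonzero.
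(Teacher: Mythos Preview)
Your proof is correct and follows essentially the same approach as the paper: invoke Prop.~\ref{prop:martingale_affine} and Prop.~\ref{prop:annihilator_sl2} to conclude that $\scr{G}_{t}v_{\pm\Lambda}$ are local martingales, observe that $\bra{v_{\pm\Lambda}}\scr{G}_{t}=\bra{v_{\pm\Lambda}}$, and then apply Lemma~\ref{lem:transform_current} together with the zero-mode action on the top space to obtain the closed-form expressions. You supply more detail than the paper's two-line argument (in particular the explicit justification that Prop.~\ref{prop:martingale_affine} extends to $v_{-\Lambda}$, and the mode-by-mode reasoning for the local martingale property), but the logical skeleton is identical.
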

\begin{proof}
By assumption, we have that $\scr{G}_{t}\ket{v_{\pm\Lambda}}$ are local {{martingales}} in $\overline{L_{\mfrak{sl}_{2}}(\Lambda,1)}$ from 
{{Prop. \ref{prop:martingale_affine} and Prop. \ref{prop:annihilator_sl2}}}.
Thus the quantities $\braket{u|\scr{G}_{t}|v_{\pm\Lambda}}$ are local martingales.
Noticing that $\bra{v_{\pm\Lambda}}\scr{G}_{t}=\bra{v_{\pm\Lambda}}$ and using the formula in Lemma \ref{lem:transform_current},
we obtain the desired result.
\end{proof}

We also compute the local martingales $\braket{v_{\pm\Lambda}|L(z)\scr{G}_{t}|v_{\pm\Lambda}}$ for the Virasoro field $L(z)$.
The Virasoro field is found to be transformed under the adjoint action by $\scr{G}_{t}^{-1}$ as follows.
\begin{lem}
\begin{align}
	\scr{G}_{t}^{-1}L(z)\scr{G}_{t}=
	&(\del\rho_{t}(z))^{2}L(\rho_{t}(z)) \notag \\
	&-e^{-2h_{t}(z)}\del e_{t}(z)\del \rho_{t}(z) E(\rho_{t}(z)) \notag \\
	&-(\del h_{t}(z)+e^{-2h_{t}(z)}f_{t}(z)\del e_{t}(z))\del \rho_{t}(z) H(\rho_{t}(z)) \notag \\
	&-(\del f_{t}(z)-2f_{t}(z)\del h_{t}(z)-e^{-2h_{t}(z)}f_{t}(z)^{2}\del e_{t}(z))\del \rho_{t}(z) F(\rho_{t}(z)) \notag \\
	&+k((\del h_{t}(z))^{2}+e^{-2h_{t}(z)}\del e_{t}(z) \del f_{t}(z))+\frac{c}{12}(S\rho_{t})(z). 
\end{align}
\end{lem}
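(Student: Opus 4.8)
The plan is to use the factorization $\scr{G}_t=\Theta_t Q(\rho_t)$ with $\Theta_t=e^{{\bf e}_t}e^{{\bf h}_t}e^{{\bf f}_t}$, so that $\scr{G}_t^{-1}L(z)\scr{G}_t=Q(\rho_t)^{-1}\bigl(\Theta_t^{-1}L(z)\Theta_t\bigr)Q(\rho_t)$, and to compute the inner conjugation $\Theta_t^{-1}L(z)\Theta_t=e^{-{\bf f}_t}e^{-{\bf h}_t}e^{-{\bf e}_t}L(z)e^{{\bf e}_t}e^{{\bf h}_t}e^{{\bf f}_t}$ by peeling off the three exponentials one at a time. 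Each peeling is governed by the transformation rule Eq.(\ref{eq:virasoro_internal}) of $L(z)$ under the adjoint action of an element of $G(\mcal{O})$, read in the inverse direction (i.e.\ with ${\bf a}$ replaced by $-{\bf a}$), together with the current transformation formulas of Sect.~\ref{subsect:formulae_sl2} likewise read in the inverse direction. After the three peelings I will conjugate the resulting expression by $Q(\rho_t)$, using that each current $X(z)$, $X=E,H,F$, is a primary field of conformal weight $1$ — so that $Q(\rho_t)^{-1}X(z)Q(\rho_t)=X(\rho_t(z))\del\rho_t(z)$, exactly as in the commutative computation already carried out — and that $Q(\rho_t)^{-1}L(z)Q(\rho_t)=(\del\rho_t(z))^2 L(\rho_t(z))+\frac{c}{12}(S\rho_t)(z)$, which is the $Q$-version at $0$ of the Virasoro transformation obtained by applying Prop.~\ref{prop:transformation_zero} to $A=L_{-2}\ket{0}$.

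In more detail: at the first peeling $A=E$, $a=e_t$, and since $(E|E)=0$ and $e^{-{\bf e}_t}E(z)e^{{\bf e}_t}=E(z)$ (item (5) of Sect.~\ref{subsect:formulae_sl2}, the central correction vanishing because $(E|E)=0$), Eq.(\ref{eq:virasoro_internal}) gives $e^{-{\bf e}_t}L(z)e^{{\bf e}_t}=L(z)-\del e_t(z)E(z)$. At the second peeling $A=H$, $a=h_t$, $(H|H)=2$: Eq.(\ref{eq:virasoro_internal}) yields $e^{-{\bf h}_t}L(z)e^{{\bf h}_t}=L(z)-\del h_t(z)\,e^{-{\bf h}_t}H(z)e^{{\bf h}_t}-k(\del h_t(z))^2$, and item (1) of Sect.~\ref{subsect:formulae_sl2} supplies the central shift $e^{-{\bf h}_t}H(z)e^{{\bf h}_t}=H(z)-2k\del h_t(z)$, so that the two quadratic-in-$\del h_t$ terms combine to $+k(\del h_t(z))^2$; one also needs item (4), namely $e^{-{\bf h}_t}E(z)e^{{\bf h}_t}=e^{-2h_t(z)}E(z)$. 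At the third peeling $A=F$, $a=f_t$, $(F|F)=0$: Eq.(\ref{eq:virasoro_internal}) gives $e^{-{\bf f}_t}L(z)e^{{\bf f}_t}=L(z)-\del f_t(z)F(z)$, and items (3), (6), (9) of Sect.~\ref{subsect:formulae_sl2}, read in the inverse direction and combined, give $e^{-{\bf f}_t}H(z)e^{{\bf f}_t}=H(z)-2f_t(z)F(z)$, $e^{-{\bf f}_t}E(z)e^{{\bf f}_t}=E(z)+f_t(z)H(z)-f_t(z)^2F(z)-k\del f_t(z)$, and $e^{-{\bf f}_t}F(z)e^{{\bf f}_t}=F(z)$. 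Substituting back and collecting the coefficients of $L(z)$, $E(z)$, $H(z)$, $F(z)$ and the remaining scalar, one obtains $\Theta_t^{-1}L(z)\Theta_t$ as the sum of $L(z)$, of scalar-function multiples of $E(z),H(z),F(z)$, and of a scalar term; note that the coefficient of $F(z)$ collects contributions from all three peelings, and that the vanishings $(E|E)=(F|F)=0$ are what prevent any $\del e_t(z)^2$ or $\del f_t(z)^2$ term from surviving.

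Finally, applying $Q(\rho_t)^{-1}(\,\cdot\,)Q(\rho_t)$ termwise: each current becomes $X(\rho_t(z))\del\rho_t(z)$, the $L(z)$ term produces $(\del\rho_t(z))^2 L(\rho_t(z))$ together with the Schwarzian $\frac{c}{12}(S\rho_t)(z)$, and the scalar coefficients — which depend only on the formal variable $z$ and not on the module — pass through $Q(\rho_t)$ unchanged. This reproduces the stated formula. The argument is entirely structural once these building blocks are in place; the only real work, and the most likely spot for a sign slip, is the order-sensitive bookkeeping of the three nested conjugations — in particular remembering the central shift in $e^{-{\bf h}_t}H(z)e^{{\bf h}_t}$ when recombining the terms quadratic in $\del h_t$ — and checking the sign and normalization of the Schwarzian term against the commutative computation already done. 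I do not expect any conceptual obstacle.
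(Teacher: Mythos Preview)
Your proposal is correct and follows exactly the route the paper intends: the lemma is stated without proof, but the computation is the direct $\mfrak{sl}_2$ analogue of the commutative-case calculation already carried out (using Eq.~(\ref{eq:virasoro_internal}) together with the conjugation formulas of Sect.~\ref{subsect:formulae_sl2}, then Prop.~\ref{prop:transformation_zero} for the $Q(\rho_t)$ conjugation). Your bookkeeping of the three nested conjugations and of the central shifts is accurate and reproduces the stated formula.
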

\begin{thm}
Let $\Lambda$ be the fundamental weight of $\mfrak{sl}_{2}$,
and the fundamental representation of $\mfrak{sl}_{2}$ be described by $L(\Lambda)=\mbb{C}v_{\Lambda}\oplus \mbb{C}v_{-\Lambda}$ as in Prop. \ref{prop:annihilator_sl2}.
We assume that $\kappa$ and $\tau$ be positive real numbers satisfying the relation $\kappa+2\tau-4=0$.
For the SLE$(\kappa)$ $g_{t}(z)=\rho_{t}(z)+B_{t}$ and random processes $e_{t}(z)$, $h_{t}(z)$ and $f_{t}(z)$ satisfying
the SDEs in Prop. \ref{prop:sde_internal_sl2},
the following quantities are local martingales:
\begin{align}
	\braket{v_{\Lambda}|L(z)\scr{G}_{t}|v_{\Lambda}}=
	&\frac{1}{4}\left(\frac{\del\rho_{t}(z)}{\rho_{t}(z)}\right)^{2}-(\del h_{t}(z)+e^{-2h_{t}(z)}f_{t}(z)\del e_{t}(z))\frac{\del \rho_{t}(z)}{\rho_{t}(z)} \notag \\
	&+((\del h_{t}(z))^{2}+e^{-2h_{t}(z)}\del e_{t}(z) \del f_{t}(z))+\frac{1}{12}(S\rho_{t})(z),  \\
	\braket{v_{-\Lambda}|L(z)\scr{G}_{t}|v_{\Lambda}}=
	&-(\del f_{t}(z)-2f_{t}(z)\del h_{t}(z)-e^{-2h_{t}(z)}f_{t}(z)^{2}\del e_{t}(z))\frac{\del \rho_{t}(z)}{\rho_{t}(z)}, \\
	\braket{v_{\Lambda}|L(z)\scr{G}_{t}|v_{-\Lambda}}=
	&-e^{-2h_{t}(z)}\del e_{t}(z)\frac{\del \rho_{t}(z)}{\rho_{t}(z)}, \\
	\braket{v_{-\Lambda}|L(z)\scr{G}_{t}|v_{-\Lambda}}=
	&\frac{1}{4}\left(\frac{\del\rho_{t}(z)}{\rho_{t}(z)}\right)^{2}+(\del h_{t}(z)+e^{-2h_{t}(z)}f_{t}(z)\del e_{t}(z))\frac{\del \rho_{t}(z)}{\rho_{t}(z)} \notag \\
	&+((\del h_{t}(z))^{2}+e^{-2h_{t}(z)}\del e_{t}(z) \del f_{t}(z))+\frac{1}{12}(S\rho_{t})(z).
\end{align}
\end{thm}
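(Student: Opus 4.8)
The proof runs parallel to that of Theorem \ref{thm:local_martingales_affine_current}. Since $\kappa$ and $\tau$ satisfy $\kappa+2\tau-4=0$, Prop. \ref{prop:annihilator_sl2} tells us that $v_{\pm\Lambda}$ is annihilated by the operator in Eq.(\ref{eq:annihilator_affine}) for $\mfrak{g}=\mfrak{sl}_{2}$, and hence by Prop. \ref{prop:martingale_affine} the random process $\scr{G}_{t}v_{\pm\Lambda}$ is a local martingale in $\overline{L_{\mfrak{sl}_{2}}(\Lambda,1)}$. Pairing with any fixed vector then produces a local martingale, and extracting the coefficient of each power of $z$ shows that $\braket{v_{\pm\Lambda}|L(z)\scr{G}_{t}|v_{\pm\Lambda}}$ is a (formal-series-valued) local martingale. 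It remains to identify it with the stated expressions.

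To do so I would first bring the matrix element into a form to which the preceding Lemma applies. Because $\scr{G}_{t}=\Theta_{t}Q(\rho_{t})$ is built out of the modes $X(-m)$ with $m\ge 1$ (from $\Theta_{t}\in G_{+}(\mcal{O})$) and $L_{j}$ with $j<0$ together with $v_{0}^{-L_{0}}$, $v_{0}=1$ (from $Q(\rho_{t})$), and because these negative modes act as zero on the dual top-space vectors via the invariance of the bilinear form ($\braket{X(n)u|v}=-\braket{u|X(-n)v}$ and $\braket{L_{n}u|v}=\braket{u|L_{-n}v}$) while $v_{0}^{-L_{0}}$ acts trivially when $v_{0}=1$, one has $\bra{v_{\pm\Lambda}}\scr{G}_{t}=\bra{v_{\pm\Lambda}}$. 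Inserting $\scr{G}_{t}\scr{G}_{t}^{-1}=\mrm{Id}$ therefore gives
\begin{equation}
	\braket{v_{\pm\Lambda}|L(z)\scr{G}_{t}|v_{\pm\Lambda}}=\braket{v_{\pm\Lambda}|\scr{G}_{t}^{-1}L(z)\scr{G}_{t}|v_{\pm\Lambda}},
\end{equation}
so that I may substitute the expression for $\scr{G}_{t}^{-1}L(z)\scr{G}_{t}$ supplied by the preceding Lemma.

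It then remains to evaluate the finitely many surviving matrix elements. The $L(\rho_{t}(z))$-term contributes only through $L_{0}$, so $\braket{v_{\pm\Lambda}|L(w)|v_{\pm\Lambda}}=h_{\Lambda}w^{-2}$ with $h_{\Lambda}=\frac{(\Lambda|\Lambda+2\rho_{\mfrak{g}})}{2(k+h^{\vee}_{\mfrak{g}})}=\frac{1}{4}$, while $\braket{v_{\pm\Lambda}|L(w)|v_{\mp\Lambda}}=0$ on weight grounds. Each current field $X(\rho_{t}(z))$ contributes only through its zero mode, so $\braket{v_{\epsilon\Lambda}|X(w)|v_{\epsilon'\Lambda}}$ is read off from the action of $E,H,F$ on the fundamental representation $L(\Lambda)=\mbb{C}v_{\Lambda}\oplus\mbb{C}v_{-\Lambda}$, using $Hv_{\pm\Lambda}=\pm v_{\pm\Lambda}$, $Ev_{\Lambda}=0$, $Ev_{-\Lambda}=v_{\Lambda}$, $Fv_{\Lambda}=v_{-\Lambda}$, $Fv_{-\Lambda}=0$, together with the normalization of the invariant form; this kills the $E$ and $F$ terms in the diagonal matrix elements and leaves exactly one current contributing in each off-diagonal one. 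Finally the two scalar terms $k((\del h_{t}(z))^{2}+e^{-2h_{t}(z)}\del e_{t}(z)\del f_{t}(z))$ and $\frac{c}{12}(S\rho_{t})(z)$ pass unchanged through the diagonal matrix elements and vanish in the off-diagonal ones; specializing $k=1$ and $c=c_{\mfrak{sl}_{2},1}=1$ turns these into the terms in the statement. Collecting the contributions yields the four displayed formulas.

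There is no real obstacle once the preceding Lemma is available; the remaining steps are essentially bookkeeping. The only points demanding care are the verification of the right-invariance $\bra{v_{\pm\Lambda}}\scr{G}_{t}=\bra{v_{\pm\Lambda}}$ and the correct tracking of signs and weights in the zero-mode matrix elements for $v_{-\Lambda}=Fv_{\Lambda}$, together with the numerical inputs $h_{\Lambda}=1/4$, $k=1$, and $c=1$.
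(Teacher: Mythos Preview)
Your proof is correct and follows essentially the same approach as the paper's: invoke Prop.~\ref{prop:martingale_affine} together with Prop.~\ref{prop:annihilator_sl2} to get the local martingale property, use the invariance $\bra{v_{\pm\Lambda}}\scr{G}_{t}=\bra{v_{\pm\Lambda}}$, substitute the preceding Lemma for $\scr{G}_{t}^{-1}L(z)\scr{G}_{t}$, and evaluate the resulting zero-mode matrix elements with $h_{\Lambda}=\tfrac{1}{4}$, $k=1$, $c=1$. The paper's proof is in fact just a two-line pointer to the proof of Theorem~\ref{thm:local_martingales_affine_current} plus the numerical values $c=1$ and $h_{\Lambda}=\tfrac{1}{4}$, so your write-up is more detailed but not different in substance.
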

\begin{proof}
The proof is analogous to {{that}} of Theorem \ref{thm:local_martingales_affine_current}.
We note that on $L_{\mfrak{sl}_{2}}(\Lambda,1)$, the central charge is $c=1$
and the conformal weight of the highest weight vector $v_{\Lambda}$ is $\frac{1}{4}$.
\end{proof}

%symmetry
\section{Symmetry of the space of local martingales}
\label{sect:affine_symmetry}
In the previous section, we saw that a local martingale $\scr{G}_{t}\ket{v_{\Lambda}}$ that takes its value in $\overline{L_{\mfrak{sl}_{2}}(\Lambda,k)}$
generates several local martingales.
We shall describe this phenomenon from a different point of view.

Let $\mcal{Y}(-,z)$ be an intertwining operator of type $\binom{L_{\mfrak{sl}_{2}}(\Lambda,k)}{L_{\mfrak{sl}_{2}}(\Lambda,k)\ \ L_{\mfrak{sl}_{2},k}}$.
Then for a vector $v\in L_{\mfrak{g}}(\Lambda,k)$, we have $\mcal{Y}(v,z)\ket{0}=e^{zL_{-1}}v$.
This implies that for a vector $v \in L(\Lambda)$ in the top space of $L_{\mfrak{g}}(\Lambda,k)$
that is annihilated by an operator of the form of Eq.(\ref{eq:annihilator_affine}),
\begin{equation}
	\label{eq:affine_martingale_vector}
	\scr{G}_{t}v=\Theta_{t}Q(g_{t})\mcal{Y}(v,B_{t})\ket{0}
\end{equation}
is a local martingale.

For a generic element in $\mrm{Aut}_{+}\mcal{O}\ltimes G_{+}(\mcal{O})$ and an intertwining operator $\mcal{Y}(-,z)$
of type $\binom{L_{\mfrak{g}}(\Lambda,k)}{L_{\mfrak{g}}(\Lambda,k)\ \ L_{\mfrak{g},k}}$, the quantity
\begin{equation}
	\mcal{M}_{u}=\braket{u|\scr{G}\mcal{Y}(-,x)|0} \in L(\Lambda)^{\ast}[g_{n+1},e_{n},h_{n},f_{n}|n<0][[x]]=:\mcal{F}_{\mrm{aff}}(\Lambda)
\end{equation}
for any vector $u\in L_{\mfrak{g}}(\Lambda,k)$ gives a local martingale when we evaluate $g_{n}$, $e_{n}$, $h_{n}$, $f_{n}$ at the SLE with internal degrees of freedom,
and $x$ at the Brownian motion of variance $\kappa$.
Thus we may find the space of local martingales as a subspace of $\mcal{F}_{\mrm{aff}}(\Lambda)$.
Since $u$ is arbitrarily taken, the quantity $\mcal{M}_{X(\ell)u}$ for $X\in\mfrak{sl}_{2}$ and $\ell\in\mbb{Z}$ has the same property.
Thus if we find a operator $\scr{X}_{\ell}$ such that $\mcal{M}_{X(\ell)u}=\scr{X}_{\ell}\mcal{M}_{u}$,
it can describe affine Lie algebra symmetry of a space of local martingales in $\mcal{F}_{\mrm{aff}}(\Lambda)$.
The derivation of the operators $\scr{X}_{\ell}$ is presented in Appendix \ref{sect:app_operator_X},
and we {{note the results here:}}
\begin{align}
	\scr{E}_{\ell}=
	&-\sum_{n\le -1}\mrm{Res}_{z}\mrm{Res}_{w}\frac{w^{-n-1}e^{2h(w)}e^{-2h(z)}z^{-\ell}\pr{g}(z)}{g(w)-g(z)}\frac{\del}{\del e_{n}}   \notag \\
	&-\sum_{n\le -1}\mrm{Res}_{z}\mrm{Res}_{w}\frac{w^{-n-1}e^{-2h(z)}(f(z)-f(w))z^{-\ell}\pr{g}(z)}{g(w)-g(z)}\frac{\del}{\del h_{n}} \notag \\
	&+\sum_{n\le -1}\mrm{Res}_{z}\mrm{Res}_{w}\frac{w^{-n-1}e^{-2h(z)}(f(z)-f(w))^{2}z^{-\ell}\pr{g}(z)}{g(w)-g(z)}\frac{\del}{\del f_{n}} \notag \\
	&+\mrm{Res}_{z}\frac{e^{-2h(z)}z^{-\ell}\pr{g}(z)}{g(z)-x}\pi(E) \notag \\
	&+\mrm{Res}_{z}\frac{e^{-2h(z)}f(z)z^{-\ell}\pr{g}(z)}{g(z)-x}\pi(H) \notag \\
	&-\mrm{Res}_{z}\frac{e^{-2h(z)}f(z)^{2}z^{-\ell}\pr{g}(z)}{g(z)-x}\pi(F) \notag \\
	&+k\mrm{Res}_{z}\del f(z)e^{-2h(z)}z^{-\ell}.
\end{align}
\begin{align}
	\scr{H}_{\ell}=
	&-2\sum_{n\le -1}\mrm{Res}_{z}\mrm{Res}_{w}\frac{w^{-n-1}e^{2h(w)}e^{-2h(z)}e(z)z^{-\ell}\pr{g}(z)}{g(w)-g(z)}\frac{\del}{\del e_{n}} \notag \\
	&-\sum_{n\le -1}\mrm{Res}_{z}\mrm{Res}_{w}\frac{w^{-n-1}(1+2e^{-2h(z)}(f(z)-f(w)))z^{-\ell}\pr{g}(z)}{g(w)-g(z)}\frac{\del}{\del h_{n}} \notag \\
	&-2\sum_{n\le -1}\mrm{Res}_{z}\mrm{Res}_{w}\frac{w^{-n-1}(f(w)-f(z)-e^{-2h(z)}e(z)(f(w)-f(z))^{2})z^{-\ell}\pr{g}(z)}{g(w)-g(z)}\frac{\del}{\del f_{n}} \notag \\
	&+2\mrm{Res}_{z}\frac{e^{-2h(z)}e(z)z^{-\ell}\pr{g}(z)}{g(z)-x}\pi(E) \notag \\
	&+\mrm{Res}_{z}\frac{(1+2e^{-2h(z)}e(z)f(z))z^{-\ell}\pr{g}(z)}{g(z)-x}\pi(H) \notag \\
	&-2\mrm{Res}_{z}\frac{(1+e^{-2h(z)}e(z)f(z))f(z)z^{-\ell}\pr{g}(z)}{g(z)-x}\pi(F) \notag \\
	&+2k\mrm{Res}_{z}(\del h(z)-\del f(z)e^{-2h(z)}e(z))z^{-\ell}.
\end{align}
\begin{align}
	\scr{F}_{\ell}=
	&\sum_{n\le -1}\mrm{Res}_{z}\mrm{Res}_{w}\frac{w^{-n-1}e^{2h(w)}e^{-2h(z)}e(z)^{2}z^{-\ell}\pr{g}(z)}{g(w)-g(z)}\frac{\del}{\del e_{n}} \notag \\
	&-\sum_{n\le -1}\mrm{Res}_{z}\mrm{Res}_{w}\frac{w^{-n-1}(1+e^{-2h(z)}e(z)(f(w)-f(z)))e(z)z^{-\ell}\pr{g}(z)}{g(w)-g(z)}\frac{\del}{\del h_{n}} \notag \\
	&-\sum_{n\le -1}\mrm{Res}_{z}\mrm{Res}_{w}w^{-n-1}\Biggl[\frac{e^{2h(z)}+2e(z)(f(z)-f(w))}{g(w)-g(z)} \notag \\
	&\hspace{120pt}+\frac{e^{-2h(z)}e(z)^{2}(f(z)-f(w))^{2}}{g(w)-g(z)}\Biggr]z^{-\ell}\pr{g}(z)\frac{\del}{\del f_{n}} \notag \\
	&-\mrm{Res}_{z}\frac{e^{-2h(z)}e(z)^{2}z^{-\ell}\pr{g}(z)}{g(z)-x}\pi(E) \notag \\
	&-\mrm{Res}_{z}\frac{(1+e^{-2h(z)}e(z)f(z))e(z)z^{-\ell}\pr{g}(z)}{g(z)-x}\pi(H) \notag \\
	&+\mrm{Res}_{z}\frac{(e^{2h(z)}+2e(z)f(z)+e^{-2h(z)}e(z)^{2}f(z)^{2})z^{-\ell}\pr{g}(z)}{g(z)-x}\pi (F) \notag \\
	&-\mrm{Res}_{z}(2\del h(z)e(z)-\del e(z)+\del f(z)e^{-2h(z)}e(z)^{2})z^{-\ell}.
\end{align}
Here the representation $\pi:\mfrak{sl}_{2}\to \mrm{End}(L(\Lambda)^{\ast})$ is defined by
$(\pi(X)\phi)(v)=-\phi(Xv)$ for $X\in\mfrak{sl}_{2}$, $\phi\in L(\Lambda)^{\ast}$ and $v\in L(\Lambda)$.

We can also derive operators $\scr{L}_{\ell}$ that associate with the action of the Virasoro algebra such that
$\mcal{M}_{L_{\ell}u}=\scr{L}_{\ell}\mcal{M}_{u}$.
{{The}} detailed derivation is {{conducted in}} Appendix \ref{sect:app_operator_X}, {{but}} it yields
\begin{align}
	\scr{L}_{\ell}=
	&-\sum_{n\le 0}\mrm{Res}_{z}\mrm{Res}_{w}\frac{z^{-\ell+1}w^{-n-1}\pr{g}(z)^{2}}{g(w)-g(z)}\frac{\del}{\del g_{n}}  \notag \\
	&-\sum_{n\le -1}\mrm{Res}_{z}\mrm{Res}_{w}\frac{z^{-\ell+1}w^{-n-1}e^{2h(w)}e^{-2h(z)}\del e(z)\pr{g}(z)}{g(w)-g(z)}\frac{\del}{\del e_{n}} \notag \\
	&-\sum_{n\le -1}\mrm{Res}_{z}\mrm{Res}_{w}\frac{z^{-\ell+1}w^{-n-1}(\del h(z)+e^{-2h(z)}\del e(z)(f(z)-f(w)))\pr{g}(z)}{g(w)-g(z)}\frac{\del}{\del h_{n}} \notag \\
	&-\sum_{n\le -1}\mrm{Res}_{z}\mrm{Res}_{w}z^{-\ell+1}w^{-n-1}\Biggl[\frac{\del f(z)-2\del h(z)(f(z)-f(w))}{g(w)-g(z)} \notag \\
	&\hspace{150pt}-\frac{e^{-2h(z)}\del e(z)(f(z)-f(w))^{2}}{g(w)-g(z)}\Biggr]\pr{g}(z)\frac{\del}{\del f_{n}} \notag \\
	&+\mrm{Res}_{z}z^{-\ell+1}\pr{g}(z)^{2}\left(\frac{h}{(g(z)-x)^{2}}+\frac{1}{g(z)-x}\frac{\del}{\del x}\right) \notag \\
	&+\mrm{Res}_{z}\frac{z^{-\ell+1}e^{-2h(z)}\del e(z)\pr{g}(z)}{g(z)-x}\pi(E) \notag \\
	&+\mrm{Res}_{z}\frac{z^{-\ell+1}(\del h(z)+e^{-2h(z)}f(z)\del e(z))\pr{g}(z)}{g(z)-x}\pi(H) \notag \\
	&+\mrm{Res}_{z}\frac{z^{-\ell+1}(\del f(z)-2f(z)\del h(z)-e^{-2h(z)}f(z)^{2}\del e(z))\pr{g}(z)}{g(z)-x}\pi(F) \notag \\
	&+\mrm{Res}_{z}z^{-\ell+1}\left(\frac{c}{12}(Sg)(z)+k(\del h(z)^{2}+e^{-2h(z)}\del f(z)\del e(z))\right).
\end{align}

For a vector $v\in L(\Lambda)$ in the top space of $L_{\mfrak{sl}_{2}}(\Lambda,k)$,
the corresponding local martingale $\mcal{M}_{v}$ is a constant function in $x$ that takes value $\braket{v|-}\in L(\Lambda)^{\ast}$.
Applying the operators $\scr{X}_{\ell}$ on elements $\mcal{M}_{v}$ for $v\in L(\Lambda)$,
we obtain all local martingales that are generated by a random process $\scr{G}_{t}$ on $\mrm{Aut}_{+}\mcal{O}\ltimes G_{+}(\mcal{O})$.

\begin{thm}
Assume that we have an operator of the form in Eq.(\ref{eq:annihilator_affine}) that annihilates the highest weight vector of $L_{\mfrak{sl}_{2}}(\Lambda,k)$.
Let $\mcal{U}$ be the subspace of $\mcal{F}_{\mrm{aff}}(\Lambda)$ that is obtained by applying operators $\scr{X}_{\ell}$ for $\scr{X}=\scr{E},\scr{H},\scr{F}$ and $\ell\in\mbb{Z}$
to elements of the form $\braket{u|-}\in L(\Lambda)^{\ast}$ for $u\in\L(\Lambda)$.
Then an element of $\mcal{U}$
gives a local martingale when the SLE$(\kappa)$, the solution of the SDEs in Prop. \ref{prop:sde_internal_sl2},
and the Brownian motion of variance $\kappa$ are substituted.
Namely, for an element $f(g_{n},e_{n},h_{n},f_{n},x)\in \mcal{U}$,
\begin{equation}
	f(g_{n}(t),e_{n}(t),h_{n}(t),f_{n}(t),B_{t})(u)
\end{equation}
is a local martingale for an arbitrary $u\in L(\Lambda)$.
Here
\begin{equation}
	g_{t}(z)=z+\sum_{n\le 0}g_{n}(t)z^{n}
\end{equation}
is the SLE$(\kappa)$ and
\begin{align}
	e_{t}(z)&=\sum_{n<0}e_{n}(t)z^{n}, \\
	h_{t}(z)&=\sum_{n<0}h_{n}(t)z^{n}, \\
	f_{t}(z)&=\sum_{n<0}f_{n}(t)z^{n}
\end{align}
satisfy the SDEs in {{Prop.}} \ref{prop:sde_internal_sl2}.
\end{thm}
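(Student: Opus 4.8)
The plan is to assemble the statement from results already in place. By Prop.~\ref{prop:annihilator_sl2}, under the relation $\kappa+2\tau-4=0$ the operator of the form~\eqref{eq:annihilator_affine} annihilates not only $v_{\Lambda}$ but also $v_{-\Lambda}$, hence every vector of the two-dimensional top space $L(\Lambda)=\mbb{C}v_{\Lambda}\oplus\mbb{C}v_{-\Lambda}$; so by Prop.~\ref{prop:martingale_affine} the process $\scr{G}_{t}v'$ is a local martingale in $\overline{L_{\mfrak{sl}_{2}}(\Lambda,1)}$ for \emph{every} $v'\in L(\Lambda)$. By~\eqref{eq:affine_martingale_vector} the value at $v'$ of the substituted function $\mcal{M}_{u}=\braket{u|\scr{G}\mcal{Y}(-,x)|0}$ is exactly $\braket{u|\scr{G}_{t}v'}$ once $g_{n},e_{n},h_{n},f_{n}$ are evaluated at the SLE$(\kappa)$ together with the solution of the SDEs of Prop.~\ref{prop:sde_internal_sl2} and $x$ at the Brownian motion of variance $\kappa$. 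Pairing a $\overline{L_{\mfrak{sl}_{2}}(\Lambda,1)}$-valued local martingale against a fixed homogeneous vector $u$ involves only the finitely many graded components of $\scr{G}_{t}v'$ up to the degree of $u$, and each such matrix element is a polynomial in $B_{t}$ and in the coordinates $g_{n},e_{n},h_{n},f_{n}$; hence $t\mapsto\braket{u|\scr{G}_{t}v'}$ is again a scalar local martingale. I would record this as the key preliminary fact: for \emph{every} $u\in L_{\mfrak{sl}_{2}}(\Lambda,1)$ the element $\mcal{M}_{u}\in\mcal{F}_{\mrm{aff}}(\Lambda)$ becomes a local martingale under the above substitution.

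Next I would handle the base case of an induction. For $u$ in the top space $L(\Lambda)$, the operator $\scr{G}=\Theta Q(\rho)$ with $\rho\in\mrm{Aut}_{+}\mcal{O}$ and $\Theta\in G_{+}(\mcal{O})$ is the identity plus a strictly degree-raising operator, while $\mcal{Y}(v',x)\ket{0}=e^{xL_{-1}}v'=v'+O(x)$ has only $v'$ in degree $h_{\Lambda}$; so the part of degree $h_{\Lambda}$ of $\scr{G}\mcal{Y}(v',x)\ket{0}$ equals $v'$, and therefore $\mcal{M}_{u}=\braket{u|-}\in L(\Lambda)^{\ast}$ is a constant element of $\mcal{F}_{\mrm{aff}}(\Lambda)$. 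These are precisely the seeds from which $\mcal{U}$ is built.

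Then I would run the induction by means of the intertwining identities $\mcal{M}_{X(\ell)u}=\scr{X}_{\ell}\mcal{M}_{u}$ ($X=E,H,F$, $\ell\in\mbb{Z}$), which are established in Appendix~\ref{sect:app_operator_X} as purely algebraic identities in $\mcal{F}_{\mrm{aff}}(\Lambda)$ valid for all $u$. Iterating them, any composition of the operators $\scr{X}_{\ell}$ applied to a seed $\mcal{M}_{v}=\braket{v|-}$ with $v\in L(\Lambda)$ equals $\mcal{M}_{w}$, where $w$ is the corresponding product of the elements $X(\ell)$ acting on $v$; and $w$ is a finite sum of homogeneous vectors in $L_{\mfrak{sl}_{2}}(\Lambda,1)$, so no passage to the completion is required. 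By the preliminary fact, $\mcal{M}_{w}$ becomes a local martingale under the substitution; since a general element of $\mcal{U}$ is a finite $\mbb{C}$-linear combination of such $\mcal{M}_{w}$, and both the substitution map and the property of being a local martingale are $\mbb{C}$-linear, every element of $\mcal{U}$ becomes a local martingale, which is the assertion of the theorem.

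The main obstacle is not this bookkeeping but lies upstream, in the identities $\mcal{M}_{X(\ell)u}=\scr{X}_{\ell}\mcal{M}_{u}$ themselves: verifying them requires carrying the mode expansion of the current $X(z)$ through the adjoint action of $\scr{G}_{t}^{-1}$ by means of Lemma~\ref{lem:transform_current}, expanding the rational kernels $1/(g(w)-g(z))$ in the appropriate annuli, and tracking which of the derivations $\del/\del g_{n}$, $\del/\del e_{n}$, $\del/\del h_{n}$, $\del/\del f_{n}$ and which operator $\pi(X)$ each residue contributes --- a long but mechanical computation, which I would defer to Appendix~\ref{sect:app_operator_X}. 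A secondary point worth stating explicitly is that $L_{\mfrak{sl}_{2}}(\Lambda,1)$ is generated from its top space $L(\Lambda)$ under $\what{\mfrak{sl}}_{2}$, so that, although the theorem only asserts that $\mcal{U}$ consists of local martingales, the subspace $\mcal{U}$ in fact exhausts all local martingales produced by $\scr{G}_{t}v_{\Lambda}$.
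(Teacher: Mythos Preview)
Your argument matches the paper's own, which presents the theorem not with a formal proof but as a summary of the preceding discussion: the identities $\mcal{M}_{X(\ell)u}=\scr{X}_{\ell}\mcal{M}_{u}$ derived in Appendix~\ref{sect:app_operator_X} show that every element of $\mcal{U}$ is some $\mcal{M}_{w}$ with $w\in L_{\mfrak{sl}_{2}}(\Lambda,k)$, and then~\eqref{eq:affine_martingale_vector} together with Prop.~\ref{prop:martingale_affine} gives the local-martingale property upon substitution, exactly as you outline.

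One small overreach: you invoke Prop.~\ref{prop:annihilator_sl2} to conclude that the annihilator kills every vector of $L(\Lambda)$, which specializes your argument to $k=1$ and the fundamental weight, whereas the theorem is stated for general $(\Lambda,k)$ under the bare assumption that~\eqref{eq:annihilator_affine} annihilates $v_{\Lambda}$. The general step you need is that the operator~\eqref{eq:annihilator_affine} commutes with the zero-mode action of $\mfrak{g}$: indeed $[L_{n},X(0)]=0$ by the Segal--Sugawara relation $[L_{n},X(m)]=-mX(n+m)$, and $\sum_{r}X_{r}(-1)^{2}$ commutes with each $X(0)$ because $\{X_{r}\}$ is an orthonormal basis and the bilinear form is invariant. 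Hence if the operator annihilates $v_{\Lambda}$ it annihilates $U(\mfrak{g})v_{\Lambda}=L(\Lambda)$, and your argument goes through at the stated generality without appealing to the $k=1$ computation.
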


%conclusion
\section{Conclusion}
In this paper, we have established the group theoretical formulation of {{SLEs}} corresponding to affine Lie algebras following previous work by the {{present}} author.\cite{SK2017}
As illustrated in Sect.\ref{sect:group_theoretical}, SLE/CFT correspondence
in the sense of Bauer and Bernard \cite{BauerBernard2002,BauerBernard2003a,BauerBernard2003b}
{{allowed}} us to compute local martingales associated with {{SLEs}} from representations of the Virasoro {{algebra}}.
Our achievement {{was}} to generalize this notion of SLE/CFT correspondence to connection between
SDEs and representations of affine Lie algebra.
Our strategy {{was}} to extend a random process on an {{infinite-dimensional}} Lie group $\mrm{Aut}_{+}\mcal{O}$
that {{was}} naturally connected to {{SLEs}} associated with the Virasoro algebra
to a random process on a larger group $\mrm{Aut}_{+}\mcal{O}\ltimes G_{+}(\mcal{O})$,
which {{was}} introduced in Sect.\ref{sect:internal_symmetry}.
The SDE for a random process on such an {{infinite-dimensional Lie group was given}} in Sect.\ref{sect:random_process}
based on consideration {{of}} an annihilating operator of a highest weight vector.
It is significant that the Virasoro module structure on a representation of an affine Lie algebra {{was}} introduced via the Segal-Sugawara construction.
Note that the resulting SDEs have already appeared
in the correlation function formulation\cite{BettelheimGruzbergLudwigWiegmann2005, AlekseevBytskoIzyurov2011} of {{SLEs}} corresponding to WZW theory
in an equivalent form, but we {{gave}} another natural derivation of it from a random process on an {{infinite-dimensional}} Lie group.
We also {{constructed}} the random process in the case when the underlying {{finite-dimensional Lie algebra was}} commutative and $\mfrak{sl}_{2}$.
%{\color{red}{
A significant achievement in the present paper was the derivation of SDEs in Prop.\ref{prop:sde_internal_sl2},
which gives a rigorous formulation of the random process along the internal space.
Thus it paves the way for further studies of SLE with internal degrees of freedom in probability theory.
%}}
Such a construction made it possible in Sect.\ref{sect:martingale} to {{formulate}} several local martingales associated with {{SLEs}} from computation on
a representation of an affine Lie algebra.
We also {{revealed}} an affine $\mfrak{sl}_{2}$ symmetry of a space of local martingales in Sect.\ref{sect:affine_symmetry},
which {{was}} again possible due to the construction in Sect.\ref{sect:random_process}.
It is clear that the content of Sect.\ref{sect:affine_symmetry} can be extended to other affine Lie algebras in principle,
but it will be harder to {{formulate}} operators defining the action.

Let us discuss other possibility of a random process on $\mrm{Aut}_{+}\mcal{O}\ltimes G_{+}(\mcal{O})$.
In Sect.\ref{sect:random_process}, we considered a random process $\scr{G}_{t}$ on an {{infinite-dimensional}} Lie group $\mrm{Aut}_{+}\mcal{O}\ltimes G_{+}(\mcal{O})$,
of which the $dt$ term in its increment is an annihilating operator
\begin{equation}
	-2L_{-2}+\frac{\kappa}{2}L_{-1}^{2}+\frac{\tau}{2}\sum_{i=1}^{\dim\mfrak{g}}X_{i}(-1)^{2}
\end{equation}
of the highest weight vector.
This annihilator is chosen by the following principle.
Firstly, our construction should derive the ordinary SLE in the coordinate transformation part,
which forces an annihilator to have the part $-2L_{-2}+\frac{\kappa}{2}L_{-1}^{2}$.
Secondly, the operator of the above form indeed annihilates the vacuum vector due to the Segal-Sugawara construction of the Virasoro generators.
The third term of the annihilator has room for generalization, which we shall discuss.
We can allow the variance $\tau$ to depend on $i${{; namely,}} an annihilator of the form
\begin{equation}
	-2L_{-2}+\frac{\kappa}{2}L_{-1}^{2}+\frac{1}{2}\sum_{i=1}^{\dim\mfrak{g}}\tau_{i}X_{i}(-1)^{2}
\end{equation}
can be considered.
We can also deform the annihilator by adding a term like $X(-2)$ for $X\in\mfrak{g}$.
Such a deformation will be inevitable if we twist the Virasoro generators by a derivative of a current field.
The problem {{of}} whether annihilators generalized in these ways indeed annihilate the highest weight vector,
of course, requires {{individual}} case-by-case investigation.

A possible application of our construction of SLEs corresponding to affine Lie algebras is to derive {{a}} generalization of Cardy's formula.
In {{the case of}} Virasoro algebra, SLE/CFT correspondence derives Cardy's formula.\cite{BauerBernard2003b}
We shall discuss {{the possibility of}} generalization of Cardy's formula.
This work will be {{two-fold. One aim is }} to find an appropriate scaling limit of a model of statistical physics in which a kind of cluster interface is described {{by the}} SLE derived in our formulation.
An important point to be considered is that our SLE with internal degrees of freedom describes a stochastic deformation of $G$-bundles
which requires us to find a scaling limit that captures such internal degrees of freedom as well as a cluster interface itself.
The other aim is to relate an object like
\begin{equation}
	\braket{u|\mcal{Y}(A_{1},z_{1})\cdots \mcal{Y}(A_{n},z_{n})\scr{G}_{t}|v_{\Lambda}}
\end{equation}
to the defining function of an event associated with SLEs with internal degrees of freedom derived in this paper.
Here $\mcal{Y}(-,z)$ is an intertwining operator and $\scr{G}_{t}\ket{v_{\Lambda}}$ is a representation-space-valued local martingale
constructed in this paper.
If such a discussion is possible, the probability of the event is computed as the expectation value of the above quantity,
which is time independent and thus reduces to a purely algebraic quantity
\begin{equation}
	\braket{u|\mcal{Y}(A_{1},z_{1})\cdots \mcal{Y}(A_{n},z_{n})|v_{\Lambda}}
\end{equation}
and which may be computed.

It is natural to seek other examples of generalization of SLEs involving more general internal symmetry.
In a forthcoming paper,\cite{SK2018b} we will construct SLEs for which the internal symmetry is described by an affine Lie superalgebra.
Since the Segal-Sugawara construction also works for a twisted affine Lie algebra, 
a parallel construction to ours presented in this paper will be possible for a twisted affine Lie algebra.
We consider the case that internal symmetry is encoded in a more complicated Lie algebra to be nontrivial.
We can associate with a VOA a Lie algebra called a current Lie algebra.
A Lie subalgebra of a current Lie algebra possibly describes an internal symmetry in the terminology of the book by Frenkel and Ben-Zvi.\cite{FrenkelBen-Zvi2004}
For example, the current Lie algebra of an affine VOA has the corresponding affine Lie algebra as a subalgebra,
and this is the internal symmetry we treated in this paper.
However it is not always possible to take such a {\it good} Lie subalgebra for a given VOA,
and it is nontrivial whether one can construct SLEs with internal degrees of freedom in such situations
when we do not know they have a good Lie subalgebra.

\subsection*{acknowledgements}
The author is grateful to K. Sakai for leading him to this field of research,
and to R. Sato for discussions and helpful advice.
He also would like to thank R. Friedrich for useful comments on the first version of the manuscript
and suggestions for suitable references.
This work was supported by a Grant-in-Aid for JSPS Fellows (Grant No. 17J09658).

\appendix

%app_vertex_operator_algebra
\section{Remarks on VOAs}
\label{sect:app_voa}
In this appendix, we recall the notion of vertex (operator) algebras which is useful in the present paper.
Detailed expositions of the theory of vertex (operator) algebras can be found in literatures.\cite{Kac1998, FrenkelBen-Zvi2004}
The appendix of the book by Iohara and Koga\cite{IoharaKoga2011} is also useful.

\subsection{Definition of vertex algebras, modules and intertwining operators}
Let $V$ be a vector space.
A field on $V$ is a series $a(z)=\sum_{n\in\mathbb{Z}}a_{(n)}z^{-n-1}$ in a formal variable $z$ with coefficients $a_{(n)}$ being in $\mathrm{End}(V)$
such that for any $v\in V$ we have $a_{(n)}v=0$ for $n \gg 0$.
Equivalently, a field is a linear map from $V$ to $V((z))=V[[z]][z^{-1}]$.
We let the space of fields be denoted by $\mrm{Fie}(V):=\mrm{Hom}_{\mbb{C}}(V,V((z)))$.

\begin{defn}[Vertex algebra]
A vertex algebra is a quadruple $(V,\ket{0},T,Y)$ of a vector space $V$, a distinguished vector $\ket{0}\in V$,
an operator $T\in\mrm{End}(V)$, and a linear operator $Y\in\mrm{Hom}(V,\mrm{Fie}(V))$,
on which the following axioms are imposed:
\begin{description}
	 \item[(VA1)] 	{\bf (translation covariance)}
		  	\begin{equation}
		  		\label{eq:translation_covariance}
		  		[T, Y(a,z)]=\del Y(a,z)
		  	\end{equation}
	 \item[(VA2)] 	{\bf (vacuum axioms)}
	 		\begin{align}
				\label{eq:vacuum_axiom}
	 			T\ket{0}&=0,&
	 			Y(\ket{0},z)&=\mrm{Id}_{V},&
	 			Y(a,z)\ket{0}\big|_{z=0}&=a.
			\end{align}
	 \item[(VA3)]{\bf (locality)}
		  	\begin{equation}
		  		(z-w)^{N}[Y(a,z),Y(b,w)]=0,\ \ N\gg 0.
		  	\end{equation}
\end{description}
Here we have denoted the image of $a\in V$ via $Y$ by $Y(a,z)$.
\end{defn}

We often denote a vertex algebra $(V,\ket{0},T,Y)$ simply by $V$.
We also often expand a field $Y(A,z)$ so that $Y(A,z)=\sum_{n\in\mbb{Z}}A_{(n)}z^{-n-1}$.

\begin{defn}
Let $V$ be a vertex algebra and $S\subset V$ be a subset.
We say that $V$ is generated by $S$ if $V$ is spanned by vectors of the form $A^{1}_{(-i_{1})}\cdots A^{n}_{(-i_{n})}\ket{0}$
for $A^{1},\cdots ,A^{n}\in S$, $i_{1},\cdots i_{n}\in\mbb{Z}_{\ge 1}$ and $n\ge 0$.
\end{defn}

\begin{defn}
A vertex algebra $V$ is said to be $\mbb{Z}$-graded if it admits a $\mbb{Z}$-gradation $V=\bigoplus_{n\in\mbb{Z}}V_{n}$ such that
$\ket{0}\in V_{0}$, $TV_{n}\subset V_{n+1}$, and $(V_{h})_{(n)}(V_{\pr{h}})\subset V_{h+\pr{h}-n-1}$ for any $h,\pr{h},n\in\mbb{Z}$.
We say that a vector in $V_{h}$ has conformal weight $h$.
\end{defn}

\begin{defn}
A vector $\omega\in V$ is a conformal vector of central charge $c$ if the coefficients of $Y(\omega,z)=\sum_{n\in\mbb{Z}}L_{n}z^{-n-2}$
define a representation of the Virasoro algebra of central charge $c$, or explicitly satisfy the commutation relation
\begin{equation}
	[L_{m},L_{n}]=(m-n)L_{m+n}+\frac{c}{12}(m^{3}-m)\delta_{m+n,0},
\end{equation}
with $L_{-1}=T$, and $L_{0}$ is diagonalizable on $V$.
A vertex algebra endowed with a conformal vector $\omega$ is called a conformal vertex algebra of central charge $c$.
The field $Y(\omega,z)$ is called a Virasoro field of the conformal vertex algebra $V$.
\end{defn}

\begin{defn}[Vertex operator algebra]
A $\mbb{Z}$-graded conformal vertex algebra $(V=\bigoplus_{n\in\mbb{Z}}V_{n},\omega)$ is called a VOA if
\begin{itemize}
\item 	$L_{0}|_{V_{n}}=n\mrm{id}_{V_{n}}$ for all $n\in \mbb{Z}$.
\item 	$\dim V_{n} <\infty$ for all $n\in\mbb{Z}$.
\item 	There exists $N\in \mbb{Z}$ such that $V_{n}=\{0\}$ for $n<N$.
\end{itemize}
\end{defn}

\begin{defn}
Let $(V,\ket{0}, T, Y, \omega)$ be a VOA.
A weak $V$-module is a pair $(M, Y^{M})$ of a vector space $M$
and a linear map $Y^{M}:V\to \mrm{End}(M)[[z,z^{-1}]]$
satisfying the following conditions:
\begin{itemize}
\item 	$Y^{M}(\ket{0},z)=\mrm{id}_{M}$.
\item 	For arbitrary $A\in V$ and $v \in M$,
		\begin{equation*}
			Y^{M}(A,z)v\in M((z)).
		\end{equation*}
\item 	For arbitrary $A, B\in V$ and $m,n \in\mbb{Z}$,
		\begin{align*}
			&\mrm{Res}_{z-w}Y^{M}(Y(A,z-w)B,w)i_{w,z-w}z^{m}(z-w)^{n} \\
			&=\mrm{Res}_{z}Y^{M}(A,z)Y^{M}(B,w)i_{z,w}z^{m}(z-w)^{n} \\
			&\hspace{10pt}-\mrm{Res}_{z}Y^{M}(B,w)Y^{M}(A,z)i_{w,z}z^{m}(z-w)^{n}.
		\end{align*}
		Here for a rational function $R(z,w)$ in two variables $z$ and $w$ possibly with poles at $z=0$, $w=0$, and $z=w$,
		we denote its expansion in the region $|z|>|w|$ by $i_{z,w}R(z,w)\in\mbb{C}[[z,z^{-1},w,w^{-1}]]$.
\end{itemize}
For a weak $V$-module $(M,Y^{M})$, the image of $A\in V$ by $Y^{M}$ is expressed as
\begin{equation}
	Y^{M}(A,z)=\sum_{n\in\mbb{Z}}A^{M}_{(n)}z^{-n-1}
\end{equation}
with $A^{M}_{(n)}\in\mrm{End}(M)$.
\end{defn}

If $A\in V$ has the conformal weight $\Delta$, it is convenient to expand $Y^{M}(A,z)$ as
\begin{equation}
	Y^{M}(A,z)=\sum_{n\in\mbb{Z}}A^{M}_{n}z^{-n-\Delta}
\end{equation}
so that $\deg A^{M}_{n}=-n$.

\begin{defn}
Let $V$ be a VOA and $\omega\in V$ be the conformal vector of $V$.
An ordinary $V$-module is a weak $V$-module $M$ such that
\begin{itemize}
\item 	$L^{M}_{0}$ in the expansion
		\begin{equation*}
			Y^{M}(\omega,z)=\sum_{n\in\mbb{Z}}L^{M}_{n}z^{-n-2}
		\end{equation*}
		is diagonalizable on $M$.
\item 	In the $L^{M}_{0}$-eigenspace decomposition
		\begin{equation*}
			M=\bigoplus_{\lambda\in\mbb{C}}M_{\lambda},
		\end{equation*}
		$\dim M_{\lambda}<\infty$ for all $\lambda\in\mbb{C}$.
		Moreover, for arbitrary $\lambda\in\mbb{C}$, $M_{\lambda-n}=0$ for $n\gg 0$.
\end{itemize}
\end{defn}

\begin{defn}
Let $M^{1}$, $M^{2}$ and $M^{3}$ be $V$-modules.
An intertwining operator of type $\binom{M_{1}}{M_{2}\ \ M_{3}}$ is a linear operator
\begin{equation}
	\mcal{Y}(-,z):M^{1}\to \mrm{Hom}(M^{2},M^{3})z^{K}:=\left\{\sum_{a \in K}v_{a}z^{a}\Bigg|v_{\alpha}\in\mrm{Hom}(M^{2},M^{3})\right\},
\end{equation}
where $K=\bigcup_{i}(\alpha_{i}+\mbb{Z})$ with finitely many $\alpha_{i}\in\mbb{C}$ being chosen associated with $M^{1}$, $M^{2}$ and $M^{3}$
with the following conditions imposed:
\begin{itemize}
\item
	For any $A\in V$, $v\in M^{1}$ and $m,n\in\mbb{Z}$ we have
	\begin{align*}
		&\mrm{Res}_{z-w}\mcal{Y}(Y^{M^{1}}(A,z-w)v,w)i_{w,z-w}z^{m}(z-w)^{n} \\
		&=\mrm{Res}_{z}Y^{M^{3}}(A,z)\mcal{Y}(v,w)i_{z,w}z^{m}(z-w)^{n} \\
		&\hspace{10pt}-\mrm{Res}_{z}\mcal{Y}(v,w)Y^{M^{2}}(A,z)i_{w,z}z^{m}(z-w)^{n}.
	\end{align*}
\item
	For any $v\in M^{1}$, we have
	\begin{equation}
		\mcal{Y}(L_{-1}v,z)=\frac{d}{dz}\mcal{Y}(v,z).
	\end{equation}
\end{itemize}
\end{defn}

\subsection{Examples}
\subsubsection{Virasoro vertex algebra}
In Sect. \ref{sect:group_theoretical}, we have introduced two types of representations of the Virasoro algebra,
Verma modules and their simple quotients.
We can also consider intermediate objects in the theory of VOAs.
The Verma module $M(c,0)$ of highest weight $(c,0)$ has a submodule generated by $L_{-1}{\bf 1}_{c,0}$.
Then the universal Virasoro VOA of central charge $c$ is defined by
\begin{equation}
	V_{c}:=M(c,0)/U(\mrm{Vir_{<0}})L_{-1}{\bf 1}_{c,0}.
\end{equation}
Now we prepare the components of the vertex algebra structure on $V_{c}$.
\begin{itemize}
\item 	$\ket{0}={\bf 1}_{c,0}$,
\item 	$T=L_{-1}$,
\item 	A single generator $\omega=L_{-2}\ket{0}$ with the corresponding field $L(z)=\sum_{n\in\mbb{Z}}L_{n}z^{-n-2}$.
\end{itemize}
From these data, we construct a vertex algebra structure on $V_{c}$ by
\begin{equation}
	Y(L_{j_{1}}\cdots L_{j_{k}}\ket{0},z)=\no{\del^{(-j_{1}-2)}L(z)\cdots \del^{(-j_{k}-2)}L(z)},
\end{equation}
with $L(z)=Y(\omega,z)$.
Moreover, $V$ is $\mbb{Z}$-graded by
\begin{equation}
	\deg (L_{j_{1}}\cdots L_{j_{k}}\ket{0})=-\sum_{i=1}^{k}j_{i}.
\end{equation}
Then $\omega\in V_{2}$ and $\deg L_{n}=-n$, implying $V$ is equipped with a $\mbb{Z}$-graded vertex algebra.
We also see $\omega$ is a conformal vector, and $V$ is a VOA.
It is obvious that the maximal proper submodule of $V_{c}$ as a $\mrm{Vir}$-module is a vertex subalgebra.
Thus the irreducible representation $L(c,0)$ of the Virasoro algebra also carries a vertex algebra structure
and we denote this vertex algebra by $L_{c}$.

Modules over $L_{c}$ are realized as simple highest weight representations $L(c,h)$ of the same central charge.
Note that an arbitrary simple representation of the Virasoro algebra is not necessarily a module over $L_{c}$,
since nontrivial relations may be imposed on the VOA $L_{c}$.
For instance, if the central charge is given by
\begin{equation}
	c=c_{p,q}=1-\frac{6(p-q)^{2}}{pq}
\end{equation}
with coprime integers $p$ and $q$ greater than or equal to $2$,
the corresponding Virasoro VOA is rational and its simple modules are exhausted by $L(c_{p,q},h_{p,q:r,s})$ with
\begin{equation}
	h_{p,q:r,s}=\frac{(rp-sq)^{2}-(p-q)^{2}}{4pq},\ \ 0<r<q,\ 0<s<p.
\end{equation}

\subsubsection{Affine vertex algebra}
Representations $\what{L(0)}_{k}$ and $L_{\mfrak{g},k}$ of an affine Lie algebra $\what{\mfrak{g}}$ introduced in Sect.\ref{sect:rep_aff_alg}
are also equipped with VOA structure by the following data:
\begin{itemize}
\item 	$\ket{0}=v_{0}$,
\item 	$T=L_{-1}$,
\item 	Generators $X_{a}(-1)\ket{0}$ for $a=1,\cdots,\dim\mfrak{g}$ with the corresponding fields $X_{a}(z)=\sum_{n\in\mbb{Z}}X_{a}(n)z^{-n-1}$.
\end{itemize}
Modules over an affine VOA are realized as $L_{\mfrak{g}}(\Lambda,k)$ of the same level $k$,
but again all these representations of the affine Lie algebra are not necessarily modules over the simple VOA $L_{\mfrak{g},k}$.
Indeed, we have the following example.

\begin{thm}[Frenkel-Zhu \cite{FrenkelZhu1992}]
Let $\mfrak{g}$ be a finite-dimensional simple Lie algebra and $k\in\mbb{Z}_{>0}$.
The simple $L_{\mfrak{g},k}$-modules are exhausted by $L_{\mfrak{g}}(\Lambda,k)$ with $\Lambda\in P_{+}^{k}$,
where $P_{+}^{k}$ is the set of dominant weights of level $k$ defined by
\begin{equation}
	P_{+}^{k}=\{\Lambda\in P_{+}|(\theta|\Lambda)\le k\}.
\end{equation}
\end{thm}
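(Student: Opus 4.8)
The plan is to reduce the classification to finite-dimensional representation theory of $\mfrak{g}$ by invoking Zhu's associative algebra $A(V)$ attached to a VOA $V$. Recall from Zhu's theory that $A(V)=V/O(V)$, where $O(V)$ is spanned by the elements $\mrm{Res}_{z}Y(a,z)\frac{(1+z)^{\deg a}}{z^{2}}b$ for homogeneous $a$ and arbitrary $b$, and that the product induced by $a\ast b=\mrm{Res}_{z}Y(a,z)\frac{(1+z)^{\deg a}}{z}b$ makes $A(V)$ an associative algebra. The key structural input I would use is Zhu's bijection: assigning to an ordinary module $M$ its lowest-conformal-weight subspace $M_{\mrm{top}}$ sets up a one-to-one correspondence between isomorphism classes of simple ordinary $V$-modules and isomorphism classes of finite-dimensional simple $A(V)$-modules. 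Thus it suffices to compute $A(L_{\mfrak{g},k})$ and to classify its finite-dimensional simple modules.

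First I would compute Zhu's algebra of the universal affine VOA $\what{L(0)}_{k}$ of Section \ref{sect:rep_aff_alg}. A direct computation shows that $A(\what{L(0)}_{k})\cong U(\mfrak{g})$, the isomorphism being induced by $X(-1)\ket{0}\mapsto X$ for $X\in\mfrak{g}$, while every vector involving a mode $X(-n)$ with $n\ge 2$ lies in $O(\what{L(0)}_{k})$. Next I would pass to the simple quotient $L_{\mfrak{g},k}=L_{\mfrak{g}}(0,k)$. Since $k\in\mbb{Z}_{>0}$, this is the integrable vacuum module, and the kernel of the surjection $\what{L(0)}_{k}\twoheadrightarrow L_{\mfrak{g},k}$ is the $\what{\mfrak{g}}$-submodule---equivalently the vertex-algebra ideal---generated by the single singular vector $(e_{\theta}(-1))^{k+1}\ket{0}$, where $e_{\theta}$ is a root vector for the highest root $\theta$; this is the integrability relation characterizing the level-$k$ vacuum module. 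Computing the image of this vector under $A(\what{L(0)}_{k})\cong U(\mfrak{g})$ yields $e_{\theta}^{\,k+1}$, so I would conclude $A(L_{\mfrak{g},k})\cong U(\mfrak{g})/I$, where $I$ is the two-sided ideal generated by $e_{\theta}^{\,k+1}$.

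It then remains to classify the finite-dimensional simple modules of $U(\mfrak{g})/I$: these are precisely the finite-dimensional simple $\mfrak{g}$-modules $L(\Lambda)$, $\Lambda\in P_{+}$, on which $e_{\theta}^{\,k+1}$ acts by zero. To analyze this condition I would restrict to the $\mfrak{sl}_{2}$-triple $(e_{\theta},\theta^{\vee},f_{\theta})$ associated with $\theta$. Every weight of $L(\Lambda)$ is of the form $\Lambda-\sum_{i}c_{i}\alpha_{i}$ with $c_{i}\in\mbb{Z}_{\ge 0}$, and $\braket{\alpha_{i},\theta^{\vee}}\ge 0$ because $\theta$ is the highest root; hence the maximal eigenvalue of $\theta^{\vee}$ on $L(\Lambda)$ equals $\braket{\Lambda,\theta^{\vee}}$. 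By the representation theory of $\mfrak{sl}_{2}$, the operator $e_{\theta}^{\,k+1}$ annihilates $L(\Lambda)$ if and only if this maximal eigenvalue does not exceed $k$, that is $\braket{\Lambda,\theta^{\vee}}\le k$; under the normalization $(\theta|\theta)=2$ fixed in Section \ref{sect:rep_aff_alg} this reads $(\theta|\Lambda)\le k$, i.e. $\Lambda\in P_{+}^{k}$. Combining this with Zhu's bijection identifies the simple $L_{\mfrak{g},k}$-modules with the family $\{L_{\mfrak{g}}(\Lambda,k)\mid \Lambda\in P_{+}^{k}\}$; conversely, for each $\Lambda\in P_{+}^{k}$ the module $L(\Lambda)$ carries an $A(L_{\mfrak{g},k})$-action, so Zhu's functor does produce the corresponding simple ordinary $L_{\mfrak{g},k}$-module $L_{\mfrak{g}}(\Lambda,k)$.

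The main obstacle is the determination of the defining relation of $A(L_{\mfrak{g},k})$. One must identify the generator $(e_{\theta}(-1))^{k+1}\ket{0}$ of the maximal proper submodule of $\what{L(0)}_{k}$---this is precisely the point at which the positivity and integrality of $k$ enter, through the theory of integrable highest weight $\what{\mfrak{g}}$-modules---and one must verify that its class in $U(\mfrak{g})$ is exactly $e_{\theta}^{\,k+1}$, with no correction by lower-degree terms. The former is a nontrivial input from affine Lie algebra theory, while the latter is a careful but essentially routine computation inside Zhu's algebra.
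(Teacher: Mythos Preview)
The paper does not provide a proof of this statement: it is quoted as a background result with an attribution to Frenkel--Zhu \cite{FrenkelZhu1992}, and no argument is given in the text. So there is no ``paper's own proof'' to compare your proposal against.

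That said, your sketch is essentially the original Frenkel--Zhu argument, and it is correct in outline. You compute Zhu's algebra of the universal affine VOA to be $U(\mfrak{g})$, identify the maximal proper ideal of $\what{L(0)}_{k}$ for $k\in\mbb{Z}_{>0}$ as generated by the singular vector $(e_{\theta}(-1))^{k+1}\ket{0}$, pass to the quotient $A(L_{\mfrak{g},k})\cong U(\mfrak{g})/\langle e_{\theta}^{\,k+1}\rangle$, and then use $\mfrak{sl}_{2}$ representation theory on the $\theta$-triple to see that $e_{\theta}^{\,k+1}$ vanishes on $L(\Lambda)$ exactly when $(\theta|\Lambda)\le k$. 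The only points that deserve more care in a full write-up are (i) that the ideal of $\what{L(0)}_{k}$ is generated as a VOA ideal by the single singular vector, and that taking its image in $A(\what{L(0)}_{k})$ really yields the two-sided ideal generated by $e_{\theta}^{\,k+1}$ (one needs the compatibility of $O(-)$ with VOA ideals), and (ii) the identification of the top space of the resulting simple module with $L(\Lambda)$ so that the simple $L_{\mfrak{g},k}$-module is indeed $L_{\mfrak{g}}(\Lambda,k)$. These are exactly the points handled in the cited reference.
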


\subsubsection{Lattice vertex algebra}
Let $L$ be a nondegenerate even lattice of rank $\ell$; namely, it is a free $\mbb{Z}$-module of rank $\ell$
endowed with a nondegenerate $\mbb{Z}$-bilinear form $(\cdot|\cdot):L\times L\to \mbb{Z}$,
such that $(\alpha|\alpha)\in 2\mbb{Z}$ for $\alpha\in L$.
There uniquely exists a cohomology class $[\epsilon]\in H^{2}(L,\mbb{C}^{\times})$ satisfying
\begin{align}
	\label{eq:lattice_2_cocycle_1}
	\epsilon(\alpha,0)&=\epsilon(0,\alpha)=1,\\
	\label{eq:lattice_2_cocycle_2}
	\epsilon(\alpha,\beta)&=(-1)^{(\alpha|\beta)+|\alpha|^{2}|\beta|^{2}}\epsilon(\beta,\alpha)
\end{align}
for $\alpha,\beta\in L$.
Here we denote $|\alpha|^{2}=(\alpha|\alpha)$.
Notice that conditions Eq. (\ref{eq:lattice_2_cocycle_1}) and Eq. (\ref{eq:lattice_2_cocycle_2}) 
are independent of the choice of a representative $\epsilon$ of $[\epsilon]$.
It can be shown that we can choose a 2-cocycle $\epsilon\in[\epsilon]$ so that
it takes values in $\{\pm 1\}$. (See Remark 5.5a in the booklet by Kac.\cite{Kac1998})
We let $\epsilon$ be such a 2-cocycle in the following.
Let $\mbb{C}_{\epsilon}[L]$ be the $\epsilon$-twisted group algebra of $L$, which is
\begin{equation}
	\mbb{C}_{\epsilon}[L]=\bigoplus_{\alpha\in L}\mbb{C}e^{\alpha}
\end{equation}
as a vector space with multiplication defined by
\begin{equation}
	e^{\alpha}e^{\beta}=\epsilon(\alpha,\beta)e^{\alpha+\beta}
\end{equation}
for $\alpha,\beta\in L$.

We set $\mfrak{h}=\mbb{C}\otimes_{\mbb{Z}}L$ and extend the symmetric $\mbb{Z}$-bilinear form $(\cdot|\cdot)$ on $L$
to a symmetric $\mbb{C}$-bilinear form on $\mfrak{h}$.
Then we obtain the corresponding Heisenberg algebra $\widehat{\mfrak{h}}$ and its vacuum representation $L_{\mfrak{h}}(0,1)$ of level $1$.
The lattice vertex algebra $V_{L}$ associated to $L$ is
\begin{equation}
	V_{L}=L_{\mfrak{h}}(0,1)\otimes \mbb{C}_{\epsilon}[L]
\end{equation}
as a vector space.
We define the action of $\widehat{\mfrak{h}}$ on $V_{L}$ by
\begin{equation}
	H(m).(s\otimes e^{\alpha}):=(H(m)+\delta_{m,0}(H|\alpha))s\otimes e^{\alpha}
\end{equation}
for $H\in\mfrak{h}$, $m\in\mbb{Z}$, $s\in L_{\mfrak{h}}(0,1)$, and $\alpha\in L$.
We also define the action of $\mbb{C}_{\epsilon}[L]$ on $V_{L}$ by
\begin{equation}
	e^{\beta}.(s\otimes e^{\alpha}):=\epsilon(\beta,\alpha)s\otimes e^{\alpha+\beta}
\end{equation}
for $\alpha,\beta\in L$ and $s\in L_{\mfrak{h}}(0,1)$.
The lattice vertex algebra is generated by vectors $H(-1)\ket{0}\otimes e^{0}$ with $H\in \mfrak{h}$ and $\ket{0}\otimes e^{\alpha}$ with $\alpha\in L$,
of which the corresponding fields are given by
\begin{align}
	H(z)&=\sum_{n\in\mbb{Z}}H(n)z^{-n-1}, \\
	\Gamma_{\alpha}(z)&=e^{\alpha}z^{\alpha (0)}e^{-\sum_{j<0}\frac{z^{-j}}{j}\alpha (j)}e^{-\sum_{j>0}\frac{z^{-j}}{j}\alpha (j)},
\end{align}
respectively.
Then $V_{L}$ admits a unique structure of a vertex algebra.

Let $\{H_{i}\}_{i=1}^{\ell}$ be an orthonormal basis of $\mfrak{h}$ with respect to $(\cdot|\cdot)$.
Then the vector
\begin{equation}
	\omega=\frac{1}{2}\sum_{i=1}^{\ell}H_{i}(-1)^{2}\ket{0}\otimes e^{0}
\end{equation}
is a conformal vector of central charge $\ell$.

The irreducible $V_{L}$-modules are classified by elements of $L^{\ast}/L$.\cite{Dong1993}
Here $L^{\ast}$ is the dual lattice of $L$ in $\mfrak{h}$, then $L$ is naturally a sublattice of $L^{\ast}$.
For $\varpi\in L^{\ast}$, we can construct a $V_{L}$-module in the following way.
Let $\mbb{C}[L+\varpi]$ be a vector space spanned by elements of $L+\varpi$ so that $\mbb{C}[L+\varpi]=\bigoplus_{\beta\in L}e^{\beta+\varpi}$,
on which a Lie subalgebra $\mfrak{h}\otimes\mbb{C}[\zeta]\oplus\mbb{C}K$ of the Heisenberg algebra acts as
$H(m)e^{\beta+\varpi}=0$ for $m>0$ and $H(0)e^{\beta+\varpi}=(H|\beta+\varpi)e^{\beta+\varpi}$ for $H\in\mfrak{h}$ and $\beta\in L$, and $K=\mrm{Id}$.
Then the $V_{L}$-module $V_{L+\varpi}$ is constructed as
\begin{equation}
	V_{L+\varpi}=\mrm{Ind}_{\mfrak{h}\otimes\mbb{C}[\zeta]\oplus\mbb{C}K}^{\what{\mfrak{h}}}\mbb{C}[L+\varpi],
\end{equation}
on which the action of $V_{L}$ is defined in an obvious way.
It is also clear that $V_{L+\varpi}$ depends only on the equivalence class $[\varpi]$ of $\varpi$ in $L^{\ast}/L$.

\subsection{Frenkel-Kac construction}
\label{subsect:app_frenkel_kac}
One of the most significant examples of lattice vertex algebras is one associated with a root lattice of ADE type,
which is isomorphic to the irreducible affine vertex algebra associated with the corresponding Lie algebra.
We shall explain this example.

Let $\mfrak{g}$ be a finite-dimensional simple Lie algebra of ADE type and fix its Cartan subalgebra $\mfrak{h}$.
Correspondingly we denote the set of roots by $\Delta$, and the root lattice by $Q=\mbb{Z}\Delta$.
Let $(\cdot|\cdot)$ be the nondegenerate symmetric invariant bilinear form on $\mfrak{g}$
normalized so that $(\theta|\theta)=2$ for the highest root $\theta$.
Let $\Pi=\{\alpha_{1},\cdots,\alpha_{\ell}\}$ be the set of simple roots, then they form a basis for the root lattice.
We also denote the root space decomposition of $\mfrak{g}$ by $\mfrak{g}=\mfrak{h}\oplus\bigoplus_{\alpha\in \Delta}\mfrak{g}_{\alpha}$,
where $\mfrak{g}_{\alpha}=\mbb{C}E_{\alpha}$ is the root space of the root $\alpha\in\Delta$ spanned by normalized vector $E_{\alpha}$ so that
$(E_{\alpha}|E_{-\alpha})=1$,
and the set of simple coroots by $\Pi^{\vee}=\{\alpha_{1}^{\vee},\cdots,\alpha_{\ell}^{\vee}\}$.

\begin{thm}[Frenkel-Kac\cite{FrenkelKac1980}]
There is an isomorphism $L_{\mfrak{g},k}\to V_{Q}$ of vertex algebras such that
\begin{equation}
	\alpha_{i}^{\vee}(-1)\ket{0}\mapsto \alpha_{i}(-1)\ket{0},\ \ E_{\alpha}(-1)\ket{0}\mapsto e^{\alpha},\ \ \alpha\in\Delta.
\end{equation}
\end{thm}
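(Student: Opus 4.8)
The plan is to construct the isomorphism on the weight-one subspaces and then lift it to all of $L_{\mfrak{g},k}$ (here $k=1$) using that both sides are strongly generated in conformal weight one. First I would record that the weight-one space of $V_{Q}$ is spanned by the Cartan part $\{\alpha_{i}(-1)\ket{0}\otimes e^{0}\}_{i=1}^{\ell}$ together with the vectors $\ket{0}\otimes e^{\alpha}$ for $\alpha\in\Delta$, since for an ADE root lattice every root satisfies $(\alpha|\alpha)=2$ and hence $e^{\alpha}$ has conformal weight $\frac{1}{2}(\alpha|\alpha)=1$. Counting gives $\dim (V_{Q})_{1}=\ell+|\Delta|=\dim\mfrak{g}=\dim (L_{\mfrak{g},1})_{1}$, so the prescribed assignment $\alpha_{i}^{\vee}(-1)\ket{0}\mapsto\alpha_{i}(-1)\ket{0}$ and $E_{\alpha}(-1)\ket{0}\mapsto e^{\alpha}$ is a linear isomorphism of weight-one spaces.

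Next I would verify that this assignment intertwines the Lie brackets induced by the zeroth products $a_{(0)}b$ on the two weight-one spaces, i.e. that it is an isomorphism of $\mfrak{g}$ onto $((V_{Q})_{1},[\cdot,\cdot]_{(0)})$ compatible with the level-one bilinear form. Concretely I would compute the relevant OPEs in $V_{Q}$: the Heisenberg fields $H_{i}(z)$ realize the abelian current algebra with unit level; the action $H(0)e^{\alpha}=(H|\alpha)e^{\alpha}$ reproduces $[h,E_{\alpha}]=\alpha(h)E_{\alpha}$; and the vertex-operator product
\begin{equation}
	\Gamma_{\alpha}(z)\Gamma_{\beta}(w)=\epsilon(\alpha,\beta)(z-w)^{(\alpha|\beta)}\no{\Gamma_{\alpha}(z)\Gamma_{\beta}(w)}
\end{equation}
yields, upon extracting $(\Gamma_{\alpha})_{(0)}\Gamma_{\beta}$, the coroot $\alpha(-1)\ket{0}$ when $\beta=-\alpha$ (matching $[E_{\alpha},E_{-\alpha}]=\alpha^{\vee}$), the vector $\epsilon(\alpha,\beta)e^{\alpha+\beta}$ when $\alpha+\beta\in\Delta$, and zero otherwise. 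Matching these against a Chevalley basis of $\mfrak{g}$ identifies the structure constants with the cocycle values $N_{\alpha,\beta}=\epsilon(\alpha,\beta)$.

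Having an isomorphism of Lie algebras $\mfrak{g}\xrightarrow{\sim}(V_{Q})_{1}$ satisfying the level-one current relations, I would invoke the universal property of the vacuum affine vertex algebra: the induced module $\what{L(0)}_{1}$ maps canonically into any vertex algebra whose weight-one fields close under the level-one $\what{\mfrak{g}}$-relations, so the above data define a vertex algebra homomorphism $\what{L(0)}_{1}\to V_{Q}$; since $L_{\mfrak{g},1}$ is the simple quotient and any nonzero homomorphism out of a simple vertex algebra is injective, this descends to an injection $L_{\mfrak{g},1}\hookrightarrow V_{Q}$. For surjectivity I would note that the image contains $\mfrak{h}(-1)\ket{0}$ and all $e^{\alpha}$ with $\alpha\in\Delta$, and that the products $(\Gamma_{\alpha})_{(n)}e^{\beta}$ generate $e^{\alpha+\beta}$ for all lattice points, so the image exhausts the generators of $V_{Q}$; equality then follows by comparing graded characters, which coincide for an ADE root lattice at level one. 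A consistency check here is that the conformal vector $\omega=\frac{1}{2}\sum_{i}H_{i}(-1)^{2}\ket{0}$ of $V_{Q}$ has central charge $\ell$, equal to the Sugawara value $\frac{\dim\mfrak{g}}{1+h^{\vee}}$ for simply-laced $\mfrak{g}$.

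The hard part will be the cocycle bookkeeping in the commutator of two root vertex operators. The factor $(-1)^{(\alpha|\beta)+|\alpha|^{2}|\beta|^{2}}$ in the defining relation of $\epsilon$ is precisely what forces $\Gamma_{\alpha}(z)$ and $\Gamma_{\beta}(w)$ to (anti)commute with the correct sign, and I must check that a $\{\pm 1\}$-valued $2$-cocycle $\epsilon$ with the stated properties exists and that the resulting constants $N_{\alpha,\beta}$ satisfy the Jacobi identity, so that $((V_{Q})_{1},[\cdot,\cdot]_{(0)})$ is genuinely \emph{isomorphic} to $\mfrak{g}$ rather than merely of the right dimension. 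This is a cohomological computation on the root lattice, and it is where the simply-laced hypothesis is essential: only then do all roots have squared length two and produce weight-one fields, so that the lattice vertex algebra has exactly the weight-one content of $\mfrak{g}$.
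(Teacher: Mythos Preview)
The paper does not give its own proof of this theorem: it is stated in the appendix as a result quoted from Frenkel and Kac, with no argument supplied. So there is nothing in the paper to compare your proposal against.

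That said, your sketch is the standard modern proof (as in Kac's \emph{Vertex Algebras for Beginners} or Frenkel--Ben-Zvi), and the main computations are correct: the identification of the weight-one space of $V_{Q}$ for an ADE root lattice, the recovery of the $\mfrak{g}$-relations from the OPEs of $\Gamma_{\alpha}(z)$ via the cocycle $\epsilon$, and the use of the universal property of the level-one vacuum module. One logical point deserves tightening: you write that the map ``descends to an injection $L_{\mfrak{g},1}\hookrightarrow V_{Q}$'' because nonzero maps out of simple objects are injective, but at that stage your map is defined on the \emph{universal} vacuum module $\what{L(0)}_{1}$, not on its simple quotient. The clean way to close this is to first prove surjectivity (which you do) and then invoke the simplicity of $V_{Q}$ for a positive-definite even lattice: the kernel of $\what{L(0)}_{1}\twoheadrightarrow V_{Q}$ is then forced to be the maximal proper ideal, giving $L_{\mfrak{g},1}\simeq V_{Q}$ directly. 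Your character comparison would also suffice. Either way the argument goes through; just reorder the last step.
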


%app_ito_process
\section{Ito process on a Lie group}
\label{sect:app_ito_lie_group}
This appendix is devoted to a short description of Ito processes on Lie groups.
Detailed expositions on this matter can be found in the literature.\cite{Chirikjian2012, Applebaum2014}

Let $G$ be a finite-dimensional complex Lie group and $\mfrak{g}$ be its Lie algebra.
A strategy to construct an Ito process on the Lie group $G$ may be exponentiating an Ito process on the Lie group $\mfrak{g}$.
For convenience of description, we take a basis $\{X_{i}\}_{i=1}^{\dim\mfrak{g}}$ of $\mfrak{g}$.
Then an Ito process $X_{t}$ on $\mfrak{g}$ expanded in this basis so that $X_{t}=\sum_{i=1}^{\dim\mfrak{g}}x^{i}_{t}X_{i}$,
where $x^{i}_{t}$ are Ito processes that are characterized by SDEs of the form
\begin{equation}
	dx^{i}_{t}=\overline{x}^{i}_{t}dt+\sum_{j\in I_{\mrm{B}}}x^{i}_{(j)t}dB_{t}^{(j)}.
\end{equation}
Here $\overline{x}^{i}_{t}$ and $x^{i}_{(j)t}$ are random processes with proper finiteness properties,
and $B_{t}^{(j)}$ are mutually independent Brownian motions labeled by a set $I_{\mrm{B}}$.
We set the variance of $B_{t}^{(j)}$ as $\kappa_{j}$.
Then we can obtain a random process $g_{t}$ on $G$ by exponentiating $X_{t}$ as $g_{t}=\exp (X_{t})$,
but it is not easy to formulate the SDE on $g_{t}$ due to noncommutativity in the Lie algebra $\mfrak{g}$.

Instead, we construct a random process on $G$ via the McKean-Gangolli injection.\cite{Mckean2005}
In this approach, we identify the value $X_{t}$ at each time $t$ as a left invariant vector field on $G$,
and a random process $g_{t}$ on $G$ evolves along this random vector field.
Then the infinitesimal time evolution of $g_{t}$ is described by
\begin{equation}
	\label{eq:McKean_Gangolli_injection}
	g_{t+dt}=g_{t}\exp\left(\sum_{i=1}^{\dim\mfrak{g}}dx^{i}_{t}X_{i}\right).
\end{equation}
To formulate the SDE on such a constructed $g_{t}$, we take finite-dimensional faithful representation $V$ of $\mfrak{g}$.
Then on the vector space $V$, an action of $G$ is defined by exponentiating the action of $\mfrak{g}$.
In the following, we do not distinguish an element of $\mfrak{g}$ from its action on $V$.
When we expand the exponential function in Eq. (\ref{eq:McKean_Gangolli_injection}) and notice that quadratic terms in $dx^{i}_{t}$ may give contributions proportional to $dt$,
we obtain an SDE
\begin{equation}
	\label{eq:sde_lie_group}
	g_{t}^{-1}dg_{t}
		=\left(\sum_{i=1}^{\dim\mfrak{g}}\overline{x}^{i}_{t}X_{i}+\frac{1}{2}\sum_{j\in I_{\mrm{B}}}\kappa_{j}\left(\sum_{i=1}^{\dim\mfrak{g}}x^{i}_{(j)t}X_{i}\right)^{2}\right)dt
			+\sum_{i=1}^{\dim\mfrak{g}}\sum_{j\in I_{\mrm{B}}}x^{i}_{(j)t}dB_{t}^{(j)}.
\end{equation}
We regard this equation as the standard form of an SDE on an Ito processes on a Lie group.

We have to handle a random process on an infinite-dimensional Lie group in application to SLE.
The construction above can be naturally extended to an infinite-dimensional setting.
Let $\mfrak{g}$ be an infinite-dimensional Lie algebra and $G$ be the corresponding Lie group.
Examples of such infinite-dimensional Lie groups include the group of coordinate transformations $\mrm{Aut}\mcal{O}$ on a formal disk, 
loop groups of finite-dimensional Lie groups and their semi-direct products.
In typical cases, a faithful representation $V$ of $\mfrak{g}$ is infinite dimensional,
thus it is, in general, nontrivial whether the action of $\mfrak{g}$ on $V$ is exponentiated to an action of $G$, but we assume that it is.
The validity of this assumption can be verified for each example.
We also assume that the infinite sum that appears in Eq.(\ref{eq:McKean_Gangolli_injection}) in the case of $\dim\mfrak{g}=\infty$ makes sense.
Then the McKean-Gangolli injection works to construct a random process on the Lie group $G$ from an Ito process on $\mfrak{g}$,
and an SDE of the form Eq.(\ref{eq:sde_lie_group}) characterizes the random process.

%app_derivation_SDE
\section{Derivation of SDEs}
\label{sect:app_SDE}
As a proof of Prop. \ref{prop:sde_internal_sl2},
we derive SDEs on $e_{t}(\zeta)$, $h_{t}(\zeta)$, and $f_{t}(\zeta)$ so that
the random process $\scr{G}_{t}=e^{{\bf e}_{t}}e^{{\bf h}_{t}}e^{{\bf f}_{t}}Q(\rho_{t})$ satisfies
\begin{equation}
	\scr{G}_{t}^{-1}d\scr{G}_{t}=
	\left(-2L_{-2}+\frac{\kappa}{2}L_{-1}^{2}+\frac{\tau}{2}\sum_{r=1}^{3}X_{r}(-1)^{2}\right)dt
	+L_{-1}dB_{t}^{(0)}-\sum_{r=1}^{3}X_{r}(-1)dB_{t}^{(r)}.
\end{equation}
Here $\{X_{r}\}_{r=1}^{3}$ is an orthonormal basis of $\mfrak{sl}_{2}$ defined by
\begin{equation}
	X_{1}=\frac{1}{\sqrt{2}}H,\ \ X_{2}=\frac{1}{\sqrt{2}}(E+F),\ \ X_{3}=\frac{i}{\sqrt{2}}(E-F),
\end{equation}
and $B_{t}^{(i)}$, $i=0,1,2,3$ are independent Brownian motions with variances given by
\begin{equation}
	dB_{t}^{(0)}\cdot dB_{t}^{(0)}=\kappa dt,\ \ dB_{t}^{(r)}\cdot dB_{t}^{(r)}=\tau dt,\ \ r=1,2,3.
\end{equation}
Since each element $X\otimes f(\zeta)$ in the affine Lie algebra transforms under adjoint action by $Q(\rho_{t})$ as
$Q(\rho_{t})^{-1}X\otimes f(\zeta) Q(\rho_{t})=X\otimes f(\rho^{-1}_{t}(\zeta))$,
it suffices to derive SDEs so that $\Theta_{t}=e^{{\bf e}_{t}}e^{{\bf h}_{t}}e^{{\bf f}_{t}}$ satisfies
\begin{equation}
	\Theta_{t}^{-1}d\Theta_{t}=\frac{\tau}{2}\sum_{r=1}^{3}(X_{r}\otimes \rho_{t}(\zeta)^{-1})^{2}dt-\sum_{r=1}^{3}X_{r}\otimes \rho_{t}(\zeta)^{-1} dB_{t}^{(r)}.
\end{equation}
We suppose that $e_{t}(\zeta)$, $h_{t}(\zeta)$, and $f_{t}(\zeta)$ satisfy
\begin{align}
	de_{t}(\zeta)&=\overline{e}_{t}(\zeta)dt +\sum_{r=1}^{3}e_{t}^{r}(\zeta)dB_{t}^{(r)}, \\
	dh_{t}(\zeta)&=\overline{h}_{t}(\zeta)dt +\sum_{r=1}^{3}h_{t}^{r}(\zeta)dB_{t}^{(r)}, \\
	df_{t}(\zeta)&=\overline{f}_{t}(\zeta)dt +\sum_{r=1}^{3}f_{t}^{r}(\zeta)dB_{t}^{(r)}.
\end{align}
Then by Ito calculus, we obtain
\begin{align}
	de^{{\bf e}_{t}}&=e^{{\bf e}_{t}}\left(E\otimes \overline{e}_{t}(\zeta)+\frac{\tau}{2}(E\otimes e_{t}^{r}(\zeta))^{2}\right)dt +e^{{\bf e}_{t}}\sum_{r=1}^{3}E\otimes e_{t}^{r}(\zeta) dB_{t}^{(r)}, \\
	de^{{\bf h}_{t}}&=e^{{\bf h}_{t}}\left(H\otimes \overline{h}_{t}(\zeta)+\frac{\tau}{2}(H\otimes h_{t}^{r}(\zeta))^{2}\right)dt +e^{{\bf h}_{t}}\sum_{r=1}^{3}H\otimes h_{t}^{r}(\zeta) dB_{t}^{(r)}, \\
	de^{{\bf f}_{t}}&=e^{{\bf f}_{t}}\left(F\otimes \overline{f}_{t}(\zeta)+\frac{\tau}{2}(F\otimes f_{t}^{r}(\zeta))^{2}\right)dt +e^{{\bf f}_{t}}\sum_{r=1}^{3}F\otimes f_{t}^{r}(\zeta) dB_{t}^{(r)}.
\end{align}
The increment of $\Theta_{t}$ is also computed as
\begin{align}
	d\Theta_{t}=
	&(de^{{\bf e}_{t}})e^{{\bf h}_{t}}e^{{\bf f}_{t}}+e^{{\bf e}_{t}}(de^{{\bf h}_{t}})e^{{\bf f}_{t}}+e^{{\bf e}_{t}}e^{{\bf h}_{t}}(de^{{\bf f}_{t}})  \notag \\
	&+(de^{{\bf e}_{t}})(de^{{\bf h}_{t}})e^{{\bf f}_{t}}+(d e^{{\bf e}_{t}})e^{{\bf h}_{t}}(de^{{\bf f}_{t}})+e^{{\bf e}_{t}}(de^{{\bf h}_{t}})(de^{{\bf f}_{t}}).
\end{align}
Terms in the increment $d\Theta_{t}$ proportional to increments of the Brownian motions are
\begin{align}
	\sum_{r=1}^{3}\Biggl( 
	&E\otimes e^{-2h_{t}(\zeta)}e_{t}^{r}(\zeta)  \notag \\
	&+H\otimes (e^{-2h_{t}(\zeta)}f_{t}(\zeta)e_{t}^{r}(\zeta)+h_{t}^{r}(\zeta)) \notag \\
	&+F\otimes (f_{t}^{r}(\zeta)-e^{-2h_{t}(\zeta)}f_{t}(\zeta)^{2}e_{t}^{r}(\zeta)-2f_{t}(\zeta)h_{t}^{r}(\zeta))\Biggr) dB_{t}^{(r)}
\end{align}
Comparing this to $\sum_{r=1}^{3}X_{r}\otimes \rho_{t}(\zeta)^{-1}dB_{t}^{(r)}$, 
we identify $e_{t}^{r}(\zeta)$, $h_{t}^{r}(\zeta)$ and $f_{t}^{r}(\zeta)$ as
\begin{align}
	e_{t}^{1}(\zeta)&=0, & h_{t}^{1}(\zeta)&=-\frac{1}{\sqrt{2}\rho_{t}(\zeta)}, & f_{t}^{1}(\zeta)&= -\frac{\sqrt{2}f_{t}(\zeta)}{\rho_{t}(\zeta)}, \\
	e_{t}^{2}(\zeta)&=-\frac{e^{2h_{t}(\zeta)}}{\sqrt{2}\rho_{t}(\zeta)}, & h_{t}^{2}(\zeta)&=\frac{f_{t}(\zeta)}{\sqrt{2}\rho_{t}(\zeta)}, & f_{t}^{2}(\zeta)&=-\frac{1-f_{t}(\zeta)^{2}}{\sqrt{2}\rho_{t}(\zeta)}, \\
	e_{t}^{3}(\zeta)&=-\frac{ie^{2h_{t}(\zeta)}}{\sqrt{2}\rho_{t}(\zeta)}, & h_{t}^{3}(\zeta)&=\frac{if_{t}(\zeta)}{\sqrt{2}\rho_{t}(\zeta)}, & f_{t}^{3}(\zeta)&=\frac{i(1+f_{t}(\zeta)^{2})}{\sqrt{2}\rho_{t}(\zeta)}.
\end{align}
Then the term in the increment $d\Theta_{t}$ proportional to $dt$ becomes
\begin{align}
	&E\otimes e^{-2h_{t}(\zeta)}\overline{e}_{t}(\zeta) \notag \\
	&+H\otimes \left(\overline{h}_{t}(\zeta)+e^{-2h_{t}(\zeta)}f_{t}(\zeta)\overline{e}_{t}(\zeta)+\frac{\tau}{2\rho_{t}(\zeta)^{2}}\right) \notag \\
	&+F\otimes \left(\overline{f}_{t}(\zeta)-e^{-2h_{t}(\zeta)}f_{t}(\zeta)^{2}\overline{e}_{t}(\zeta)-2f_{t}(\zeta)\overline{h}_{t}(\zeta)-\frac{\tau f_{t}(\zeta)}{\rho_{t}(\zeta)^{2}}\right) \notag \\
	&+\frac{\tau}{2}\sum_{r=1}^{3}(X_{r}\otimes \rho(\zeta)^{-1})^{2}.
\end{align}
Comparing this to $\frac{\tau}{2}\sum_{r=1}^{3}(X_{r}\otimes \rho_{t}(\zeta)^{-1})^{2}$, we obtain
\begin{equation}
	\overline{e}_{t}(\zeta)=0,\ \ 
	\overline{h}_{t}(\zeta)=-\frac{\tau}{2}\rho_{t}(\zeta)^{-2},\ \ 
	\overline{f}_{t}(\zeta)=0.
\end{equation}
We can finally formulate SDEs
\begin{align}
	de_{t}(\zeta)=
	&-\frac{e^{2h_{t}(\zeta)}}{\sqrt{2}\rho_{t}(\zeta)}dB_{t}^{(2)}-\frac{ie^{2h_{t}(\zeta)}}{\sqrt{2}\rho_{t}(\zeta)}dB_{t}^{(3)}, \\
	dh_{t}(\zeta)=
	&-\frac{\tau}{2}\rho_{t}(\zeta)^{-2}dt -\frac{1}{\sqrt{2}\rho_{t}(\zeta)}dB_{t}^{(1)}
	+\frac{f_{t}(\zeta)}{\sqrt{2}\rho_{t}(\zeta)}dB_{t}^{(2)}+\frac{if_{t}(\zeta)}{\sqrt{2}\rho_{t}(\zeta)}dB_{t}^{(3)}, \\
	df_{t}(\zeta)=
	&-\frac{\sqrt{2}f_{t}(\zeta)}{\rho_{t}(\zeta)}dB_{t}^{(1)}-\frac{1-f_{t}(\zeta)^{2}}{\sqrt{2}\rho_{t}(\zeta)}dB_{t}^{(2)}
	+\frac{i(1+f_{t}(\zeta)^{2})}{\sqrt{2}\rho_{t}(\zeta)}dB_{t}^{(3)}.
\end{align}

%app_derivation_operator
\section{Derivation of operators $\scr{X}_{\ell}$}
\label{sect:app_operator_X}
In this appendix, we derive the operators $\scr{X}_{\ell}$ in Sect.\ref{sect:affine_symmetry} that define an action of $\what{\mfrak{sl}}_{2}$
on a space of SLE local martingales.

We first derive differential equations satisfied by $\scr{G}=e^{\bf e}e^{\bf h}e^{\bf f}Q(g)$.
Here ${\bf e}=E\otimes e(\zeta)$, ${\bf h}=H\otimes h(\zeta)$ and ${\bf f}=F\otimes f(\zeta)$ are elements in $\mfrak{g}\otimes \mbb{C}[[\zeta^{-1}]]\zeta^{-1}$
with $e(\zeta)=\sum_{n<0}e_{n}\zeta^{n}$, $h(\zeta)=\sum_{n<0}h_{n}\zeta^{n}$, and $f(\zeta)=\sum_{n<0}f_{n}\zeta^{n}$.
The element $g\in\mrm{Aut}_{+}\mcal{O}$ is identified with a Laurant series $g(z)=z+\sum_{n\le 0}g_{n}z^{n}$.
By differentiating $\scr{G}$ by $e_{n}$ we obtain
\begin{equation}
	\frac{\del \scr{G}}{\del e_{n}}=e^{\bf e}E\otimes \zeta^{n} e^{\bf h}e^{\bf f}G(g).
\end{equation}
After transferring $E\otimes \zeta^{n}$ to the rightest position in the product, we have a differential equation
\begin{align}
	\scr{G}^{-1}\frac{\del \scr{G}}{\del e_{n}}=&E\otimes e^{-2h(g^{-1}(\zeta))}g^{-1}(\zeta)^{n}+H\otimes e^{-2h(g^{-1}(\zeta))}f(g^{-1}(\zeta))g^{-1}(\zeta)^{n} \notag \\
	&-F\otimes e^{-2h(g^{-1}(\zeta))}f(g^{-1}(\zeta))^{2}g^{-1}(\zeta)^{n}
\end{align}
Similarly, we can compute derivatives of $\scr{G}$ in variables $h_{n}$ and $f_{n}$ as
\begin{align}
	\scr{G}^{-1}\frac{\del \scr{G}}{\del h_{n}}&=H\otimes g^{-1}(\zeta)^{n}-2F\otimes f(g^{-1}(\zeta))g^{-1}(\zeta)^{n}, \\
	\scr{G}^{-1}\frac{\del \scr{G}}{\del f_{n}}&=F\otimes g^{-1}(\zeta)^{n}.
\end{align}
We shall invert these relations; namely, we express an object like $\scr{G}X\otimes \theta(\zeta)$
for a certain $\theta(\zeta)\in \mbb{C}[[\zeta^{-1}]]\zeta^{-1}$ by a linear combination of derivatives of $\scr{G}$.
\begin{lem}
\label{lem:differential_equation}
Let $\theta(\zeta)\in \mbb{C}[[\zeta^{-1}]]\zeta^{-1}$.
Then
\begin{align}
	\scr{G}F\otimes \theta(\zeta)&=\sum_{n\le -1}\left(\mrm{Res}_{w}w^{-n-1}\theta(g(w))\right)\frac{\del \scr{G}}{\del f_{n}}, \\
	\scr{G}H\otimes \theta(\zeta)=&\sum_{n\le -1}\left(\mrm{Res}_{w}w^{-n-1}\theta(g(w))\right)\frac{\del \scr{G}}{\del h_{n}} 
		+2\sum_{n\le -1}\left(\mrm{Res}_{w}w^{-n-1}f(w)\theta(g(w))\right)\frac{\del \scr{G}}{\del f_{n}}, \\
	\scr{G}E\otimes \theta(\zeta)=
	&\sum_{n\le -1}\left(\mrm{Res}_{w}w^{-n-1}e^{2h(w)}\theta(g(w))\right)\frac{\del \scr{G}}{\del e_{n}}
		-\sum_{n\le -1}\left(\mrm{Res}_{w}w^{-n-1}f(w)\theta(g(w))\right)\frac{\del \scr{G}}{\del h_{n}} \notag \\
	&-\sum_{n\le -1}\left(\mrm{Res}_{w}w^{-n-1}f(w)^{2}\theta(g(w))\right)\frac{\del \scr{G}}{\del f_{n}}.
\end{align}
\end{lem}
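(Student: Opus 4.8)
The plan is to \emph{invert} the three first-order relations displayed immediately above the lemma. Reading those relations in the order $F$, $H$, $E$, the system is triangular: $\scr{G}^{-1}\del_{f_{n}}\scr{G}$ involves only $F\otimes(\cdots)$; $\scr{G}^{-1}\del_{h_{n}}\scr{G}$ involves $H\otimes(\cdots)$ and $F\otimes(\cdots)$; and $\scr{G}^{-1}\del_{e_{n}}\scr{G}$ involves all three. So I would solve them successively for $\scr{G}(F\otimes\cdots)$, then $\scr{G}(H\otimes\cdots)$, then $\scr{G}(E\otimes\cdots)$. The one structural fact needed first is that, because $g\in\mrm{Aut}_{+}\mcal{O}$ so that $g(w)\in w+\mbb{C}[[w^{-1}]]$ and hence $g^{-1}(\zeta)\in\zeta+\mbb{C}[[\zeta^{-1}]]$, the family $\{g^{-1}(\zeta)^{n}\}_{n\le -1}$ is a topological basis of $\zeta^{-1}\mbb{C}[[\zeta^{-1}]]$: writing $\theta(\zeta)=\sum_{n\le -1}c_{n}\,g^{-1}(\zeta)^{n}$ and substituting $\zeta=g(w)$ gives $\theta(g(w))=\sum_{n\le-1}c_{n}w^{n}$, so $c_{n}=\mrm{Res}_{w}\,w^{-n-1}\theta(g(w))$, and these residues reconstruct a convergent expansion of any $\theta\in\zeta^{-1}\mbb{C}[[\zeta^{-1}]]$. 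I also record that $\theta\mapsto\scr{G}(X\otimes\theta(\zeta))$ is linear and continuous (each $X\otimes\zeta^{-m}$, $m\ge1$, strictly raises the conformal grading on $\overline{L_{\mfrak{sl}_{2}}(\Lambda,k)}$, which is bounded below), so the expansion of $\theta$ may be substituted and $\scr{G}$ pulled through the sum termwise.

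Given this, the $F$-identity is immediate from $\scr{G}^{-1}\del_{f_{n}}\scr{G}=F\otimes g^{-1}(\zeta)^{n}$: expand $\theta$ and apply $\scr{G}$. For the $H$-identity, solve $\scr{G}^{-1}\del_{h_{n}}\scr{G}=H\otimes g^{-1}(\zeta)^{n}-2F\otimes f(g^{-1}(\zeta))g^{-1}(\zeta)^{n}$ for $H\otimes g^{-1}(\zeta)^{n}$, multiply by $c_{n}$ and sum; since $\sum_{n}c_{n}\,f(g^{-1}(\zeta))g^{-1}(\zeta)^{n}=f(g^{-1}(\zeta))\theta(\zeta)$, the right-hand side becomes $\sum_{n}c_{n}\del_{h_{n}}\scr{G}+2\,\scr{G}\bigl(F\otimes f(g^{-1}(\zeta))\theta(\zeta)\bigr)$, and applying the already-proven $F$-identity to $f(g^{-1}(\zeta))\theta(\zeta)$ — whose value at $\zeta=g(w)$ is $f(w)\theta(g(w))$ — gives exactly the stated combination of $\del_{h_{n}}\scr{G}$ and $\del_{f_{n}}\scr{G}$.

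For the $E$-identity I would first write $\theta(\zeta)=e^{-2h(g^{-1}(\zeta))}\psi(\zeta)$ with $\psi(\zeta)=e^{2h(g^{-1}(\zeta))}\theta(\zeta)$, which is legitimate because $e^{\pm 2h(g^{-1}(\zeta))}\in 1+\zeta^{-1}\mbb{C}[[\zeta^{-1}]]$; expanding $\psi=\sum_{n}c_{n}g^{-1}(\zeta)^{n}$ now gives $c_{n}=\mrm{Res}_{w}w^{-n-1}e^{2h(w)}\theta(g(w))$. The first displayed relation, solved for $\scr{G}\bigl(E\otimes e^{-2h(g^{-1}(\zeta))}g^{-1}(\zeta)^{n}\bigr)$, reads $\del_{e_{n}}\scr{G}-\scr{G}\bigl(H\otimes e^{-2h(g^{-1}(\zeta))}f(g^{-1}(\zeta))g^{-1}(\zeta)^{n}\bigr)+\scr{G}\bigl(F\otimes e^{-2h(g^{-1}(\zeta))}f(g^{-1}(\zeta))^{2}g^{-1}(\zeta)^{n}\bigr)$; summing against $c_{n}$ turns the last two terms into $\scr{G}\bigl(H\otimes f(g^{-1}(\zeta))\theta(\zeta)\bigr)$ and $\scr{G}\bigl(F\otimes f(g^{-1}(\zeta))^{2}\theta(\zeta)\bigr)$. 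Feeding these into the already-proven $H$- and $F$-identities (their arguments take values $f(w)\theta(g(w))$ and $f(w)^{2}\theta(g(w))$ at $\zeta=g(w)$) produces the $\del_{e_{n}}$, $\del_{h_{n}}$ and $\del_{f_{n}}$ terms of the claim, the two contributions to the coefficient of $\del_{f_{n}}\scr{G}$ adding to $-2+1=-1$, as required.

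The content here is entirely formal manipulation of residues and composition of formal power series; the only thing requiring genuine care is the bookkeeping of the (doubly) infinite sums over $n\le -1$ — that they converge in $\overline{L_{\mfrak{sl}_{2}}(\Lambda,k)}$, that $\scr{G}$ may be moved across them, and that resummations such as $\sum_{n}c_{n}f(g^{-1}(\zeta))g^{-1}(\zeta)^{n}=f(g^{-1}(\zeta))\theta(\zeta)$ are valid. This is subsumed under the continuity observation made at the outset, since every operator $X\otimes\zeta^{-m}$ and every operator $\del_{x_{n}}\scr{G}$ strictly raises a grading that is bounded from below, so each homogeneous component of the vectors in play receives only finitely many contributions.
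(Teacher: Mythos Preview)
Your proposal is correct and follows essentially the same approach as the paper: the paper's proof simply observes that one must find $a(z)=\sum_{n\le -1}a_{n}z^{n}$ with $a(g^{-1}(\zeta))=\theta(\zeta)$, notes that $a_{n}=\mrm{Res}_{w}w^{-n-1}\theta(g(w))$, and says this ``enables us to obtain the desired result.'' Your write-up supplies exactly the details the paper omits --- the triangular inversion in the order $F$, $H$, $E$, the rescaling by $e^{2h(g^{-1}(\zeta))}$ in the $E$-step to match the form of $\scr{G}^{-1}\del_{e_{n}}\scr{G}$, and the convergence bookkeeping --- all of which are correct.
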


\begin{proof}
We have to search for an infinite series $a(z)=\sum_{n\le -1}a_{n}z^{n}$ such that $a(g^{-1}(\zeta))=\theta(\zeta)$ for a given infinite series $\theta(z)\in\mbb{C}[[\zeta^{-1}]]\zeta^{-1}$.
Such an infinite series is indeed obtained by setting $a_{n}=\mrm{Res}_{w}w^{-n-1}\theta(g(w))$,
which enables us to obtain the desired result.
\end{proof}

We next prepare formulas to compute $\scr{G}^{-1}X(-\ell)\scr{G}$ for $X\in\mfrak{sl}_{2}$ and $\ell\in\mbb{Z}$,
which is straightforward from the formulas in Subsect.\ref{subsect:formulae_sl2}.
\begin{lem}
\label{lem:adjoint_GinvXG}
We set $\xi:=g^{-1}(\zeta)$:
\begin{align}
	\scr{G}^{-1}E\otimes \zeta^{-\ell}\scr{G}
	=& E\otimes e^{-2h(\xi)}\xi^{-\ell}+H\otimes e^{-2h(\xi)}f(\xi)\xi^{-\ell}  \notag \\
		&-F\otimes e^{-2h(\xi)}f(\xi)^{2}\xi^{-\ell}-k\mrm{Res}_{w}\del f(w)e^{-2h(w)}w^{-\ell}, \\
	\scr{G}^{-1}H\otimes \zeta^{-\ell}\scr{G}
	=& 2E\otimes e^{-2h(\xi)}e(\xi)\xi^{-\ell} \notag \\
		&+H\otimes (1+2e^{-2h(\xi)}e(\xi)f(\xi))\xi^{-\ell} \notag \\
		&-2F\otimes( f(\xi)+ e^{-2h(\xi)}e(\xi)f(\xi)^{2})\xi^{-\ell} \notag \\
		&-2k\mrm{Res}_{w}(\del h(w)+\del f(w)e^{-2h(w)}e(w))w^{-\ell}, \\
	\scr{G}^{-1}F\otimes \zeta^{-\ell}\scr{G}
	=& -E\otimes e^{-2h(\xi)}e(\xi)^{2}\xi^{-\ell}  \notag \\
		&-H\otimes (e(\xi)+e^{-2h(\xi)}e(\xi)^{2}f(\xi))\xi^{-\ell} \notag \\
		&+F\otimes (e^{2h(\xi)}+2e(\xi)f(\xi)+e^{-2h(\xi)}e(\xi)^{2}f(\xi)^{2})\xi^{-\ell} \notag \\
		&+k\mrm{Res}_{w}(2\del h(w)e(w)-\del e(w) +\del f(w)e^{-2h(w)}e(w)^{2})w^{-\ell}.
\end{align}
\end{lem}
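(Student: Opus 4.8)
The plan is to exploit the factorization $\scr{G}=e^{{\bf e}}e^{{\bf h}}e^{{\bf f}}Q(g)$, which turns the adjoint action into four successive conjugations,
\[
\scr{G}^{-1}(X\otimes\zeta^{-\ell})\scr{G}=Q(g)^{-1}e^{-{\bf f}}e^{-{\bf h}}e^{-{\bf e}}(X\otimes\zeta^{-\ell})e^{{\bf e}}e^{{\bf h}}e^{{\bf f}}Q(g),
\]
each of which is already tabulated. I would carry out the conjugation from the inside outward: first by $e^{{\bf e}}$, then by $e^{{\bf h}}$, then by $e^{{\bf f}}$, applying at each stage the relevant identity from the enumerated list of Subsection \ref{subsect:formulae_sl2}; finally I would conjugate by $Q(g)$, which by the rule $Q(g)^{-1}(X\otimes\phi(\zeta))Q(g)=X\otimes\phi(g^{-1}(\zeta))$ recalled in Appendix \ref{sect:app_SDE} substitutes $\zeta\mapsto\xi=g^{-1}(\zeta)$ into every coefficient function, leaving the scalar residue terms untouched.

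To illustrate with $X=E$: the innermost conjugation leaves $E\otimes\zeta^{-\ell}$ fixed (item (5)); conjugating by $e^{{\bf h}}$ multiplies the coefficient by $e^{-2h(\zeta)}$ (item (4)); and conjugating by $e^{{\bf f}}$ (item (6)) produces from $E\otimes e^{-2h(\zeta)}\zeta^{-\ell}$ the three terms $E\otimes e^{-2h(\zeta)}\zeta^{-\ell}+H\otimes f(\zeta)e^{-2h(\zeta)}\zeta^{-\ell}-F\otimes f(\zeta)^{2}e^{-2h(\zeta)}\zeta^{-\ell}$ together with the scalar $-k\,\mrm{Res}_{w}\del f(w)e^{-2h(w)}w^{-\ell}$. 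Substituting $\zeta\mapsto\xi$ in the coefficient functions then gives exactly the first displayed formula. The cases $X=H$ and $X=F$ follow by the same four-step procedure, invoking items (2),(1),(4),(3),(6) and items (8),(7),(1),(4),(9),(3),(6), respectively; at each stage the conjugation must be applied to every term already present, and the $E$-, $H$-, and $F$-components generated along the way must be regrouped into the bracketed coefficients displayed in the statement, such as $1+2e^{-2h(\xi)}e(\xi)f(\xi)$ and $e^{2h(\xi)}+2e(\xi)f(\xi)+e^{-2h(\xi)}e(\xi)^{2}f(\xi)^{2}$.

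Since the formulas of Subsection \ref{subsect:formulae_sl2} are already available, the computation is purely mechanical, and the only genuine difficulty is bookkeeping. Two points demand care. First, the scalar (central) residue terms produced by items (1), (6), and (8) are honest complex numbers, hence commute through all later conjugations and are unaffected by the $Q(g)$-substitution; one must not compose them with $g^{-1}$. Second, the composition with $g^{-1}$ happens only at the very last step, so throughout the intermediate computation the coefficient functions must be kept as functions of $\zeta$. I expect the main obstacle to be ensuring that no cross-term is dropped when a single summand splits into several under items (6) or (8), and that the resulting pieces recombine correctly into the compact closed forms quoted in the lemma.
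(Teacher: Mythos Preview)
Your proposal is correct and follows exactly the approach the paper indicates: the paper simply states that the lemma is ``straightforward from the formulas in Subsect.~\ref{subsect:formulae_sl2}'', and your plan of successive conjugation by $e^{{\bf e}}$, $e^{{\bf h}}$, $e^{{\bf f}}$, and then $Q(g)$ using those tabulated identities is precisely that computation spelled out. Your cautionary remarks about the scalar residue terms commuting through the final $Q(g)$-conjugation and about keeping coefficient functions in $\zeta$ until the last step are the right bookkeeping points.
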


Next we express the objects like $\scr{G}X\otimes \theta(\zeta)\mcal{Y}(v, x)\ket{0}$ for $X\in\mfrak{sl}_{2}$, $\theta(\zeta)\in \mbb{C}((\zeta^{-1}))$
and an intertwining operator $\mcal{Y}(-,x)$ in a convenient form with the help of Lemma \ref{lem:differential_equation}.
\begin{lem}
\label{lem:divide_X}
Let $\mcal{Y}(-,x)$ be an intertwining operator, $v\in L(\Lambda)$ be a primary vector in the top space of $L_{\mfrak{sl}_{2}}(\Lambda,k)$,
and $\theta(\zeta)\in\mbb{C}((\zeta^{-1}))$.
Then
\begin{align}
	\scr{G}E\otimes \theta(\zeta)\mcal{Y}(v,x)\ket{0}=
	&\Biggl(\sum_{n\le -1}\mrm{Res}_{z}\mrm{Res}_{w}\frac{w^{-n-1}e^{2h(w)}\theta(z)}{g(w)-z}\frac{\del}{\del e_{n}}  \notag \\
		&\hspace{10pt}-\sum_{n\le -1}\mrm{Res}_{z}\mrm{Res}_{w}\frac{w^{-n-1}f(w)\theta(z)}{g(w)-z}\frac{\del}{\del h_{n}} \notag \\
		&\hspace{10pt} -\sum_{n\le -1}\mrm{Res}_{z}\mrm{Res}_{w}\frac{w^{-n-1}f(w)^{2}\theta(z)}{g(w)-z}\frac{\del}{\del f_{n}}\Biggr)\scr{G}\mcal{Y}(v,x)\ket{0} \notag \\
	&+\mrm{Res}_{z}\frac{\theta(z)}{z-x}\scr{G}\mcal{Y}(Ev,x)\ket{0}, \\
	\scr{G}H\otimes \theta(\zeta)\mcal{Y}(v,x)\ket{0}=
	&\Biggl(\sum_{n\le -1}\mrm{Res}_{z}\mrm{Res}_{w}\frac{w^{-n-1}\theta(z)}{g(w)-z}\frac{\del}{\del h_{n}}  \notag \\
	&\hspace{10pt}+2\sum_{n\le -1}\mrm{Res}_{z}\mrm{Res}_{w}\frac{w^{-n-1}f(w)\theta(z)}{g(w)-z}\frac{\del}{\del f_{n}}\Biggr)\scr{G}\mcal{Y}(v,x)\ket{0} \notag \\
	&+\mrm{Res}_{z}\frac{\theta(z)}{z-x}\scr{G}\mcal{Y}(Hv,x)\ket{0},  \\
	\label{eq:divide_F}
	\scr{G}F\otimes \theta(\zeta)\mcal{Y}(v,x)\ket{0}=
	&\sum_{n\le -1}\mrm{Res}_{z}\mrm{Res}_{w}\frac{w^{-n-1}\theta(z)}{g(w)-z}\frac{\del}{\del f_{n}}\scr{G}\mcal{Y}(v,x)\ket{0}  \notag \\
	&+\mrm{Res}_{z}\frac{\theta(z)}{z-x}\scr{G}\mcal{Y}(Fv,x)\ket{0}.
\end{align}
\end{lem}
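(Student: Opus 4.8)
The plan is to move the affine element $X\otimes\theta(\zeta)$ \emph{outward} in the two available directions. To the right it passes through the intertwining operator $\mcal{Y}(v,x)$, where it produces the contact term built from $\mcal{Y}(Xv,x)$; to the left it passes through $\scr{G}$, where Lemma~\ref{lem:differential_equation} turns it into the displayed combination of coordinate derivatives. The bridge between the two is the decomposition $\theta=\theta_{+}+\theta_{-}$ into its polynomial part $\theta_{+}\in\mbb{C}[\zeta]$ (finitely many terms, since $\theta\in\mbb{C}((\zeta^{-1}))$) and its polar part $\theta_{-}\in\zeta^{-1}\mbb{C}[[\zeta^{-1}]]$: the former is absorbed by the commutator with $\mcal{Y}(v,x)$ and the latter by the differential operators acting on $\scr{G}$. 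The three displayed identities ($X=E,H,F$) are then handled simultaneously, since only the contact term distinguishes them and it changes merely by $v\mapsto Xv$.

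The single input from representation theory is the operator product of a current with $\mcal{Y}(v,x)$. Writing $X\otimes\theta(\zeta)=\mrm{Res}_{z}\theta(z)X(z)$ with $X(z)=\sum_{n}X(n)z^{-n-1}$, the intertwining-operator Jacobi identity (the Appendix~\ref{sect:app_voa} axiom with $A=X(-1)\ket{0}$) gives $[X(n),\mcal{Y}(v,x)]=\sum_{k\ge 0}\binom{n}{k}x^{n-k}\mcal{Y}(X(k)v,x)$. As $v$ lies in the top space $L(\Lambda)$ it is primary, so $X(k)v=0$ for $k\ge 1$ and the sum collapses to $[X(n),\mcal{Y}(v,x)]=x^{n}\mcal{Y}(Xv,x)$, i.e. the OPE $X(z)\mcal{Y}(v,x)$ has only a simple pole $\mcal{Y}(Xv,x)/(z-x)$ and no central contribution. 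Applying $X(z)$ to $\mcal{Y}(v,x)\ket{0}$, and using that $X(z)\ket{0}$ is regular in $z$, normal ordering yields
\begin{equation*}
	X(z)\mcal{Y}(v,x)\ket{0}=\left(\iota_{z,x}\tfrac{1}{z-x}\right)\mcal{Y}(Xv,x)\ket{0}+\sum_{n\le -1}X(n)z^{-n-1}\mcal{Y}(v,x)\ket{0},
\end{equation*}
where $\iota_{z,x}$ denotes expansion in $|z|>|x|$. Pairing with $\mrm{Res}_{z}\theta(z)$ splits the right-hand side exactly along $\theta=\theta_{+}+\theta_{-}$: the first term gives $\mrm{Res}_{z}\frac{\theta(z)}{z-x}\,\mcal{Y}(Xv,x)\ket{0}$, the contact term of the statement, while the second gives $(X\otimes\theta_{-}(\zeta))\mcal{Y}(v,x)\ket{0}$.

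Multiplying on the left by $\scr{G}$, the contact term is already final, and it remains to rewrite $\scr{G}(X\otimes\theta_{-}(\zeta))\mcal{Y}(v,x)\ket{0}$. Since $\theta_{-}\in\zeta^{-1}\mbb{C}[[\zeta^{-1}]]$ lies in the hypothesis of Lemma~\ref{lem:differential_equation}, that lemma expresses $\scr{G}(X\otimes\theta_{-}(\zeta))$ as the prescribed sum of operators $\frac{\del\scr{G}}{\del e_{n}},\frac{\del\scr{G}}{\del h_{n}},\frac{\del\scr{G}}{\del f_{n}}$; because $\mcal{Y}(v,x)\ket{0}=e^{xL_{-1}}v$ does not depend on the coordinates $e_{n},h_{n},f_{n}$, these derivatives pass through and act on $\scr{G}\mcal{Y}(v,x)\ket{0}$. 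The coefficients produced by Lemma~\ref{lem:differential_equation} are residues of $\theta_{-}(g(w))$, which I would convert to the double residues $\mrm{Res}_{z}\mrm{Res}_{w}\frac{w^{-n-1}(\cdots)\theta(z)}{g(w)-z}$ of the statement via the identity $\theta_{-}(g(w))=\mrm{Res}_{z}\frac{\theta(z)}{g(w)-z}$, with $\frac{1}{g(w)-z}$ expanded in $|g(w)|>|z|$. Carrying this out for $X=E,H,F$ reproduces all three formulas verbatim, since the mixing factors $e^{2h(w)}$, $f(w)$, $f(w)^{2}$ are precisely those already recorded in Lemma~\ref{lem:differential_equation}.

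The main obstacle I anticipate is bookkeeping of expansion regions rather than any new structural ingredient. One must verify that the simple pole of the $X$–$\mcal{Y}$ OPE, expanded in $|z|>|x|$, captures exactly the polynomial part $\theta_{+}$ while its complement reassembles into the genuine affine element $X\otimes\theta_{-}(\zeta)$ with no residual delta-function term (the naive split $[X(z),\mcal{Y}]=\delta(z-x)\mcal{Y}(Xv)$ double-counts the negative modes, which is why the normal-ordered form above is the correct starting point). One must then check that the two residue conventions, $\mrm{Res}_{z}\frac{\theta(z)}{z-x}$ localized at $z=x$ and $\mrm{Res}_{z}\frac{\theta(z)}{g(w)-z}$ taken in $|g(w)|>|z|$, are mutually consistent. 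Once these conventions are fixed the computation is forced, the contact term being uniform in $X$ and the derivative term a direct quotation of Lemma~\ref{lem:differential_equation}.
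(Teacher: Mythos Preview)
Your proposal is correct and follows essentially the same route as the paper: split $\theta=\theta_{+}+\theta_{-}$, use the primary-field commutator $[X(n),\mcal{Y}(v,x)]=x^{n}\mcal{Y}(Xv,x)$ together with $X\otimes\theta_{+}\ket{0}=0$ to produce the contact term $\mrm{Res}_{z}\frac{\theta(z)}{z-x}\,\mcal{Y}(Xv,x)\ket{0}$, and feed $\theta_{-}$ through Lemma~\ref{lem:differential_equation} via the identity $\theta_{-}(\zeta)=\mrm{Res}_{z}\frac{\theta(z)}{\zeta-z}$ to obtain the double-residue derivative terms. The only cosmetic difference is that you package the first step as a normal-ordered expansion of $X(z)\mcal{Y}(v,x)\ket{0}$, whereas the paper works directly with $[X\otimes\theta_{+},\mcal{Y}(v,x)]$; the content is identical.
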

\begin{proof}
As an example, we show Eq.(\ref{eq:divide_F}).
The other two equalities are shown in a similar way.
We divide a Leurant series $\theta(\zeta)=\sum_{n\in\mbb{Z}}\theta_{n}\zeta^{n}$ into the negative power part and the non-negative power part as
\begin{equation}
	\theta(\zeta)=\theta(\zeta)_{-}+\theta(\zeta)_{+},
\end{equation}
where $\theta(\zeta)_{-}=\sum_{n<0}\theta_{n}\zeta^{n}$ and $\theta(\zeta)_{+}=\sum_{n\ge 0}\theta_{n}\zeta^{n}$.
Notice that $\theta(\zeta)_{-}$ is expressed as the following:
\begin{equation}
	\theta(\zeta)_{-}=\mrm{Res}_{z}\frac{\theta(z)}{\zeta-z}.
\end{equation}
Together with Lemma \ref{lem:differential_equation}, this implies that
\begin{equation}
	\scr{G}F\otimes \theta(\zeta)_{-}=\sum_{n\le -1}\mrm{Res}_{z}\mrm{Res}_{w}\frac{w^{-n-1}\theta(z)}{g(w)-z}\frac{\del \scr{G}}{\del f_{n}}.
\end{equation}
Since $\mcal{Y}(v,x)$ is a primary field,
\begin{equation}
	[F(n),\mcal{Y}(v,x)]=x^{n}\mcal{Y}(Fv,x),
\end{equation}
which implies that
\begin{equation}
	[F\otimes \theta(\zeta)_{+},\mcal{Y}(v,x)]=\sum_{n=0}^{\infty}\theta_{n}x^{n}\mcal{Y}(Fv,x)=\mrm{Res}_{z}\frac{\theta(z)}{z-x}\mcal{Y}(Fv,x).
\end{equation}
Noting that $F\otimes \theta(\zeta)_{+}$ annihilates the vacuum vector $\ket{0}$, we obtain the desired result.
\end{proof}

For an intertwining operator $\mcal{Y}(-,z)$ of type $\binom{L_{\mfrak{sl}_{2}}(\Lambda.k)}{L_{\mfrak{sl}_{2}}(\Lambda,k),\ \ L_{\mfrak{sl}_{2},k}}$,
we regard $\braket{u|\scr{G}\mcal{Y}(-,x)|0}$ as an element of $L(\Lambda)^{\ast}[g_{n+1},e_{n},h_{n},f_{n}|n<0][[x]]$.
The dual space $L(\Lambda)^{\ast}$ is equipped with a representation $\pi$ of $\mfrak{sl}_{2}$ defined by
$(\pi(X)\phi)(v)=-\phi(Xv)$ for $X\in\mfrak{sl}_{2}$, $\phi\in L(\Lambda)^{\ast}$ and $v\in L(\Lambda)$.
Combining Lemma \ref{lem:adjoint_GinvXG} and \ref{lem:divide_X}, we derive operators $\scr{X}_{\ell}$ that satisfy
$\braket{X(\ell)u|\scr{G}\mcal{Y}(-,x)|0}=\scr{X}_{\ell}\braket{u|\scr{G}\mcal{Y}(-,x)|0}$ for $X\in\mfrak{sl}_{2}$ and $\ell\in\mbb{Z}$.

We begin with the computation of $\braket{E(\ell)u|\scr{G}\mcal{Y}(v,x)|0}$:
\begin{align}
	\braket{E(\ell)u|\scr{G}\mcal{Y}(v,x)|0}=-\braket{u|E(-\ell)\scr{G}\mcal{Y}(v,x)|0}=\scr{E}_{\ell}\braket{u|\scr{G}\mcal{Y}(v,x)|0},
\end{align}
where
\begin{align}
	\scr{E}_{\ell}=
	&-\sum_{n\le -1}\mrm{Res}_{z}\mrm{Res}_{w}\frac{w^{-n-1}e^{2h(w)}e^{-2h(z)}z^{-\ell}\pr{g}(z)}{g(w)-g(z)}\frac{\del}{\del e_{n}} \notag \\
	&-\sum_{n\le -1}\mrm{Res}_{z}\mrm{Res}_{w}\frac{w^{-n-1}e^{-2h(z)}(f(z)-f(w))z^{-\ell}\pr{g}(z)}{g(w)-g(z)}\frac{\del}{\del h_{n}} \notag \\
	&+\sum_{n\le -1}\mrm{Res}_{z}\mrm{Res}_{w}\frac{w^{-n-1}e^{-2h(z)}(f(z)-f(w))^{2}z^{-\ell}\pr{g}(z)}{g(w)-g(z)}\frac{\del}{\del f_{n}} \notag \\
	&+\mrm{Res}_{z}\frac{e^{-2h(z)}z^{-\ell}\pr{g}(z)}{g(z)-x}\pi(E) \notag \\
	&+\mrm{Res}_{z}\frac{e^{-2h(z)}f(z)z^{-\ell}\pr{g}(z)}{g(z)-x}\pi(H) \notag \\
	&-\mrm{Res}_{z}\frac{e^{-2h(z)}f(z)^{2}z^{-\ell}\pr{g}(z)}{g(z)-x}\pi(F) \notag \\
	&+k\mrm{Res}_{z}\del f(z)e^{-2h(z)}z^{-\ell}.
\end{align}

We also obtain
\begin{align}
	\braket{H(\ell)u|\scr{G}\mcal{Y}(v,x)|0}=-\braket{u|H(-\ell)\scr{G}\mcal{Y}(v,x)|0}=\scr{H}_{\ell}\braket{u|\scr{G}\mcal{Y}(v,x)|0},
\end{align}
where
\begin{align}
	\scr{H}_{\ell}=
	&-2\sum_{n\le -1}\mrm{Res}_{z}\mrm{Res}_{w}\frac{w^{-n-1}e^{2h(w)}e^{-2h(z)}e(z)z^{-\ell}\pr{g}(z)}{g(w)-g(z)}\frac{\del}{\del e_{n}} \notag \\
	&-\sum_{n\le -1}\mrm{Res}_{z}\mrm{Res}_{w}\frac{w^{-n-1}(1+2e^{-2h(z)}(f(z)-f(w)))z^{-\ell}\pr{g}(z)}{g(w)-g(z)}\frac{\del}{\del h_{n}} \notag \\
	&-2\sum_{n\le -1}\mrm{Res}_{z}\mrm{Res}_{w}\frac{w^{-n-1}(f(w)-f(z)-e^{-2h(z)}e(z)(f(w)-f(z))^{2})z^{-\ell}\pr{g}(z)}{g(w)-g(z)}\frac{\del}{\del f_{n}} \notag \\
	&+2\mrm{Res}_{z}\frac{e^{-2h(z)}e(z)z^{-\ell}\pr{g}(z)}{g(z)-x}\pi(E) \notag \\
	&+\mrm{Res}_{z}\frac{(1+2e^{-2h(z)}e(z)f(z))z^{-\ell}\pr{g}(z)}{g(z)-x}\pi(H) \notag \\
	&-2\mrm{Res}_{z}\frac{(1+e^{-2h(z)}e(z)f(z))f(z)z^{-\ell}\pr{g}(z)}{g(z)-x}\pi(F) \notag \\
	&+2k\mrm{Res}_{z}(\del h(z)-\del f(z)e^{-2h(z)}e(z))z^{-\ell},
\end{align}
and
\begin{equation}
	\braket{F(\ell)u|\scr{G}\mcal{Y}(v,x)|0}=-\braket{u|F(-\ell)\scr{G}\mcal{Y}(v,x)|0}=\scr{F}_{\ell}\braket{u|\scr{G}\mcal{Y}(v,x)|0},
\end{equation}
where
\begin{align}
	\scr{F}_{\ell}=
	&\sum_{n\le -1}\mrm{Res}_{z}\mrm{Res}_{w}\frac{w^{-n-1}e^{2h(w)}e^{-2h(z)}e(z)^{2}z^{-\ell}\pr{g}(z)}{g(w)-g(z)}\frac{\del}{\del e_{n}} \notag \\
	&-\sum_{n\le -1}\mrm{Res}_{z}\mrm{Res}_{w}\frac{w^{-n-1}(1+e^{-2h(z)}e(z)(f(w)-f(z)))e(z)z^{-\ell}\pr{g}(z)}{g(w)-g(z)}\frac{\del}{\del h_{n}} \notag \\
	&-\sum_{n\le -1}\mrm{Res}_{z}\mrm{Res}_{w}w^{-n-1}\Biggl[\frac{e^{2h(z)}+2e(z)(f(z)-f(w))}{g(w)-g(z)} \notag \\
	&\hspace{120pt}+\frac{e^{-2h(z)}e(z)^{2}(f(z)-f(w))^{2}}{g(w)-g(z)}\Biggr]z^{-\ell}\pr{g}(z)\frac{\del}{\del f_{n}} \notag \\
	&-\mrm{Res}_{z}\frac{e^{-2h(z)}e(z)^{2}z^{-\ell}\pr{g}(z)}{g(z)-x}\pi(E) \notag \\
	&-\mrm{Res}_{z}\frac{(1+e^{-2h(z)}e(z)f(z))e(z)z^{-\ell}\pr{g}(z)}{g(z)-x}\pi(H) \notag \\
	&+\mrm{Res}_{z}\frac{(e^{2h(z)}+2e(z)f(z)+e^{-2h(z)}e(z)^{2}f(z)^{2})z^{-\ell}\pr{g}(z)}{g(z)-x}\pi (F) \notag \\
	&-\mrm{Res}_{z}(2\del h(z)e(z)-\del e(z)+\del f(z)e^{-2h(z)}e(z)^{2})z^{-\ell}.
\end{align}

We look for an operator $\scr{L}_{\ell}$ such that $\braket{L_{\ell}u|\scr{G}\mcal{Y}(v,x)|0}=\scr{L}_{\ell}\braket{u|\scr{G}\mcal{Y}(v,x)|0}$.
We first prepare a lemma.
\begin{lem}We set $\xi=g^{-1}(\zeta)$:
\begin{align}
	\scr{G}^{-1}L_{-\ell}\scr{G}
	=&\sum_{m\in\mbb{Z}}\left(\mrm{Res}_{z}z^{-\ell+1}g(z)^{-n-2}\pr{g}(z)^{2}\right)L_{m} \notag \\
	&-E\otimes e^{-2h(\xi)}\del e(\xi)\xi^{-\ell+1} \notag \\
	&-H\otimes (\del h(\xi)+e^{-2h(\xi)}f(\xi)\del e(\xi))\xi^{-\ell+1} \notag \\
	&-F\otimes (\del f(\xi)-2f(\xi)\del h(\xi)-e^{-2h(\xi)}f(\xi)^{2}\del e(\xi))\xi^{-\ell+1} \notag \\
	&+\mrm{Res}_{z}z^{-\ell+1}\left(\frac{c}{12}(Sg)(z)+k(\del h(z)^{2}+e^{-2h(z)}\del f(z)\del e(z))\right).
\end{align}
\end{lem}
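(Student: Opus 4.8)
The plan is to compute $\scr{G}^{-1}L_{-\ell}\scr{G}$ by peeling off the four factors of $\scr{G}=e^{\bf e}e^{\bf h}e^{\bf f}Q(g)$ one at a time, working from the inside out. Since $L_{-\ell}=\mrm{Res}_{z}\,z^{-\ell+1}L(z)$ for the Virasoro field $L(z)$, it is enough to follow the adjoint action on $L(z)$ and apply $\mrm{Res}_{z}\,z^{-\ell+1}$ at the very end. So I would first compute $e^{-{\bf e}}L(z)e^{\bf e}$, then conjugate the result successively by $e^{-{\bf h}}(-)e^{\bf h}$, by $e^{-{\bf f}}(-)e^{\bf f}$, and finally by $Q(g)^{-1}(-)Q(g)$.

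Each of the three inner conjugations is governed by the transformation formula Eq.(\ref{eq:virasoro_internal}) for $L(z)$ under an element of $G(\mcal{O})$, together with the explicit current transformation rules collected in Subsect.\ref{subsect:formulae_sl2}. The decisive simplification is that $(E|E)=(F|F)=0$ while $(H|H)=2$ for the normalized invariant form on $\mfrak{sl}_{2}$: the quadratic scalar in Eq.(\ref{eq:virasoro_internal}) therefore contributes only through the ${\bf h}$-conjugation (producing $k\,\del h(z)^{2}$), whereas the ${\bf e}$- and ${\bf f}$-conjugations contribute through their $\del a(z)$-linear terms and through the fact that $E(z)$ does not commute past $e^{\bf h}$ and that $E(z),H(z)$ do not commute past $e^{\bf f}$. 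Using $e^{-{\bf e}}E(z)e^{\bf e}=E(z)$, $e^{-{\bf h}}E(z)e^{\bf h}=e^{-2h(z)}E(z)$, $e^{-{\bf h}}H(z)e^{\bf h}=H(z)-2k\del h(z)$, $e^{-{\bf f}}H(z)e^{\bf f}=H(z)-2f(z)F(z)$, $e^{-{\bf f}}F(z)e^{\bf f}=F(z)$ and $e^{-{\bf f}}E(z)e^{\bf f}=E(z)+f(z)H(z)-f(z)^{2}F(z)-k\del f(z)$ (all immediate from Subsect.\ref{subsect:formulae_sl2}) and collecting terms, one reaches
\begin{align*}
	e^{-{\bf f}}e^{-{\bf h}}e^{-{\bf e}}L(z)e^{\bf e}e^{\bf h}e^{\bf f}
	=&\ L(z)-e^{-2h(z)}\del e(z)E(z) \\
	&-(\del h(z)+e^{-2h(z)}f(z)\del e(z))H(z) \\
	&-(\del f(z)-2f(z)\del h(z)-e^{-2h(z)}f(z)^{2}\del e(z))F(z) \\
	&+k(\del h(z)^{2}+e^{-2h(z)}\del e(z)\del f(z)),
\end{align*}
where $E(z),H(z),F(z)$ denote the current fields $Y(X(-1)\ket{0},z)$ and all coefficients are scalar Laurent series in $z$.

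It remains to conjugate by $Q(g)$. By Prop.~\ref{prop:transformation_zero} applied to the weight-$1$ primary vectors $X(-1)\ket{0}$ one has $Q(g)^{-1}X(z)Q(g)=X(g(z))\pr{g}(z)$, and applied to the conformal vector $L_{-2}\ket{0}$ (using the $2\times2$ matrix of $R(g_{z})^{-1}$ computed in the proof of the proposition on the Virasoro field) one has $Q(g)^{-1}L(z)Q(g)=L(g(z))\pr{g}(z)^{2}+\frac{c}{12}(Sg)(z)$, while the scalar coefficients are unaffected. Applying $\mrm{Res}_{z}\,z^{-\ell+1}$ and, in each current term, substituting $\zeta=g(z)$ with $\xi=g^{-1}(\zeta)$ (so that $\pr{g}(z)\,dz=d\zeta$ and $z=\xi$) converts $\mrm{Res}_{z}\,z^{-\ell+1}(\cdots)\pr{g}(z)X(g(z))$ into $X\otimes\bigl[\xi^{-\ell+1}(\cdots)\bigr]$, which produces the three current lines of the Lemma; expanding $L(g(z))=\sum_{m}L_{m}g(z)^{-m-2}$ gives the mode sum in the first line, and the Schwarzian together with the leftover scalars gives the last line. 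I expect the main obstacle to be purely organizational: keeping the three nested, non-commuting adjoint actions straight so that every cross-term is accounted for, and verifying the change-of-variables identity that turns the $Q(g)$-conjugated current contributions into the stated functions of $\xi=g^{-1}(\zeta)$.
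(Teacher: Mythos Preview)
Your argument is correct and is precisely the computation the paper has in mind: the intermediate identity you write for $e^{-{\bf f}}e^{-{\bf h}}e^{-{\bf e}}L(z)e^{\bf e}e^{\bf h}e^{\bf f}$ is exactly the (unproved) Section~\ref{sect:martingale} lemma for $\scr{G}_t^{-1}L(z)\scr{G}_t$ stripped of its $Q(\rho_t)$-part, and your final step (conjugation by $Q(g)$ followed by $\mrm{Res}_z z^{-\ell+1}$ with the change of variables $\zeta=g(z)$) reproduces the mode sum, the current terms in $\xi=g^{-1}(\zeta)$, and the Schwarzian/scalar line just as stated. The paper itself gives no separate proof of this lemma beyond these ingredients, so there is nothing to add.
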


Notice that $\scr{G}$ satisfies the same differential equation as the one in the case of the Virasoro algebra,\cite{Kytola2007}
thus
\begin{equation}
	\scr{G}L_{m}=-\sum_{n\le 0}\left(\mrm{Res}_{z}z^{-n-1}g(z)^{m+1}\right)\frac{\del \scr{G}}{\del g_{n}}
\end{equation}
for $m\le -1$.
Terms of type $\scr{G}X\otimes x(\zeta)$ for $X\in\mfrak{sl}_{2}$ can be also expressed as derivatives of $\scr{G}$ as shown previously.
Thus the desired operator $\scr{L}_{\ell}$ is specified as
\begin{align}
	\scr{L}_{\ell}=
	&-\sum_{n\le 0}\mrm{Res}_{z}\mrm{Res}_{w}\frac{z^{-\ell+1}w^{-n-1}\pr{g}(z)^{2}}{g(w)-g(z)}\frac{\del}{\del g_{n}} \notag \\
	&-\sum_{n\le -1}\mrm{Res}_{z}\mrm{Res}_{w}\frac{z^{-\ell+1}w^{-n-1}e^{2h(w)}e^{-2h(z)}\del e(z)\pr{g}(z)}{g(w)-g(z)}\frac{\del}{\del e_{n}} \notag \\
	&-\sum_{n\le -1}\mrm{Res}_{z}\mrm{Res}_{w}\frac{z^{-\ell+1}w^{-n-1}(\del h(z)+e^{-2h(z)}\del e(z)(f(z)-f(w)))\pr{g}(z)}{g(w)-g(z)}\frac{\del}{\del h_{n}} \notag \\
	&-\sum_{n\le -1}\mrm{Res}_{z}\mrm{Res}_{w}z^{-\ell+1}w^{-n-1}\Biggl[\frac{\del f(z)-2\del h(z)(f(z)-f(w))}{g(w)-g(z)} \notag \\
	&\hspace{150pt}-\frac{e^{-2h(z)}\del e(z)(f(z)-f(w))^{2}}{g(w)-g(z)}\Biggr]\pr{g}(z)\frac{\del}{\del f_{n}} \notag \\
	&+\mrm{Res}_{z}z^{-\ell+1}\pr{g}(z)^{2}\left(\frac{h}{(g(z)-x)^{2}}+\frac{1}{g(z)-x}\frac{\del}{\del x}\right) \notag \\
	&+\mrm{Res}_{z}\frac{z^{-\ell+1}e^{-2h(z)}\del e(z)\pr{g}(z)}{g(z)-x}\pi(E) \notag \\
	&+\mrm{Res}_{z}\frac{z^{-\ell+1}(\del h(z)+e^{-2h(z)}f(z)\del e(z))\pr{g}(z)}{g(z)-x}\pi(H) \notag \\
	&+\mrm{Res}_{z}\frac{z^{-\ell+1}(\del f(z)-2f(z)\del h(z)-e^{-2h(z)}f(z)^{2}\del e(z))\pr{g}(z)}{g(z)-x}\pi(F) \notag \\
	&+\mrm{Res}_{z}z^{-\ell+1}\left(\frac{c}{12}(Sg)(z)+k(\del h(z)^{2}+e^{-2h(z)}\del f(z)\del e(z))\right).
\end{align}

\addcontentsline{toc}{chapter}{Bibliography}
\bibliographystyle{alpha}
\bibliography{sle_cft}
\end{document}